\theoremstyle{plain}
\newtheorem{theorem}{Theorem}[section]
\newtheorem{lemma}[theorem]{Lemma}
\newtheorem{proposition}[theorem]{Proposition}
\theoremstyle{definition}
\newtheorem{definition}[theorem]{Definition}
\newtheorem{example}[theorem]{Example}
\begin{document}

\title{Weighted model counting beyond two-variable logic}

\author{Antti Kuusisto, Carsten Lutz\\
University of Bremen}

\date{}

\maketitle

\begin{abstract}
  It was recently shown by van den Broeck at al.\ that the symmetric
  weighted first-order model counting problem (WFOMC) for sentences of
  two-variable logic $\mathrm{FO}^2$ is in polynomial time, while it is
  $\#{\mathrm P}_1$-complete for some $\mathrm{FO}^3$-sentences.  We extend the result for
  $\mathrm{FO}^2$ in two independent directions: to sentences of the
  form $\varphi \wedge \forall x \exists^{{=}1} y \, \psi(x,y)$ with $\varphi$
  and $\psi$ formulated in $\mathrm{FO}^2$ and to sentences of 
  the uniform one-dimensional fragment $\mathrm{U_1}$ of FO, a
  recently introduced extension of two-variable logic with the capacity to deal with relation
  symbols of all arities. We note that the former generalizes the
  extension of $\mathrm{FO}^2$ with a functional relation symbol.  We
  also identify a complete classification of first-order prefix classes
  according to whether WFOMC is in polynomial time or $\#\mathrm{P}_1$-complete.
\end{abstract}

\section{Introduction}

The first-order model counting problem asks, given a sentence
$\varphi$ and a number $n$, how many models of $\varphi$ of size $n$ 
exist. (The domain of the models is taken to be $\{0,\dots , n-1\}$.) 
The weighted variant of this problem adds weights to atomic
facts $R^{\mathfrak{M}}(u_1,\dots, u_k)$ of models $\mathfrak{M}$, the
total weight of $\mathfrak{M}$ being the product of the atomic
weights.  The question is then what the sum of the weights of all models of
$\varphi$ of size~$n$ is. Following \cite{kimmig}, we also admit
weights of negative facts `\emph{not} $R^{\mathfrak{M}}(u_1,\dots,
u_k)$'.

%
We investigate the symmetric weighted model counting problem of
systems extending the two-variable fragment $\mathrm{FO}^2$ of
first-order logic $\mathrm{FO}$. The word `symmetric' indicates that
each weight is determined by the relation symbol of the (positive or
negative) fact and thus the weights can be specified by weight
functions $w$ and $\bar{w}$ that assign
weights to each relation symbol occurring positively ($w$) or 
negatively ($\bar{w}$). We let $\mathrm{WFOMC}$ refer to the symmetric
weighted first-order model counting problem, with $\mathrm{WFOMC}(\varphi,n,w,\bar{w})$
denoting the sum of the weights of models $\mathfrak{M}\models\varphi$
of size $n$ according to the weight functions $w$ and $\bar{w}$.
We focus on studying the \emph{data complexity} of $\mathrm{WFOMC}$,
that is, the complexity of determining
$\mathrm{WFOMC}(\varphi,n,w,\bar{w})$ where $n$ is the only input,
given in unary, and with $\varphi,w,\bar{w}$ fixed. 
%

%
The recent article \cite{DBLP:conf/kr/BroeckMD14} established the by
now well-known result that the data complexity of $\mathrm{WFOMC}$ is
in polynomial time for formulae of $\mathrm{FO}^2$, while \cite{suc15}
demonstrated that the three-variable fragment $\mathrm{FO}^3$ contains
formulae for which the problem is $\#\mathrm{P}_1$-complete. We note
that the non-symmetric variant of the problem is
known to be $\#\mathrm{P}$-complete for some
$\mathrm{FO}^2$-sentences \cite{suc15}.
Weighted model counting problems have a range of well-known applications. For
example, as pointed out in \cite{suc15}, $\mathrm{WFOMC}$ 
problems occur in a natural way in \emph{knowledge bases with soft
  constraints} and are especially prominent in the area of Markov
logic \cite{DBLP:series/synthesis/2009Domingos}.\
For a recent comprehensive survey on these matters, see
\cite{suciusurvey}.
From a mathematical perspective,
$\mathrm{WFOMC}$ 
offers a neat and general
approach to \emph{elementary enumerative combinatorics}. To give a 
simple illustration of this, consider $\mathrm{WFOMC}(\varphi,n,w,\bar{w})$
for the two-variable logic sentence $\varphi =  \forall x \forall y( Rxy\rightarrow (Ryx\wedge x\not= y))$
with $w(R) = \bar{w}(R) = 1$. The sentence states that $R$ encodes a simple
undirected graph and thus $\mathrm{WFOMC}(\varphi,n,w,\bar{w}) = 2^{\binom{n}{2}}$,
the number of graphs of order $n$ (with the set $n$ of vertices).
Thus $\mathrm{WFOMC}$ 
provides a \emph{logic-based way of classifying combinatorial problems}. For instance,
the result for $\mathrm{FO}^2$-properties
from \cite{DBLP:conf/kr/BroeckMD14} shows that all these
properties can be associated with tractable enumeration functions.
For discussions of the links between weighted model counting, the
spectrum problem and 0-1 laws, see \cite{suc15}.

In the current paper, we extend the result of
\cite{DBLP:conf/kr/BroeckMD14} for $\mathrm{FO}^2$ in two independent
directions. We first consider $\mathrm{FO}^2$ with a
\emph{functionality axiom}, that is, sentences of type $\varphi\,
\wedge \forall x\exists^{=1}y\, \psi(x,y)$ with $\varphi$ and $\psi$ in
$\mathrm{FO}^2$. This extension is motivated, inter alia, by certain
description logics with \emph{functional roles}
\cite{DBLP:books/daglib/0041477}. 
The connection of $\mathrm{WFOMC}$ to enumerative
combinatorics 
also provides an important part of the motivation. Indeed, while $\mathrm{FO}^2$ is a
reasonable formalism for specifying properties of relations, adding
functionality axioms allows us to also express properties of
functions, possibly combined with relations.
For example, applying WFOMC to the sentence $\forall x \neg Rxx
\wedge \forall x\exists^{=1}yRxy$ gives the number of
functions that do not have a fixed point.  While the extension of
$\mathrm{FO}^2$ with a functionality axiom might appear simple at
first sight, showing that the data complexity of $\mathrm{WFOMC}$
remains in $\mathrm{PTIME}$ requires a rather different and much more
involved approach than that for $\mathrm{FO}^2$.
Our proofs provide concrete and insightful aritmetic expressions for
analysing the related weighted model counts. The article \cite{kazemi} considers
weighted model counting of an orthogonal extension of $\mathrm{FO}^2$
which can express that some relations are functions.


%
We also show that the data complexity of $\mathrm{WFOMC}$ remains in
$\mathrm{PTIME}$ for sentences of the \emph{uniform one-dimensional
  fragment} $\mathrm{U}_1$.  This is a recently introduced
\cite{kuusihella, kuuki} extension of $\mathrm{FO}^2$ that preserves
$\mathrm{NEXPTIME}$-completeness of the satisfiability problem while
admitting more than two variables and thus being able to speak about
relations of all arities in a meaningful way.
%
%
%
The fragment $\mathrm{U}_1$ is obtained from FO
by restricting quantification to blocks of existential (universal) quantifiers
that leave at most one variable free, a restriction 
referred to as the \emph{one-dimensionality} condition.
Additionally, a \emph{uniformity condition} is imposed:
if $k,n\geq 2$, then a Boolean combination of atoms
$Rx_1\dots x_k$ and $Sy_1\dots y_n$ is allowed only if 
the sets $\{x_1,\dots , x_k\}$ and $\{y_1,\dots , y_n\}$ of variables are equal.
Boolean combinations of formulae with at
most one free variable can be formed freely,
and the use of equality is unrestricted. 
%
%
%
It is shown in \cite{kuusihella} that lifting either of these
conditions---in a minimal way---leads to undecidability.
For a survey of the basic properties of $\mathrm{U}_1$ and its
relation to modal and description logics, see~\cite{kuusisurvey}.

What makes weighted model counting for $\mathrm{U}_1$ attractive in
relation to applications is the ability of $\mathrm{U}_1$ to express interesting
properties of relations of all arities, thereby banishing one of the main
weaknesses of $\mathrm{FO}^2$. This is especially well justified
from the points of view of database theory and of knowledge
representation with formalisms such as Markov logic, which are among
the main application areas of $\mathrm{WFOMC}$. We 
note that $\mathrm{U}_1$ is significantly more expressive than
$\mathrm{FO}^2$ already in restriction to models with at most binary
relations~\cite{kuusisurvey}.

We also identify a complete classification of first-order prefix
classes according to whether the sentences of the particular class 
have polynomial time $\mathrm{WFOMC}$ or whether some sentence of the
class has a $\#\mathrm{P}_1$-complete $\mathrm{WFOMC}$. This
classification, whose proof makes significant use of the
results and techniques from \cite{suc15,DBLP:conf/kr/BroeckMD14}, is
remarkably simple: $\#\mathrm{P}_1$-hardness arises precisely for the
classes with more than two quantifiers, independently of the quantifier
\nolinebreak 
pattern.

%
\section{Preliminaries}\label{preliminaries}
The natural numbers are denoted by $\mathbb{N}$
and positive integers by $\mathbb{Z}_+$.
As usual, we often identify $n\in\mathbb{N}$ with the
set $\{\, k\in\mathbb{N}\ |\ k < n\}$.
We define $[n] := \{1,\dots , n\}$ for each $n\in\mathbb{Z}_+$ 
\textcolor{black}{and $[0] = \emptyset$}.
The domain of a function $f$ is denoted by $\mathit{dom}(f)$.
The function $f$ is \emph{involutive} if $f(f(x)) = x$ 
for all $x \in \mathit{dom}(f)$ and 
\emph{anti-involutive} if $f(f(x))\not= x$ for all $x \in \mathit{dom}(f)$.
Two functions $f$ and $g$ are \emph{nowhere inverses} if $f(g(x))\not=x$ 
and $g(f(y))\not = y$ for all $x\in \mathit{dom}(g)$, $y\in\mathit{dom}(f)$.
We use the standard notation $\binom{n}{n_1,\dots , n_m}$ 
for multinomial coefficients. 
We study (fragments of) first-order logic $\mathrm{FO}$
over relational vocabularies; constant and function symbols are not
allowed.  The identity symbol `$=$' and the Boolean constants
$\bot,\top$ are \emph{not} considered relation symbols; they are a
logical symbols included in $\mathrm{FO}$. We allow nullary relation
symbols in $\mathrm{FO}$ with the usual syntax and semantics.
The vocabulary of a formula $\varphi$ is
denoted by $\mathit{voc}(\varphi)$.
We let $\mathrm{VAR} := \{v_0,v_1,\dots\}$ denote a fixed,
countably infinite set of
variable symbols. We mainly use
meta-variables $x,y,z,$ etc., in order to refer to symbols in $\mathrm{VAR}$.
Note that for example $x$ and $y$ may denote the same variable, 
while $v_i$ and $v_j$ are different if $i\not=j$.
%

%
%

%
The domain of a model $\mathfrak{M}$ is denoted by
$\mathit{dom}(\mathfrak{M})$. In the case $A\subseteq(
\mathit{dom}(\mathfrak{M}))^k$, we let $(\mathfrak{M},A)$ denote the
expansion of $\mathfrak{M}$ obtained by adding the $k$-ary relation
$A$ to $\mathfrak{M}$. We mostly do not differentiate between
relations and relation symbols explicitly when the distinction is
clear from the context. Relational models decompose into \emph{facts}
and \emph{negative facts} in the usual way: if $R$ is a $k$-ary
relation symbol of a model $\mathfrak{M}$ and $Ru_1\dots u_k$ holds
for some elements $u_1,\dots,u_k$ of $\mathfrak{M}$, then $Ru_1\dots
u_k$ is a positive fact of $\mathfrak{M}$, and if $Ru_1\dots u_k$ does
not hold in $\mathfrak{M}$, then $Ru_1\dots u_k$ is a negative fact of
$\mathfrak{M}$. We denote the positive (respectively, negative) facts
of $\mathfrak{M}$ by $F^+(\mathfrak{M})$ (respectively,
$F^-(\mathfrak{M})$).  The \emph{span} of a fact $Ru_1,\dots ,u_k$,
whether positive or negative, is $\{u_1,\dots , u_k\}$ and its
\emph{size} is $|\{u_1,\dots , u_k\}|$.
The \emph{first-order model counting problem} asks, when 
given a positive integer $n$ in \emph{unary} and an $\mathrm{FO}$-sentence $\varphi$,
how many models $\varphi$ has over the domain $n = \{0,\dots , n-1\}$;
the vocabulary of the models is taken to be $\mathit{voc}(\varphi)$, and
different but isomorphic models contribute separately to the output.
The \emph{weighted first-order model counting problem}
adds two functions to the input, $w$ and $\bar{w}$, that both map 
the set of all possible facts over \textcolor{black}{$n$}
and $\mathit{voc}(\varphi)$ into a 
set of weights. In the \emph{symmetric} weighted model counting problem
studied in this paper, $w$ and $\bar{w}$ are functions $w:\mathit{voc}(\varphi)
\rightarrow \mathbb{Q}$
and $\bar{w}:\mathit{voc}(\varphi)\rightarrow \mathbb{Q}$.
The output $\mathrm{WFOMC}(\varphi,n,w,\bar{w})$ is then
the sum of the \emph{weights} $W(\mathfrak{M},w,\bar{w})$ of
all models $\mathfrak{M}\models\varphi$ with
domain $n$ and vocabulary $\mathit{voc}(\varphi)$,
\begin{equation}
W(\mathfrak{M},w,\bar{w}) := \prod\limits_{Ru_1\dots u_k\, \in\, F^+(\mathfrak{M})}w(R)\ \mathbf{\boldsymbol{\cdot}}\ \prod
\limits_{Ru_1\dots u_k\, \in\, F^-(\mathfrak{M})}\bar{w}(R).
\end{equation}
This setting gives rise to several computational problems, depending on
which inputs are fixed. In this article, we
exclusively study \emph{data complexity}, i.e., the 
problem of computing $\mathrm{WFOMC}(\varphi, n, w,\bar{w})$ 
with the sole input $n\in\mathbb{Z}_+$ given in unary; $\varphi$, $w$ and $\bar{w}$ 
are fixed and thus not part of the input. Algorithms for more general inputs can easily be
extracted from our proofs, but we only study data complexity explicitly
for the lack of space.
While weights are rational numbers,
it will be easy to see that reals with a tame enough representation
could also be included without sacrificing our results. We ignore this 
for the sake of simplicity and stick to rational weights. (See also \cite{kimmig}.)
We now define, for technical purposes, some restricted
versions of $\mathrm{WFOMC}$ and the operator $\mathrm{W}$. First, if $\mathcal{M}$ is a
class of models, we
define $$\mathrm{WFOMC}(\varphi,n,w,\bar{w})\upharpoonright\mathcal{M}$$ to
be the sum of the weights $\mathrm{W}(\mathfrak{M},w,\bar{w})$ of 
models $\mathfrak{M}\in\mathcal{M}$ with domain $n$ and
vocabulary $\mathit{voc}(\varphi)$ such that $\mathfrak{M}\models\varphi$.
For $k\in\mathbb{Z}_+$,
we let $F^+_k(\mathfrak{M})$ and $F^-_k(\mathfrak{M})$ denote the
restrictions of $F^+(\mathfrak{M})$ and $F^-(\mathfrak{M})$ to
facts with span of size $k$. We define
$W_k(\mathfrak{M},w,\bar{w})$ exactly as $W(\mathfrak{M},w,\bar{w})$ but
with $F^+(\mathfrak{M})$ and $F^-(\mathfrak{M})$
replaced by $F_k^+(\mathfrak{M})$ and $F_k^-(\mathfrak{M})$.
When $\varphi, n, w$ and $\bar{w}$ are clear from the context,
we use the \emph{weight of a
class $\mathcal{M}$ of models}
to refer to $\mathrm{WFOMC}(\varphi,n,w,\bar{w})\upharpoonright\mathcal{M}$.
%
%
%
%
%
%

The quantifier-free part of a prenex normal form formula of $\mathrm{FO}$ is called a
\emph{matrix}. A prenex normal form sentence of type $\chi := \forall
x_1\dots \forall x_k\psi$, where $\psi$ is the matrix, is a
\emph{$\forall^*$-sentence}, and the number $k$ of quantifiers in
$\chi$ is the \emph{width} of $\chi$.  An $\exists^*$-sentence is
defined analogously. 
%

We will investigate standard two-variable logic $\mathrm{FO}^2$
enhanced with a \emph{functionality axiom}.  Formulae in this language
are conjunctions of the type $\varphi\wedge \forall
x\exists^{=1}y\psi(x,y)$, where $\varphi$ and $\psi$ are
$\mathrm{FO}^2$-formulae, $\psi$ with the free variables $x,y$ and
$\varphi$ a sentence.  When studying this variant of FO, we
exclusively use the variables $x,y$, with $x$ denoting $v_1$ and $y$
denoting $v_2$.

%
%

We next introduce uniform one-dimensional fragments of FO.
Let $Y= \{y_1,\dots,y_k\}$ be a
set of distinct variables, and let $R$ be an $n$-ary relation symbol for some $n\geq k$.
An atom $Ry_{i_1}\dots , y_{i_n}$ is a \emph{$Y$-atom} if $\{y_{i_1},\dots , y_{i_n}\} = Y$.
For example, if $x,y,z,v$ are distinct variable
symbols, then $Txyzx$ and $Sxzy$ are $\{x,y,z\}$-atoms,
while $Uxyzv$ and $Vxy$ are not. Furthermore, $Vxz$ is an $\{x,z\}$-atom
while $x=z$ is not as identity is not a 
relation symbol. A $Y$-literal is a $Y$-atom $Ry_{i_1}\dots , y_{i_n}$ or a
negated $Y$-atom $\neg Ry_{i_1}\dots  y_{i_n}$. A $Y$-literal is
\emph{an $m$-ary literal} if $|Y| = m$, so for example $Sxx$ and $\neg Px$
are unary literals; $Sxx$ is even a unary atom while $\neg Px$ is not.
A \emph{higher arity literal} is a literal of arity at least two.
We let $\mathit{diff}(x_1,\dots , x_k)$ denote the conjunction of 
inequalities $x_i\not= x_j$ for all distinct $i,j\in [k]$.

%
The set of formulae of the \emph{uniform one-dimensional
fragment} $\mathrm{U_1}$ of $\mathrm{FO}$ is the
smallest set $\mathcal{F}$ such that the following conditions hold.
\begin{enumerate}
\item
Unary and nullary atoms are in $\mathcal{F}$.
\item
All identity atoms $x=y$ are in $\mathcal{F}$.
\item
If $\varphi,\psi\in\mathcal{F}$,
then $\neg\varphi\in\mathcal{F}$ and $\varphi\wedge\psi\in\mathcal{F}$.
\item
Let $X=\{x_0,\dots,x_k\}$ and $Y\subseteq X$. Let $\varphi$ be a Boolean
combination of $Y$-atoms and formulae in $\mathcal{F}$ whose
free variables (if any) are in $X$. Then
\begin{enumerate}
\item
$\exists x_1\dots\exists x_k\, \varphi\in\mathcal{F}$,
\item
$\exists x_0\dots\exists x_k\, \varphi\in\mathcal{F}$.
\end{enumerate}
\end{enumerate}
For example $\exists y\exists z( (\neg Rxyz  \vee Tzyxx) \wedge Qy)$ is a $\mathrm{U}_1$-formula
while $\exists x \exists y(Sxy \wedge Sxz)$ is not, as $\{x,y\}\not=\{x,z\}$.
This latter formula is said to violate the uniformity condition of $\mathrm{U_1}$.
Also $\exists z\forall y\forall x(Txyz \wedge \exists u Sxu)$ is a $\mathrm{U}_1$-formula
while $\exists x \exists y \exists z(Txyz \wedge \exists u Txyu)$ is not,
as $\exists u Txyu$ leaves two variables free and thereby 
violates the one-dimensionality condition of $\mathrm{U}_1$.
The clause 4 above does not require that $Y$-atoms \emph{must} be included, so 
also $\exists x \exists y \exists z\mathit{diff}(x,y,z)$ is a $\mathrm{U}_1$-formula.
We thus see that $\mathrm{U}_1$ has some counting capacities.
A matrix of a $\mathrm{U}_1$-formula is a
called a \emph{$\mathrm{U}_1$-matrix}.
The article \cite{kuusisurvey} contains a survey of $\mathrm{U}_1$ with background
about its expressive power and connections to extended modal logics. The 
article \cite{kieku15} provides an
Ehrenfeucht-Fra\"{i}ss\'{e} game characterization of $\mathrm{U}_1$.
It is worth noting that the so-called \emph{fully uniform one-dimensional fragment} $\mathrm{FU}_1$
has \emph{exactly} the same expressive power as $\mathrm{FO}^2$ when 
restricting to vocabularies with at most binary relations \cite{kuusisurvey}.
The logic $\mathrm{FU}_1$ is obtained by dropping clause 2 from the above
definition of $\mathrm{U}_1$ and instead regarding the identity symbol as an 
ordinary binary relation in clause 4; see \cite{kuusisurvey}. 
Thus $\mathrm{U}_1$ is the extension of $\mathrm{FU}_1$ with unrestricted use of identity.
The formula $\exists x \exists y \exists z\mathit{diff}(x,y,z)$ is an obvious
example of a $\mathrm{U}_1$-formula that is not expressible in $\mathrm{FO}^2$.
Another formula worth mentioning here that separates the 
expressive powers of $\mathrm{U}_1$ and $\mathrm{FO}^2$ 
is $\exists x \forall y \forall z(Ryz \rightarrow (x=y\vee x=z))$
which states that some node is part of every edge of $R$. The separation was
shown in \cite{kuusisurvey}, and the proof is easy; simply consider the two-pebble game
(defined in, e.g., \cite{ebbinghausflum}) on
the complete graphs $K_2$ and $K_3$.
The $\mathrm{U}_1$-formula $\exists x \exists y \exists z \neg Sxyz$ is 
one of the simplest formulae separating $\mathrm{U}_1$ from \emph{both} $\mathrm{FO}^2$ 
and the guarded negation fragment
\cite{guardednegation}, as shown in \cite{kuusisurvey}.
For technical purposes, we also introduce the 
\emph{strongly restricted} fragment of $\mathrm{U}_1$,
denoted $\mathrm{SU}_1$, which was originally
introduced and studied in \cite{kieku15}. The logic $\mathrm{SU}_1$
imposes the additional condition on the above clause 4 that the set $Y$
must contain exactly all of the variables $x_0,\dots , x_k$. For
example $\exists x \exists y \exists u (Rxyu \wedge x\not= u)$ is an $\mathrm{SU}_1$-formula
while $\exists x \exists y (Sxy \wedge x\not= z)$ is not,
despite being a $\mathrm{U}_1$-formula, as $z\not\in\{x,y\}$.
Despite the syntactic restriction imposed by $\mathrm{SU}_1$ being simple, it has some
significant consequences: it is shown in \cite{kieku15} that the satisfiability
problem of $\mathrm{SU}_1$ in the presence of a \emph{single} built-in
equivalence relation is only NEXPTIME-complete,
while it is 2NEXPTIME-complete for $\mathrm{U}_1$.
We note that even the restriction $\mathrm{SU}_1$ of $\mathrm{U}_1$
contains $\mathrm{FO}^2$ as a syntactic fragment.

%
%

%
A\,  $\mathrm{U}_1$-sentence $\varphi$ is in \textit{generalized Scott normal form}, if
\vspace{1mm}\\
%
%
%
%
%
%
$\text{ }
\ \ \ \ \ \ \ \ \ \ \ \ \ \varphi =
\bigwedge \limits_{1\leq i \leq m_{\forall}} \forall x_1 \ldots \forall x_{\ell_i}\, 
\varphi_{i}^{\forall}(x_1,\ldots,x_{\ell_i})\\
\text{ }\ \ \ \ \ \ \ \ \ \ \ \ \ \ \ \ \ \ \ \ \ \ \ \ \ 
\wedge \bigwedge \limits_{1\leq i \leq m_{\exists}} \forall x \exists y_1
\ldots \exists y_{k_i}
\varphi_{i}^{\exists}(x,y_1,\ldots,y_{k_i}),$\vspace{1mm}
%
%
%
%
%
%
%
%
%

\noindent
where $\varphi_{i}^{\exists}$ and $\varphi_{i}^{\forall}$
are quantifier-free.
%
%
%
%
%
A sentence of\, $\mathrm{FO}^2$ is in (standard)
Scott normal form if it is of type

\smallskip

$\text{ }\ \ \ \ \ \ \ \ \ \ \ \ \ \ \ \ \ \ 
\forall x\forall y\, \varphi(x,y)
\wedge\ \ \ \ \bigwedge_{1\leq i \leq m_{\exists}}\forall x\exists y \psi_i(x,y)$

\smallskip

\noindent
with $\varphi$ and each $\psi_i$ quantifier-free.
There exists a standard procedure (see, e.g., \cite{ebbinghausflum, kuuki}) that converts
any given formula $\varphi$ of $\mathrm{FO}^2$ (respectively, $\mathrm{U}_1$) in
polynomial time into a formula $\mathit{Sc}(\varphi)$ in
standard (respectively, generalized) Scott normal form
such that $\varphi$ is equivalent to $\exists P_1\dots \exists P_n\mathit{Sc}(\varphi)$,
where $P_1,\dots , P_n$ are fresh unary and nullary predicates.
The procedure is well-known and 
used in most papers on $\mathrm{FO}^2$ and $\mathrm{U}_1$, so we here only 
describe it very briefly. See Appendix \ref{scottnormalforms} for
further details. 
The principal idea is to replace, starting from the atomic level and
working upwards from there, any
subformula $\psi(x) = Qx_1\dots Qx_k\chi$, where $Q\in\{\forall,\exists\}$ 
and $\chi$ is quantifier-free, with an atomic formula $P_{\psi}(x)$, where $P_{\psi}$
is a fresh relation symbol. This novel atom $P_{\psi}(x)$ is then 
separately axiomatized to be equivalent to $\psi(x)$.
%

%
If $\varphi$ is a sentence of $\mathrm{U}_1$ (respectively $\mathrm{SU}_1$, $\mathrm{FO}^2$),
then $\mathit{Sc}(\varphi)$ is likewise a
sentence of $\mathrm{U}_1$ (respectively $\mathrm{SU}_1$, $\mathrm{FO}^2$);
see Appendix \ref{scottnormalforms}.
%
%
%
Each novel predicate ($P_{\psi}$ in the above example) is axiomatized to be
equivalent to the subformula ($\psi(x)$ in the above example) whose quantifiers are to
be eliminated, so the interpretation of the 
predicate is fully determined by the subformula in
every model of the ultimate Scott normal form sentence.
Thus, recalling that $\varphi \equiv \exists P_1\dots
\exists P_k\mathit{Sc}(\varphi)$, where $P_1,\dots P_k$ are 
the fresh predicates, we get the 
following (see Appendix \ref{scottnormalforms}
and cf. \cite{DBLP:conf/kr/BroeckMD14}).
%

%
%
%
%

%
%

\hyphenation{whe-re}

\begin{lemma}\label{scottlemma2}
$\mathrm{WFOMC}(\varphi,n,w,\bar{w})
\ \ =\ \ \mathrm{WFOMC}(\mathit{Sc}(\varphi),n,w',\bar{w}')$, 
where $w'$ and $\bar{w}'$ map the fresh symbols to $1$.
\end{lemma}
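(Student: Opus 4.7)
The plan is to exhibit a weight-preserving bijection between $\varphi$-models of domain $n$ over vocabulary $\mathit{voc}(\varphi)$ and $\mathit{Sc}(\varphi)$-models of domain $n$ over vocabulary $\mathit{voc}(\mathit{Sc}(\varphi))$. The key property, highlighted in the paragraph immediately preceding the lemma, is that each fresh predicate $P_i$ introduced during Scott normalization comes together with an axiom in $\mathit{Sc}(\varphi)$ forcing $P_i(\bar{x})$ to be equivalent to a specific subformula $\psi_i(\bar{x})$ in the original vocabulary. Hence in any model of $\mathit{Sc}(\varphi)$ the interpretation of each $P_i$ is \emph{forced} by the interpretations of the symbols in $\mathit{voc}(\varphi)$.

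First I would define a map $\mathfrak{M}\mapsto \mathfrak{M}^{*}$ sending a $\varphi$-model $\mathfrak{M}$ of domain $n$ to the unique expansion to $\mathit{voc}(\mathit{Sc}(\varphi))$ in which each $P_i$ denotes the set of tuples satisfying $\psi_i$ in $\mathfrak{M}$. Using the equivalence $\varphi\equiv \exists P_1\dots \exists P_k\, \mathit{Sc}(\varphi)$ together with the explicit definability of each $P_i$, one verifies $\mathfrak{M}^{*}\models \mathit{Sc}(\varphi)$. Conversely, any $\mathfrak{N}\models \mathit{Sc}(\varphi)$ of domain $n$ has its $P_i$-parts pinned down by the axiomatization, so $\mathfrak{N}=(\mathfrak{N}\upharpoonright\mathit{voc}(\varphi))^{*}$, and the same equivalence shows that the reduct $\mathfrak{N}\upharpoonright\mathit{voc}(\varphi)$ models $\varphi$. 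This gives the required bijection.

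It remains to compare weights. By the definition of $W$ given in~(1), the weight $W(\mathfrak{M}^{*},w',\bar{w}')$ differs from $W(\mathfrak{M},w,\bar{w})$ only by multiplicative factors $w'(P_i)$ and $\bar{w}'(P_i)$ coming from the positive and negative facts built with the fresh predicates. Since $w'(P_i)=\bar{w}'(P_i)=1$ for every $P_i$, all of these factors are $1$, and the two weights coincide. Summing $W(\mathfrak{M}^{*},w',\bar{w}')=W(\mathfrak{M},w,\bar{w})$ over the bijection yields the claimed identity of WFOMC values.

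The main obstacle---and essentially the only genuine content---is the claim that each $P_i$ is uniquely definable from $\mathit{voc}(\varphi)$ in every model of $\mathit{Sc}(\varphi)$. This is a consequence of the standard Scott normalization recipe sketched informally above and spelled out in Appendix~\ref{scottnormalforms} and in the cited sources; once that unique definability is in hand, the rest is pure bookkeeping enabled by the choice $w'(P_i)=\bar{w}'(P_i)=1$.
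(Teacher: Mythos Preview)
Your proposal is correct and follows essentially the same approach as the paper: both arguments rest on the fact that the fresh predicates are axiomatized so that their interpretations are uniquely determined in every model of $\mathit{Sc}(\varphi)$, yielding a weight-preserving bijection once the fresh symbols carry weight~$1$. The paper states this more tersely (in the paragraph preceding the lemma and in Appendix~\ref{scottnormalforms}), but the underlying idea is identical.
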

%
%
%
%

%
\subsection{Types and tables}
Let $\eta$ be a finite relational vocabulary.
A 1-type (over $\eta$) is a maximally consistent set of $\eta$-atoms and negated
$\eta$-atoms in the single variable $v_1$. The number of $1$-types over $\eta$ is clearly finite.
%
%
%
%
%
%
%
%
%
We often identify a $1$-type $\alpha$ with the
conjunction of its elements, whence $\alpha(v_1)$ is
simply a formula in the single variable $v_1$. While the official variable with
which $\alpha$ is defined is $v_1$, we frequently 
consider $1$-types $\alpha(x), \alpha(y)$, etc., with $v_1$
replaced by other variables. To see some examples, consider the
case where $\eta = \{R,P\}$ with $R$
binary and $P$ unary.
Then the $1$-types over $\eta$ in the variable $x$ 
are $Rxx\wedge Px$, $\neg Rxx \wedge Px$, $Rxx\wedge \neg Px$
and $\neg Rxx\wedge \neg Px$.
Let $\mathfrak{M}$ be an $\eta$-model and $\alpha$ a $1$-type over $\eta$.
%
%
%
%
%
%
%
An element $u\in\mathit{dom}(\mathfrak{M})$
\emph{realizes} the $1$-type $\alpha$ if\, $\mathfrak{M}\models\alpha(u)$.
Note that every element of $\mathfrak{M}$ realizes exactly one $1$-type over $\eta$.
Let $k\geq 2$ be an integer.
A \emph{$k$-table} over $\eta$
is a maximally consistent set of $\{v_1,...\, ,v_k\}$-atoms
and negated $\{v_1,...\, ,v_k\}$-atoms over $\eta$.
We define that $2$-tables do \emph{not} contain identity atoms or negated identity atoms.
For example, using $x,y$ instead of $v_1,v_2$,
the set $\{Rxxy,Rxyx,\neg Ryxx, Ryyx,\neg Ryxy, Rxyy,
Sxy, \neg Syx\}$ is a $2$-table over $\{R,S\}$, where $R$ is a ternary
and $S$ a binary symbol.
We often identify a $k$-table $\beta$  
with a conjunction of its elements.
We also often consider formulae such
as $\beta(x_1,\dots , x_k)$, thereby writing $k$-tables in
terms of variables other than $v_1,\dots , v_k$.
%
%
%
%

%
For investigations on two-variable logic, we also need the notion of a $2$-type.
Recalling that we let $x$ and $y$ denote, respectively, $v_1$ and $v_2$ in 
two-variable contexts, we define that a \emph{$2$-type} over $\eta$ is a conjunction $\beta(x,y)
\wedge\alpha_1(x)\wedge \alpha_2(y)\wedge x\not=y$, where $\beta$ is a $2$-table
while $\alpha_1$ and $\alpha_2$ are $1$-types over $\eta$.
Such a $2$-type can be conveniently 
denoted by $\alpha_1\beta\alpha_2$.
Let $\gamma$ be either a $1$-type or a $k$-table over $\eta$. Let $L_+$ and $L_-$ be the sets of 
positive and negative literals in $\gamma$.
Given weight functions $w:\eta\rightarrow \mathbb{Q}$ and $\bar{w}:\eta\rightarrow \mathbb{Q}$,
the \emph{weight of\, $\gamma$}, denoted by $\langle w,\bar{w}\rangle(\gamma)$, is the product
%
%
%
%
$\prod\limits_{R\overline{v} \, \in\, L_+}w(R)\ \ \ \ 
\boldsymbol{\cdot} \ \ \ \
\prod\limits_{\neg R\overline{v} \, \in\, L_-}\bar{w}(R),$
%
%
%
\noindent
where $\overline{v}$ denotes all the different possible tuples of 
variables in the literals of $\gamma$.
\subsection{A Skolemization procedure}
\label{sect:skolemn}

We now define a formula transformation procedure designed for the purposes of
model counting. The procedure, which was originally
introduced in 
\cite{DBLP:conf/kr/BroeckMD14}, resembles Skolemization
but does not in general produce an equisatisfiable formula.
Here we present a slightly modified variant of the
procedure from \cite{DBLP:conf/kr/BroeckMD14}
suitable for our purposes.

If $Q\in\{\exists, \forall\}$ is a quantifier, we let $Q'$
denote the \emph{dual quantifier} of $Q$, i.e., $Q' \in \{\exists,
\forall\}\setminus\{Q\}$. Let

\smallskip

$\empty{ }\ \ \ \ \ \ \ \
\ \ \ \ \ \ \ \ \varphi\, :=\, \forall x_1\dots \forall x_k\exists y_1\dots
\exists y_m Q_1z_1\dots Q_n z_n\, \psi$

\smallskip 

\noindent
be a first-order prenex normal 
form sentence where $\psi$ is quantifier-free and $Q_i\in\{\exists,\forall\}$ for all $i$.
%
%
%
We eliminate the block $\exists y_1\dots \exists y_m$ of
existential quantifiers of $\varphi$ in two steps. First we replace $\varphi$ by
%

\smallskip

$\text{ }\ \ \ \ \ \ \forall x_1\dots \forall x_k(Ax_1\dots x_k \vee \neg
\exists y_1\dots \exists y_m Q_1z_1\dots Q_n z_n\, \psi ),$

\smallskip

\noindent
where $A$ is a fresh $k$-ary predicate. Then the negation is pushed inwards
past the quantifier block $\exists y_1\dots \exists y_m Q_1z_1\dots Q_n z_n$ and the resulting dual
block $\forall y_1\dots \forall y_m Q_1'z_1\dots Q_n' z_n$ is pulled out so
that we end up with the prenex normal form sentence

\smallskip

$\text{ }\ \ \ \ \ \ \ \ \forall x_1\dots \forall x_k\forall y_1\dots
\forall y_m Q_1'z_1\dots Q_n' z_n\, (Ax_1\dots x_k \vee \neg\psi).$

\smallskip

Let $\mathit{Sk}_0(\varphi)$ denote the sentence obtained by changing
the maximally long outermost block of existential quantifiers (the
block $\exists y_1\dots \exists y_m$ if $Q_1 = \forall$ above) to a block of
universal quantifiers using the above two steps,
and let $\mathit{Sk}(\varphi)$ be the $\forall^*$-sentence
obtained by repeatedly applying $\mathit{Sk}_0$.
%
%
%
%
%
%
%
%
%
%
For any conjunction $\chi := \psi_1\wedge\dots \wedge \psi_n$ of prenex normal form
sentences, we let $\mathit{Sk}(\chi) :=
\mathit{Sk}(\psi_1)\wedge\dots \wedge \mathit{Sk}(\psi_n)$.
The next Lemma is proved similarly as 
the corresponding result in \cite{DBLP:conf/kr/BroeckMD14}.
For the sake of completeness, Appendix \ref{skolemlemmaproof} also gives a
\nolinebreak proof.
\begin{lemma}[cf. \cite{DBLP:conf/kr/BroeckMD14}]\label{skolemlemma}
Let $\chi$ and $\varphi$ be sentences, $\varphi$ a 
conjunction of prenex
normal form sentences. Let $w$ and $\bar{w}$
be weight functions. Then
%
%
%
%
%
%

\smallskip

$\text{ }\ \ \ \mathrm{WFOMC}(\varphi\wedge\chi,n,w,\bar{w})
= \mathrm{WFOMC}(\mathit{Sk}(\varphi)\wedge\chi,n,w',\bar{w}'),$

\smallskip

\noindent
where $w'$ and $\bar{w}'$ are obtained from $w$ and $\bar{w}$ by 
mapping the fresh symbols in $\mathit{Sk}(\varphi)$ to
$1$ in the case of\, $w'$ and to $-1$ in the case of\, $\bar{w}'$.
If $\varphi$ is a sentence of\, $\mathrm{FO}^2$, then so is $\mathit{Sk}(\varphi)$.
If $\varphi$ is a sentence of $\mathcal{L}
\in\{\mathrm{SU}_1,\mathrm{U}_1\}$ in generalized Scott normal form,
then $\mathit{Sk}(\varphi)\in\mathcal{L}$.
%
%
%

%
%
%
\end{lemma}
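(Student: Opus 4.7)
The plan is to reduce everything to a single application of $\mathit{Sk}_0$ to a single prenex-form conjunct, and then iterate. First I would fix a conjunct $\varphi = \forall x_1 \ldots \forall x_k \exists y_1 \ldots \exists y_m Q_1 z_1 \ldots Q_n z_n \, \psi$ and observe that, by construction, $\mathit{Sk}_0(\varphi)$ is logically equivalent to $\chi' := \forall x_1 \ldots \forall x_k (A x_1 \ldots x_k \vee \neg \exists y_1 \ldots \exists y_m Q_1 z_1 \ldots Q_n z_n \, \psi)$, since the two differ only by pushing a negation through a quantifier block. Hence it suffices to prove $\mathrm{WFOMC}(\varphi \wedge \chi, n, w, \bar{w}) = \mathrm{WFOMC}(\chi' \wedge \chi, n, w', \bar{w}')$, where $A$ is fresh and receives weights $w'(A) = 1$, $\bar{w}'(A) = -1$.

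The heart of the argument is a binomial collapse. For any $\mathit{voc}(\varphi \wedge \chi)$-structure $\mathfrak{M}$ of size $n$ with $\mathfrak{M} \models \chi$, let $T \subseteq n^k$ be the set of tuples $(a_1, \ldots, a_k)$ for which $\mathfrak{M} \models \exists y_1 \ldots \exists y_m Q_1 z_1 \ldots Q_n z_n \, \psi(a_1, \ldots, a_k, \ldots)$. An expansion $\mathfrak{M}'$ of $\mathfrak{M}$ by a relation $A^{\mathfrak{M}'} \subseteq n^k$ satisfies $\chi'$ precisely when $A^{\mathfrak{M}'} \supseteq T$. Summing the $A$-contribution to the weight over all such expansions yields
\[
\sum_{S \supseteq T} \ \prod_{\bar{a} \in S} 1 \ \cdot \ \prod_{\bar{a} \in n^k \setminus S} (-1) \ = \ \sum_{j=0}^{n^k - |T|} \binom{n^k - |T|}{j} (-1)^{n^k - |T| - j},
\]
which equals $1$ if $T = n^k$ and $0$ otherwise by the standard alternating-sum identity. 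Since $T = n^k$ is exactly the condition $\mathfrak{M} \models \varphi$, and the $\mathit{voc}(\varphi \wedge \chi)$-part of the weight is identical for $\mathfrak{M}$ and every expansion $\mathfrak{M}'$, summing over $\mathfrak{M}$ gives the desired identity for $\mathit{Sk}_0$.

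I would then iterate along the successive applications of $\mathit{Sk}_0$ used to build $\mathit{Sk}(\varphi)$, noting that each introduces a fresh predicate disjoint from those introduced earlier so that the weight bookkeeping is independent across steps. The extension to a conjunction $\varphi = \psi_1 \wedge \cdots \wedge \psi_r$ then follows by absorbing all but one $\psi_i$ into $\chi$ and applying the single-conjunct identity in sequence.

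The syntactic preservation claims should fall out by direct inspection. For $\mathrm{FO}^2$, the fresh predicate has arity at most $1$, since after standard Scott normal form at most one universal quantifier sits outside the existential block, and the resulting $\forall^{*}$ sentence still uses only the two variables already in play. For $\mathcal{L} \in \{\mathrm{SU}_1, \mathrm{U}_1\}$ in generalized Scott normal form, each existential conjunct $\forall x \exists y_1 \ldots \exists y_{k_i} \, \varphi_i^{\exists}$ becomes $\forall x \forall y_1 \ldots \forall y_{k_i}\,(A x \vee \neg \varphi_i^{\exists})$; since $Ax$ is unary and $\varphi_i^{\exists}$ already satisfies the uniformity and one-dimensionality conditions (respectively, the stronger $\mathrm{SU}_1$ condition) on the variable set $\{x, y_1, \ldots, y_{k_i}\}$, so does the new quantifier-free matrix. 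The main obstacle is pinning down the binomial identity together with the correct choice of fresh-symbol weights that makes it collapse; once that is in hand, everything else is routine bookkeeping.
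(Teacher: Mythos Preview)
Your proposal is correct and follows essentially the same approach as the paper: reduce to a single application of $\mathit{Sk}_0$, fix a base structure, and show that the total $A$-weight over all admissible expansions is $1$ when the base structure satisfies the original existential conjunct and $0$ otherwise. The only difference is cosmetic---the paper exhibits a sign-reversing involution (flip $A$ on the lexicographically least tuple outside the ``witnessed'' set) to obtain the cancellation, whereas you invoke the alternating binomial sum $\sum_j \binom{N}{j}(-1)^{N-j}=[N=0]$ directly; these are two standard packagings of the same identity. One small remark: your $\mathrm{FO}^2$ justification need not appeal to Scott normal form at all, since $\mathit{Sk}_0$ never introduces new variables and an $\mathrm{FO}^2$ prenex sentence has at most two quantifiers to begin with---your second clause (``still uses only the two variables already in play'') already suffices.
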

%

\subsection{Further syntactic assumptions}\label{syntaxsection}

%

Let $\varphi$ be a sentence of $\mathrm{U}_1$.
Due to Lemmas \ref{scottlemma2} and \ref{skolemlemma}, we have
$$\mathrm{WFOMC}(\varphi,n,w,\bar{w}) 
= \mathrm{WFOMC}(\mathit{Sk}(\mathit{Sc}(\varphi)),n,w',\bar{w}'),$$
where $w'$ and $\bar{w}'$ treat the fresh symbols as
discussed when defining $\mathit{Sc}$ and $\mathit{Sk}$.
Call $\chi := \mathit{Sk}(\mathit{Sc}(\varphi))$
and assume, w.l.o.g., that $\chi = \forall x_1\chi_1\wedge\dots \wedge 
\forall x_1\dots \forall x_k\chi_k$ for some
matrices $\chi_i$. For technical convenience,
when working with $\mathrm{SU}_1$, we assume
that there is at most one $\forall^*$-conjunct of any particular
width; if not, formulae $\forall x_1\dots \forall x_p\chi'$ and $\forall x_1\dots \forall x_p\chi''$
can always be combined to $\forall x_1\dots \forall x_p(\chi'\wedge\chi'')$.

Now, $\chi$ may contain nullary predicates.
Let $S$ be the set of nullary predicates of $\chi$ and 
let $f:S\rightarrow \{\top,\bot\}$ be a function. 
Let $\chi^f$ be the formula obtained from $\chi$ by 
replacing each nullary predicate $P$ by $f(P)$.
It is easy to compute $\mathrm{WFOMC}(\chi,n,v,\bar{v})$
from the values $\mathrm{WFOMC}(\chi^f,n,v,\bar{v})$
for all functions $f:S\rightarrow \{\top, \bot\}$. Thus, when
studying $\mathrm{WFOCM}$ for $\mathrm{U}_1$
and $\mathrm{SU}_1$, we begin with a formula $\forall x_1\chi_1\wedge\dots \wedge 
\forall x_1\dots \forall x_k\chi_k$ assumed to be
free of nullary predicates. We also assume, w.l.o.g., that 
the greatest width $k$ is at least $2$ and equal to the greatest arity of
relation symbols occurring in the formula. (We can 
always add dummy $\forall^*$-conjuncts of higher width, 
and we can add a dummy $k$-ary symbol $R$ to a
conjunct $\forall x_1\dots \forall x_k\chi_k$ by replacing $\chi_k$
by $Rx_1\dots   x_k\wedge \chi_k$ and setting $w(R) = \bar{w}(R) = 1$.)

We then turn to two-variable logic with a
functionality axiom. Consider a sentence $\varphi' := \varphi\wedge\forall x\exists^{=1}y\psi(x,y)$,
where $\varphi$ and $\psi(x,y)$ are $\mathrm{FO}^2$-formulae.
By applying the Scott normal form
procedure for eliminating quantified subformulae and
using the Skolemization operator $\mathit{Sk}$, it is easy to 
obtain (see Appendix \ref{appendixmillion}) a sentence 
$\varphi'' := \forall x\forall y\chi \wedge \forall x \exists^{=1} y \chi'(x,y)$
with $\chi$ and $\chi'(x,y)$ quantifier-free so that 
$\mathrm{WFOMC}(\varphi',n,w,\bar{w})
= \mathrm{WFOMC}(\varphi'',n,w',\bar{w}')$, where $w'$ and $\bar{w}'$
extend $w$ and $\bar{w}$. If $\varphi''$ has nullary predicates, we
eliminate them in the way discussed above. Thus, when studying
$\mathrm{WFOMC}$ for $\mathrm{FO}^2$ with a functionality axioms below, we
begin with a sentence of the form $\forall x \forall y \varphi_1 \wedge
\forall x \exists y^{=1} \varphi_2(x,y)$ where $\varphi_1$ and $\varphi_2$ are
quantifier-free. We also assume, w.l.o.g., that the sentence
contains at least one binary relation symbol and no symbols of
arity greater than two. (These assumptions are easy to justify,
see Appendix \ref{higherarityappendix}.)

\newcommand{\vp}{\ensuremath{\varphi}}
\newcommand{\ybf}{\ensuremath{\mathbf y}}
\newcommand{\Smc}{\ensuremath{\mathcal S}}
\newcommand{\mn}[1]{\ensuremath{\mathsf{#1}}}

%


%
\section{Counting for $\mathrm{FO}^2$ with functionality}\label{funcounting}

We now show that the symmetric weighted model counting problem for
$\mathrm{FO}^2$-sentences with a functionality axiom is in $\mathrm{PTIME}$. As
discussed in the preliminaries, it suffices to consider a formula

\smallskip

$\text{ }\ \ \ \ \ \ \ \ \ \ \ \ \ \ \ \ 
\Phi_0\, :=\, \forall x\forall y\, \varphi_0^{\forall}(x,y)\ \wedge\ \forall x\exists^{=1} y\, \varphi_0^{\exists}(x,y),$

\smallskip

\noindent
where $\varphi_0^{\forall}(x,y)$ and $\varphi_0^{\exists}(x,y)$ are
quantifier-free and do not contain nullary relation symbols. Further assumptions 
justified in the preliminaries are that $\Phi_0$ contains at least one binary relation
symbol and no relation symbols of arity greater than two. From now on, we
thus consider a fixed formula $\Phi_0$ of the above form as well as
fixed weight functions $w$ and $\bar w$.
%
%

To simplify the constructions below, it would help if the subformula
$\varphi_0^{\exists}(x,y)$ of $\Phi_0$ was of the form $x\not= y\
\wedge\, \psi$ 
so that a witness for the existential quantifier 
would always be different from
the point it is a witness to.
However, there seems to be no obvious way to convert $\Phi_0$ into 
the desired form while preserving weighted model counts. We
thus use a conversion that does not preserve these counts 
and then show how to rectify this. Let
\[
\begin{array}{l}
\Phi\ :=\ \forall x\forall y\,     \bigl(  \varphi_0^{\forall}(x,y)  \ 
\wedge\   \neg  (x\not= y \wedge \varphi_0^{\exists}(x,x)\wedge\varphi_0^{\exists}(x,y))\, \bigr)
\end{array}
\]
\[
\begin{array}{lllll}
%
%
%
& \text{ } \wedge\ \forall x\exists^{=1} y\, \bigl(\, 
x\not= y 
               & \wedge\ \bigl(  &  &(\varphi_0^{\exists}(x,x) \wedge Sy)\\
 &             &      \empty\ \    & \vee & (  \varphi_0^{\exists}(x,x) \wedge Sx \wedge Ty)\\
 &             &      \empty\ \    & \vee & (  \neg\varphi_0^{\exists}(x,x) \wedge \varphi_0^{\exists}(x,y))\ \bigr)\bigr),
\end{array}
\]
where $S$ and $T$ are fresh unary predicates. Let $\mathcal{M}$ be the class
of models (over $\mathit{voc}(\Phi)$\hspace{0.0mm}) where $S$ and $T$ are
interpreted to be distinct singletons. Slightly abusing notation,
assume further that both $w$ and $\bar w$ assign to both $S$ and $T$ the value $1$.

The remainder of this section is devoted to showing how to compute
$$\mathrm{WFOMC}(\Phi,n,w,\bar{w})\upharpoonright{\mathcal{M}}.$$
We note that the class

\smallskip

$\text{ }\ \ \ \ \ \ \ \ \ \ \ \ \ \ \ \ \ \ \ 
\mathcal{M}_1 := \{\, \mathfrak{M}\in\mathcal{M}\, |\, 
\mathit{dom}(\mathfrak{M}) = n\, \}$

\smallskip

\noindent
of models relevant to $\mathrm{WFOMC}(\Phi,n,w,\bar{w})\upharpoonright {\mathcal M}$ can
be obtained from the class
$\mathcal{M}_0$ of models
relevant to $\mathrm{WFOMC}(\Phi_0,n,w,\bar{w})$ by interpreting $S$ and $T$ as
distinct singletons in all possible ways, so every model
in $\mathcal{M}_0$ gives rise to $n(n-1)$ models in
$\mathcal{M}_1$. It is thus easy to see that we get
$\mathrm{WFOMC}(\Phi_0,n,w,\bar{w})$ from $\mathrm{WFOMC}(\Phi,n,w,\bar{w})
\upharpoonright {\mathcal M}$ by dividing by $n(n-1)$.
(The case $n=1$ is computed separately.)
We note that there seems to be no obvious way to modify $\Phi$ to additionally
enforce $S$ and $T$ to be distinct singletons. While this property is
expressible by a sentence of $\mathrm{FO}^2$, adding such a sentence 
would destroy the intended syntactic structure of $\Phi$. Note here that
Lemma \ref{skolemlemma} does not in general produce an equivalent formula, so
using it for modifying the required $\mathrm{FO}^2$-sentence would not help.

\subsection{Partitioning models}\label{partitioningmodels}

For simplicity, let $\Phi =\forall x\forall y\,
\varphi^{\forall}(x,y)\ \wedge \,
\forall x\exists^{=1} y\, \varphi^\exists(x,y)$, so $\varphi^{\forall}(x,y)$ and $\varphi^\exists(x,y)$
denote, respectively, the quantifier-free parts of the $\forall\forall$-conjunct
and $\forall\exists^{=1}$-conjunct of $\Phi$.
%
%
%
In the rest of Section \ref{funcounting}, types and tables mean types and tables with
respect to $\mathit{voc}(\Phi)$.
Now, recall from the preliminaries that a $2$-type $\tau(x,y)$ is a conjunction
$\alpha(x)\wedge\beta(x,y)\wedge\alpha'(y)\wedge x\not=y$
where $\beta$ is a $2$-table and $\alpha,\alpha'$ are $1$-types.  We
denote such a $2$-type by $\alpha\beta\alpha'$.  We call
$\alpha$ the \emph{first $1$-type} and $\alpha'$ the
\emph{second $1$-type} of $\tau(x,y)$ and denote
these $1$-types by $\tau(1)$ and $\tau(2)$.
The 2-type $\tau(x,y)$ is \emph{coherent} if

\smallskip

$\text{ }\ \ \ \ \ \ \ \ \ 
\tau(x,y)\models\varphi^{\forall}(x,y)\wedge\varphi^{\forall}(y,x)\wedge\varphi^{\forall}(x,x)
\wedge\varphi^{\forall}(y,y).$

\smallskip

\noindent
A $1$-type $\alpha(x)$ is \emph{coherent} if
$\alpha(x)\models\varphi^{\forall}(x,x)$.
The \emph{inverse} of a $2$-type $\tau(x,y)$ is the $2$-type
$\tau'(x,y)\equiv \tau(y,x)$. A $2$-type is \emph{symmetric} if it is
equal to its inverse.  

The \emph{witness} of an element $u$ in a model $\mathfrak{M}$ of $\Phi$
is the unique element $v$ such that
$\mathfrak{M}\models\varphi^\exists(u,v)$.  A $2$-type $\tau(x,y)$ is
\emph{witnessing} 
if $\tau(x,y)$ is coherent and we have $\tau(x,y)\models
\varphi^\exists(x,y)$. 
The $2$-type $\tau(x,y)$ is \emph{both ways witnessing} if both
it and its inverse are witnessing; note that a both ways witnessing
$2$-type can be symmetric but does not have to. The set of all 
witnessing $2$-types is denoted by $\Lambda$.
%
%
%

\newcommand{\Mmf}{\mathfrak{M}} We next define the notions of a
\emph{block} and a \emph{cell}. These are an essential part of the
subsequent constructions. One central idea of our model counting strategy
is to partition the domain of a model $\mathfrak{M}$ of $\Phi$ into
blocks which are further partitioned into cells. A
\emph{block type} is simply a witnessing 2-type. The \emph{block type
  of an element} $u$ of $\mathfrak{M}\models\Phi$ is the unique
witnessing $2$-type $\tau(x,y)$ such that $\mathfrak{M}\models
\tau(u,v)$, where $v$ is the witness of $u$.  The domain $M$ of
$\mathfrak{M}$ is partitioned by the family $(B^\Mmf_{\tau})_{\tau}$
where each set $B^\Mmf_\tau\subseteq M$ contains precisely the
elements of $\mathfrak{M}$ with block type $\tau$. Some of the
sets $B^\Mmf_{\tau}$ can of course be empty. We call the sets
$B^\Mmf_{\tau}$ the \emph{blocks} of $\mathfrak{M}$ and refer to
$B^\Mmf_{\tau}$ as the \emph{block of type} $\tau$.  We fix a linear
order~$<$ over all block types and denote its reflexive variant by
$\leq$.

%
%
%
%
%
%
%
%

%
Each block further partitions into \emph{cells}.
A \emph{cell type} is a pair $(\sigma,\tau)$ of witnessing 2-types.
For brevity, we denote cell types by $\sigma\tau$ instead of
$(\sigma,\tau)$. 
The \emph{cell type of an element $u$} in a model
$\mathfrak{M}\models\Phi$ is the unique pair $\sigma\tau$ such that
$u \in B_\sigma^\Mmf$ and $v \in B_\tau^\Mmf$, $v$ the witness of $u$.
Each block $B_\sigma^\Mmf$ is partitioned by the family
$(C^\Mmf_{\sigma\tau})_{\tau}$ where each set $C^\Mmf_{\sigma\tau}\subseteq
B_\sigma^\Mmf$ contains precisely the elements of $\mathfrak{M}$ that
are of cell type $\sigma\tau$.  Again, some of the sets
$C^\Mmf_{\sigma\tau}$ can be empty. We call the sets
$C^\Mmf_{\sigma\tau}$ the \emph{cells} of $B^\Mmf_\sigma$ and refer to
each $C^\Mmf_{\sigma\tau}$ as the \emph{cell of type} $\sigma\tau$.


%
%

%
\subsection{The counting strategy}\label{countingstrategysection}

\newcommand{\Cmc}{\mathcal{C}}
\newcommand{\Mmc}{\mathcal{M}}

We now describe our strategy for
computing $\mathrm{WFOMC}(\Phi_0,n,w,\bar{w})$ informally.
A formal treatment will be given later on. We 
first explain how to compute $\mathrm{WFOMC}(\Phi,n,w,\bar{w})$
and then discuss 
how to get $\mathrm{WFOMC}(\Phi,n,w,\bar{w})
\upharpoonright {\mathcal M}$ and $\mathrm{WFOMC}(\Phi_0,n,w,\bar{w})$.
%

%
\noindent

The strategy for computing $\mathrm{WFOMC}(\Phi,n,w,\bar{w})$ is
based on blocks and cells.
We are interested
in models of a given size $n$ and with domain $n = \{0,\dots , n-1\}$, so we
let $\Mmc^\Phi_n$ denote the set of all $\mathit{voc}(\Phi)$-models $\mathfrak{M}$
with domain $n$ that satisfy $\Phi$.
%
%

%
A \emph{cell configuration} is a partition
$(C_{\sigma\tau})_{\sigma\tau}$ of the set $n$ where some sets can be
empty. The \emph{cell configuration of a model} $\mathfrak{M} \in
\Mmc^\Phi_n$ is the family $(C^\Mmf_{\sigma\tau})_{\sigma\tau}$ as
defined in Section \ref{partitioningmodels}. For a cell configuration
$\Gamma$, we use $\Mmc^\Phi_{n,\Gamma}$ to denote the class of all
models in $\mathcal{M}^\Phi_n$ that have cell configuration
$\Gamma$. It is clear that the family
$(\Mmc^\Phi_{n,\Gamma})_{\Gamma}$, where $\Gamma$ ranges over all cell
configurations, partitions $\mathcal{M}^\Phi_n$ (though some sets $\Mmc^\Phi_{n,\Gamma}$
can be empty). It would
be convenient to iterate over cell configurations $\Gamma$ and
independently compute the weight of all models in each~$\Mmc^\Phi_{n,\Gamma}$,
eventually summing up the computed weights. However, this option is
ruled out since the number of cell configurations is exponential in
$n$.  Fortunately, it suffices to only know the \emph{sizes}
of cells rather than their concrete extensions.
Let $\sigma_1,\dots,\sigma_k$ enumerate all block types.  Then
the sequence $$\sigma_1\sigma_1,\sigma_1\sigma_2,\dots,\sigma_k\sigma_k$$ enumerates
all cell types. A \emph{multiplicity configuration} is a vector
$$\scalebox{0.99}{$(n_{\sigma_1\sigma_1},n_{\sigma_1\sigma_2},\dots,n_{\sigma_k\sigma_k})$}$$
where each $n_{\sigma_i\sigma_j}$ is a number in $\{0,\dots , n\}$ and
$\scalebox{0.99}{$n_{\sigma_1\sigma_1} + \dots + n_{\sigma_k\sigma_k} = n$}$. The
\emph{multiplicity configuration of a model} $\mathfrak{M} \in
\Mmc^\Phi_n$ is obtained by letting each $n_{\sigma\tau}$ be the size
of $C^\Mmf_{\sigma\tau}$.  For a multiplicity configuration $\Delta$,
we use $\Mmc^{\Phi}_{n,\Delta}$ to denote the class of all models from
$\Mmc^\Phi_n$ that have multiplicity configuration $\Delta$. Clearly,
the number of multiplicity configurations
is polynomial in $n$, so we can iterate over them and---as we shall see---independently
compute the weight of all models in each~$\Mmc^\Phi_{n,\Delta}$ in polynomial time.
Each cell configuration gives rise to a unique multiplicity
configuration. Conversely, for every multiplicity configuration $\Delta=
\scalebox{0.99}{$(n_{\sigma_1\sigma_1},
n_{\sigma_1\sigma_2},\dots,n_{\sigma_k\sigma_k})$}$,
there are
$${\ell=\binom{n}{n_{\sigma_1\sigma_1},n_{\sigma_1\sigma_2}, \dots ,
  n_{\sigma_k\sigma_k}}}
$$
cell configurations giving rise
to~$\Delta$.  For any two such cell configurations $\Gamma$,
$\Gamma'$, the
weight of $\mathcal{M}^\Phi_{n,\Gamma}$ (i.e., the sum of the
weights of the models in $\mathcal{M}^\Phi_{n,\Gamma}$) is identical to the
weight of $\mathcal{M}^\Phi_{n,\Gamma'}$. To obtain the weight of
$\mathcal{M}^\Phi_{n,\Delta}$, it thus suffices to consider a
single cell configuration $\Gamma$ giving rise to $\Delta$,
compute the weight of $\mathcal{M}^\Phi_{n,\Gamma}$ and multiply by $\ell$. 
We now briefly describe how to compute the number of models in 
$\mathcal{M}^\Phi_{n,\Gamma}$\hspace{0.4mm}, ignoring weights.
With easy modifications, the approach will ultimately also give the
weight of $\mathcal{M}^\Phi_{n,\Gamma}$.
%
Although our algorithm is not going to explicitly construct
the models in~$\mathcal{M}^\Phi_{n,\Gamma}$, to describe
how the number of those models is counted, we
simultaneously consider how we could construct all of them.
Let $(B_\sigma)_\sigma$ be the \emph{block configuration} that 
corresponds to the cell
configuration $\Gamma=(C_{\sigma\tau})_{\sigma\tau}$, that is, $B_\sigma = \bigcup_\tau C_{\sigma\tau}$ for each block type $\sigma$. 
As the domain is fixed to be $n$, we consider all
possible ways to assign $1$-types to the elements of $n$ and $2$-tables to
pairs of distinct elements such that we realize the cell
configuration~$\Gamma$. There is no freedom for the
$1$-types: if $u \in B_{\sigma}$, then we must 
assign the $1$-type $\sigma(1)$ to $u$.
%
%
%
%
To assign $2$-tables, we consider each pair of blocks
$(B_{\sigma},B_{\tau})$ with $\sigma\leq \tau$ independently,
identifying each possible way to simultaneously assign $2$-tables to
pairs in $B_{\sigma} \times B_{\tau}$. (When $\sigma=\tau$, we must be careful to (1) 
consider only pairs $(u,v)$ of \emph{distinct} elements and (2) to
assign a $2$-table to only one of $(u,v),(v,u)$.)
It is important to understand that
in~$B_{\sigma}$, there is exactly one cell, namely $C_{\sigma\tau}$,
whose elements require a witness from $B_{\tau}$.
Similarly, in $B_{\tau}$, it is precisely the elements of $C_{\tau\sigma}$ that
require a witness in~$B_{\sigma}$. Since witnesses are unique, we
start with identifying the ways to simultaneously define
functions $f:C_{\sigma\tau}\rightarrow B_{\tau}$ and $g:C_{\tau\sigma}\rightarrow B_{\sigma}$
that determine the witnesses. It then remains to count the number of ways to
assign $2$-types to the remaining edges that are witnessing in neither
direction. This is easy---as long as we know the number $N$ of
these remaining edges---since each edge realizes the $1$-type $\sigma(1)$ at the one
end and $\tau(2)$ at the other. We use a look-up
table to find the number of $2$-tables that are `compatible' with this.
The number $N$
depends on how many pairs in $B_{\sigma}\times B_{\tau}$ and $B_{\tau}\times B_{\sigma}$ 
belong to the functions that
determine the witnesses, but $N$ will nevertheless be easy to determine, as we shall see.
The precise arithmetic formulae for counting the number of ways to
assign $2$-tables to all elements from $B_{\sigma} \times B_{\tau}$ are
given in Section~\ref{combinatorics}. There are several cases that
need to be distinguished. We now briefly look at the 
most important cases informally.
We start with the case $\sigma=\tau$, that is, the two blocks
$B_{\sigma}, B_{\tau}$ are in fact the same single block, and we aim to
assign $2$-tables within that block. Then exactly the elements from
the cell $C_{\sigma\sigma}$ require a witness in $B_{\sigma}$
itself. If $\sigma$ is not both ways witnessing, then
$C_{\sigma\sigma}$ will be the domain of an anti-involutive function
$C_{\sigma\sigma}\rightarrow B_{\sigma}$ that determines a witness in
$B_{\sigma}$ for each element in $C_{\sigma\sigma}$.  If $\sigma$ is
both ways witnessing and its own inverse, this function is
involutive. The case where $\sigma$ is both ways witnessing but not
its own inverse is pathological in the sense that there are then no
valid ways to assign $2$-tables unless $C_{\sigma\sigma}$ is empty. To sum up, in
each case, the core
task in designing the desired arithmetic formula is thus to count the
number of suitable anti-involutive or involutive functions.
%
%

Now consider the case where $\sigma\not=\tau$ and thus $B_{\sigma}$
and $B_{\tau}$ are different blocks. Here again several subcases arise
based on whether $\sigma$ and $\tau$ are both ways witnessing. The most
interesting case is where neither $\sigma$ nor $\tau$ is both ways
witnessing. We then need to count the ways of finding two
functions $f:C_{\sigma\tau}\rightarrow B_{\tau}$ and
$g:C_{\tau\sigma}\rightarrow B_{\sigma}$ that are nowhere inverses of
each other.  In the case where $\sigma$ and $\tau$ are both
ways witnessing and inverses of each other, we need to count the
number of perfect matchings between the sets $C_{\sigma\tau}$ and
$C_{\tau\sigma}$.  The case where at least one of the witness types,
say $\sigma$, is both ways witnessing, but $\sigma$ and $\tau$ are not
inverses of each other, is again pathological.
Implementing the above ideas, we will show how to obtain, for any pair of blocks
$B_\sigma,B_\tau$, where we have $\sigma\leq\tau$, a function
%
%
%
$M_{\sigma\tau}(n_{\sigma},n_{\sigma\tau},n_{\tau},n_{\tau\sigma})$
%
%
%
that counts the `weighted number of ways' to connect the blocks
$B_{\sigma}$ and $B_{\tau}$ with $2$-tables, when given the sizes
$n_{\sigma}$ and $n_{\tau}$ of the blocks as well as the sizes
$n_{\sigma\tau}$ and $n_{\tau\sigma}$ of the cells
$C_{\sigma\tau}\subseteq B_{\sigma}$ and $C_{\tau\sigma}\subseteq
B_{\tau}$; we note that while this fixes the intuitive
interpretation of $M_{\sigma\tau}(n_{\sigma},n_{\sigma\tau},n_{\tau},n_{\tau\sigma})$, the function $M_{\sigma\tau}$
will become formally defined in terms of arithmetic operations in Section \ref{finalsn}. 
(Furthermore, for the sake of extra clarity,
we provide in Appendix \ref{1234990appendix} a more detailed 
description of what the weighted number of ways to
connect $B_{\sigma}$ and $B_{\tau}$ with $2$-tables means.)

Recall that $\Lambda$ is the set of all block types and note that
%
%
%
$n_{\sigma} = \sum\limits_{\sigma'\, \in\, \Lambda} n_{\sigma\sigma'}
$
%
%
%
and likewise for $n_{\tau}$, so $n_{\sigma}$ and $n_\tau$ are determined by the
sizes of all cells in the blocks $B_{\sigma}$ and $B_{\tau}$.
With the aim of achieving notational uniformity, we can thus replace
$M_{\sigma\tau}$ by a function
\begin{equation}\label{nomoretemporaryequation}
N_{\sigma\tau}(n_{\sigma_1\sigma_1},n_{\sigma_1\sigma_2},\dots , n_{\sigma_k\sigma_k})
\end{equation}
that outputs
$M_{\sigma\tau}(n_{\sigma},n_{\sigma\tau},n_{\tau},n_{\tau\sigma})$
but has a full multiplicity type as an input.
Noting that the weight functions $w$ and $\bar w$ give rise to the
weight $w_\alpha := \langle w,\bar{w}\rangle(\alpha)$
of each $1$-type $\alpha$,
we now observe that we can compute $\mathrm{WFOMC}(\Phi,n,w,\bar{w})$ by the function
\begin{multline}\label{ultimatecountingalgzero}
\mathcal{U}(n)\ :=
\sum\limits_{n_{\sigma_1\sigma_1}+n_{\sigma_1\sigma_2} + \dots +n_{\sigma_k\sigma_k} =\, n}
\Bigl(\ \ 
\binom{n}{n_{\sigma_1\sigma_1},n_{\sigma_1\sigma_2}, \dots , n_{\sigma_k\sigma_k}}\\
\boldsymbol{\cdot}\bigl(\prod\limits_{
\sigma\, \in\ \Lambda
%
}
(w_{\sigma(1)})^{n_{\sigma}}\bigr)
\prod\limits_{
\sigma,\, \tau\ \in\ \Lambda
%
}N_{\sigma\tau}(n_{\sigma_1\sigma_1},n_{\sigma_1\sigma_2},
\dots , n_{\sigma_k\sigma_k})\, \Bigr).
\end{multline}
%
%
%
%
%

%
Recall, however, that we aim to compute
$\mathrm{WFOMC}(\Phi,n,w,\bar{w})\upharpoonright \Mmc$ rather than
$\mathrm{WFOMC}(\Phi,n,w,\bar{w})$. And eventually we want to compute
$$\mathrm{WFOMC}(\Phi_0,n,w,\bar{w}),$$ which can be obtained simply by dividing
$\mathrm{WFOMC}(\Phi,n,w,\bar{w})\upharpoonright \Mmc$ by $n(n-1)$.  In order to 
get from $\mathrm{WFOMC}(\Phi,n,w,\bar{w})$ to
$\mathrm{WFOMC}(\Phi,n,w,\bar{w})\upharpoonright \Mmc$, we need to discard weights
contributed by models where $S$ and $T$ are not interpreted as
non-overlapping singletons. This is easy: we only need to discard multiplicity
configurations $(n_{\sigma_1\sigma_1},n_{\sigma_1\sigma_2},\dots ,
n_{\sigma_k\sigma_k})$ that do not make $S$ and $T$ distinct
singletons. Let $\langle{n}\rangle$ be the set of multiplicity configurations
with the undesired ones excluded.
Summing up, $\mathrm{WFOMC}(
\Phi_0,n,w,\bar{w})$ can thus be computed by the function
\begin{multline}\label{ultimatecountingalg}
\mathcal{W}(n)\ =
\dfrac{1}{n(n-1)}\\
\boldsymbol{\cdot}\sum\limits_{(n_{\sigma_1\sigma_1},n_{\sigma_1\sigma_2},
\dots,n_{\sigma_k\sigma_k})\, \in\, \langle n\rangle}
\Bigl(\ \binom{n}{n_{\sigma_1\sigma_1},n_{\sigma_1\sigma_2}, \dots , n_{\sigma_k\sigma_k}}\\
\boldsymbol{\cdot}\bigl(\prod\limits_{
\sigma\, \in\ \Lambda
%
}
(w_{\sigma(1)})^{n_{\sigma}}\bigr)
\prod\limits_{
\sigma,\, \tau\ \in\ \Lambda
%
}N_{\sigma\tau}(n_{\sigma_1\sigma_1},n_{\sigma_1\sigma_2},\dots , n_{\sigma_k\sigma_k}) \Bigr).
\end{multline}
%
%
%
%
%
%
%
%
%
%

%
%
%
In the next Section~\ref{combinatorics} we deal with the 
combinatorics for defining the 
functions $N_{\sigma\tau}$. The actual functions $N_{\sigma\tau}$ are then
specified in Section~\ref{finalsn} where we conclude our argument.
\subsection{The relevant combinatorics}\label{combinatorics}
Let $k\in\mathbb{N}$.
The following equation is well known.
\begin{equation}\label{alternatingsum}
\sum\limits_{i=0}^{i=k}(-1)^i\binom{k}{i} = 
\begin{cases}
0\text{ if }k\not= 0\\
1\text{ if }k= 0.
\end{cases}
\end{equation}
On the intuitive level, the \emph{alternating sum} on the left hand side of the
equation relates directly to the \emph{inclusion-exclusion} principle.
We shall make
frequent use of this equation in the constructions below.

%
%
%
%

%
The first result of this section, Proposition \ref{nonidempcounting1} below,
will ultimately help us in counting the number of ways to
connect a block \emph{to itself} with $2$-tables. However, the result is interesting in its 
own right and thus we formulate it abstractly, like most 
results in this section, without
reference to $2$-types or other logic-related notions.
%
%

%
Recall that a unary function is \emph{anti-involutive} if $f(f(x))\not= x$ 
for all $x\in \mathit{dom}(f)$. Note that this implies $f(x) \not= x$
for all $x\in \mathit{dom}(f)$, i.e., $f$ is \emph{fixed point free}.
\begin{proposition}\label{nonidempcounting1}
Let $n$ and $m\leq n$ be nonnegative integers.
The number of anti-involutive functions $m\rightarrow n$ is
\begin{equation}\label{alternatingsumnoninvolutive1}
\mathit{I}(m,n) := \sum\limits_{i = 0}^{i = \lfloor m/2
\rfloor}(-1)^i(n-1)^{m-2i}\binom{m}{2i}\frac{(2i)!}{2^i(i!)}.
\end{equation}
\end{proposition}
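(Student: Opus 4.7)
The plan is a straightforward inclusion-exclusion argument, once the anti-involutive condition is translated into a combinatorial constraint on the function $f : [m] \to [n]$.

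First I would observe that $f$ is anti-involutive if and only if (i) $f(x) \neq x$ for every $x \in [m]$, and (ii) there is no unordered pair $\{x,y\}\subseteq[m]$ with $x\neq y$ such that $f(x)=y$ and $f(y)=x$. Indeed, a fixed point $f(x)=x$ forces $f(f(x))=x$, and a 2-cycle between $x\neq y$ also forces $f(f(x))=x$; conversely, if $f(x)\notin[m]$ then $f(f(x))$ is undefined and the condition is vacuous, while if $f(x)\in[m]$ the only way to obtain $f(f(x))=x$ is via (i) or (ii). This translation is the only step requiring any care.

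Let $U$ be the set of functions $[m]\to[n]$ satisfying (i), so that $|U|=(n-1)^m$ since each coordinate has $n-1$ permissible values. For each unordered pair $P=\{x,y\}\subseteq[m]$ with $x\neq y$, let $A_P\subseteq U$ be the collection of $f\in U$ realising $P$ as a 2-cycle, i.e.\ $f(x)=y$ and $f(y)=x$; notice the 2-cycle constraints are automatically fixed-point-free, so $A_P\subseteq U$ is unambiguous. Then $I(m,n)$ is precisely $|U\setminus\bigcup_P A_P|$, and I would compute this via the standard inclusion-exclusion identity
\[
\Bigl|U\setminus\bigcup_P A_P\Bigr|\ =\ \sum_{\mathcal S}(-1)^{|\mathcal S|}\Bigl|\bigcap_{P\in\mathcal S}A_P\Bigr|,
\]
where $\mathcal S$ ranges over finite collections of such pairs.

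Now I would analyse the intersection. A collection $\bigcap_{P\in\mathcal S}A_P$ is empty unless the pairs in $\mathcal S$ are pairwise disjoint, since a single coordinate cannot simultaneously be constrained to two distinct values. For a family $\mathcal S$ of $i$ disjoint pairs, the $2i$ coordinates involved have their $f$-values forced (and compatibly with fixed-point-freeness), while the remaining $m-2i$ coordinates are free to take any of the $n-1$ values that avoid themselves. Hence $|\bigcap_{P\in\mathcal S}A_P|=(n-1)^{m-2i}$. The number of ways of choosing $i$ pairwise disjoint unordered pairs from $[m]$ is $\binom{m}{2i}\cdot\frac{(2i)!}{2^{i}\, i!}$, obtained by first selecting the $2i$ participating elements and then pairing them. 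Substituting and collecting by $i=|\mathcal S|$ gives exactly
\[
\sum_{i=0}^{\lfloor m/2\rfloor}(-1)^i(n-1)^{m-2i}\binom{m}{2i}\frac{(2i)!}{2^{i}\, i!},
\]
which is $I(m,n)$.

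The main obstacle is genuinely minor: it lies in step one, namely carefully dealing with the convention that $f(f(x))$ is undefined when $f(x)\notin[m]$ and verifying that such $x$ impose no constraint. Once that translation is fixed, the rest is a textbook inclusion-exclusion with no subtleties beyond checking the two combinatorial counts above.
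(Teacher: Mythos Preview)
Your proof is correct and takes essentially the same approach as the paper: both are inclusion--exclusion arguments over the set of fixed-point-free functions, excluding those containing a ``symmetric pair'' (your 2-cycle). The paper phrases it as counting pairs $(f,L)$ with $L$ a set of $i$ labelled symmetric pairs and then verifying via the alternating-sum identity that each $f$ with exactly $j$ symmetric pairs is counted $\sum_{k}(-1)^k\binom{j}{k}$ times, whereas you invoke the inclusion--exclusion formula directly; these are two presentations of the same computation.
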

\begin{proof}
We first note that for a nonnegative integer $i$,
there are $\binom{2i}{2,\dots , 2}\frac{1}{i!}$ ways to partition $2i$ elements 
into doubletons, where $2$ is written $i$ times in
the bottom row. Writing the multinomial coefficient $\binom{2i}{2,\dots , 2}$ open, we 
see that $\binom{2i}{2,\dots , 2}\frac{1}{i!}=\frac{(2i)!}{2^i(i!)}$.
%

%
Now, for a fixed point free function $f$, if $f(f(x)) = x$ for some $x$, then we call
the doubleton $\{x,f(x)\}$ a \emph{symmetric pair} of $f$.
A \emph{fixed point free function $f:m\rightarrow n$ with $i$ labelled 
symmetric pairs} is a pair $(f,L)$ where $f:m\rightarrow n$ is a fixed point
free function and $L$ is a set of exactly $i$ symmetric pairs of $f$. Note that $f$ may
have other symmetric pairs outside $L$, so $L$ only 
distinguishes $i$ specially labelled symmetric pairs.
It is easy to see that the number of fixed point free functions $m\rightarrow n$ with $i$
labelled symmetric pairs is given by 
\begin{equation}\label{idempzerostep1}
(n-1)^{m-2i}\binom{m}{2i}\frac{(2i)!}{2^i(i!)}.
\end{equation}
Therefore Equation \ref{alternatingsumnoninvolutive1} has the following intuitive
interpretation. The equation first counts---when $i$ is zero---all 
fixed point free functions $m\rightarrow n$ without any \emph{labelled} symmetric pairs;
unlabelled symmetric pairs are allowed.
Then, when $i=1$, the equation subtracts
the number of fixed point free functions $m\rightarrow n$ with one
\emph{labelled} symmetric pair. Then, with $i=2$ the equation adds the the number of 
fixed point free functions $m\rightarrow n$ with two \emph{labelled} symmetric pairs,
and so on, all the way to $i= \lfloor m/2 \rfloor$.
Now, fix a single fixed point free function $f:m\rightarrow n$ with 
\emph{exactly} $j$ symmetric pairs. Labelling $k\leq j$ of the $j$ 
symmetric pairs can be done in $\binom{j}{k}$ ways.
Thus $f$ gets counted in Equation \ref{alternatingsumnoninvolutive1} precisely
%
%
%
$S(j) := \binom{j}{0} -  \binom{j}{1} + \binom{j}{2} - \dots * \binom{j}{j}$
%
%
%
times, where $*$ is $+$ if $j$ is even and $-$ if $j$ is odd.
By Equation \ref{alternatingsum}, $S(j)$ is $0$ when $j\not= 0$ and $1$ when $j=0$.
Thus $f$ gets counted zero times if $j\not=0$
and once if $j=0$.
%
%
%
%
%
%
\end{proof}
Proposition \ref{nonidempcounting1} will be used for  
counting functions that find a witness for 
each element of a cell $C$ of size $m$ from a block $B\supseteq C$ of
size $n$. However, we also need to count the ways of assigning
non-witnessing $2$-tables to the remaining edges inside $B$.
The next two results, Lemma \ref{nonskolemlemma}
and Proposition \ref{internalcounting1}, will help in this.
%

%
Let $G$ be an undirected graph with the
set $V$ of vertices and $E$ of edges. A \emph{labelling} of $G$ 
with $k$ \emph{symmetric colours} and $\ell$
\emph{directed colours} is a pair of functions $(s,d)$
such that 
\begin{enumerate}
\item
$s$ maps some set $U\subseteq E$ into $[k]$, not
necessarily surjectively,
\item
$d$ maps the complement $E\setminus U$ of $U$ into $[\ell]\times V$
such that each edge $e\in E\setminus U$ gets mapped to a pair $(i,u)$
where $u\in e$. Intuitively, $d$ picks a colour in $[\ell]$ and an 
\emph{orientation} for $e$. It is \emph{not} required that each $i\in [\ell]$
gets assigned to some edge.
\end{enumerate}
The colour $j\in [\ell]$ is said to\ \emph{define a function}\ \ if the relation
$\{\, (u,v)\, |\, \{u,v\}\in E\setminus U,\ d(\{u,v\}) = (j,v)\, \}$
is a function.
Rather than counting labellings of graphs, we need to count \emph{weighted
labellings}: a weighted labelling of a graph $G$ with $k$ symmetric and $\ell$ directed
colours is a triple

\smallskip

$\text{ }\ \ \ \ \ \ \ \ \ \ \ \ \ \ \ \ \ \ \ \ 
W = ((s,d),(w_1,\dots, w_k),(x_1,\dots x_{\ell}))$

\smallskip

\noindent
such that $(s,d)$ is a labelling of $G$
and $w_1,\dots , w_k$ are weights of the symmetric colours $1,\dots , k$ and $x_1,\dots , x_{\ell}$ 
weights of the directed colours $1,\dots , \ell$. (Here e.g. $1$ is called both a 
directed and symmetric colour. This will pose no problem.) The \emph{total weight} $t_W$ of
the weighted labelling $W$ is the 
product of the weights assigned to the edges of $G$.
The \emph{weighted number of
labellings} of $G$ with $k$ symmetric and $\ell$ directed 
colours with weights $w_1,\dots , w_k$ and $x_1,\dots , x_{\ell}$ is the sum of the total weights $t_W$ of all
weighted labellings $W = ((s,d),(w_1,\dots, w_k),(x_1,\dots x_{\ell}))$ of $G$.
The following is easy to prove (see Appendix \ref{aabbccdd}).
\begin{lemma}\label{nonskolemlemma}
The function 
\begin{multline}\label{whatevertherestfunctionweight}
L_{\empty_{k,\ell}}(N,w_1,\dots , w_k,x_1,\dots ,x_{\ell}) :=\\
\sum\limits_{
\substack{i_1 + \dots + i_k + j_1 + \dots + j_{\ell}\, =\, N}}
\Bigl(\ \binom{N}{i_1,\dots , i_k,j_1,\dots j_{\ell}}\\
\cdot 2^{j_1+\dots + j_{\ell}}
\bigl(\prod\limits_{\substack{p\, \in\, [k]}}(w_p)^{i_p}\bigr)
\bigl(\prod\limits_{\substack{q\, \in\, [\ell ] }}(x_q)^{j_q}\bigr)\ \Bigr)
\end{multline}
gives the weighted number of labellings of an arbitrary $N$-edge graph  
with $k$ symmetric and $\ell$ directed colours 
with weights $w_1,\dots , w_k$ and $x_1,\dots , x_{\ell}$.
\textcolor{black}{At least one of $k,\ell$ is assumed nonzero here.
The first (resp. second) product on the bottom row outputs $1$ if $k=0$
(resp. $\ell = 0$).}
\end{lemma}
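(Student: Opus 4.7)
The plan is to exploit the fact that a labelling of an $N$-edge graph is determined edge-by-edge and independently, reducing the claim to a direct multinomial bookkeeping.

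First, I would fix a multiplicity vector $(i_1,\dots, i_k, j_1,\dots, j_\ell)$ with $i_1 + \dots + i_k + j_1 + \dots + j_\ell = N$ and count the total weight contributed by those labellings in which exactly $i_p$ edges receive symmetric colour $p$ for each $p \in [k]$ and exactly $j_q$ edges receive directed colour $q$ for each $q \in [\ell]$. The number of ways to partition the $N$ edges accordingly is the multinomial coefficient $\binom{N}{i_1,\dots,i_k,j_1,\dots,j_\ell}$; independently of this choice, each of the $j_1 + \dots + j_\ell$ directed-coloured edges may be oriented in two ways, which contributes an overall factor $2^{j_1 + \dots + j_\ell}$. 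Since the weight of a labelling is the product of the colour weights of its edges, every such labelling contributes weight $\prod_p (w_p)^{i_p} \prod_q (x_q)^{j_q}$. Summing these contributions over all admissible multiplicity vectors reproduces exactly the right-hand side of (\ref{whatevertherestfunctionweight}).

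A perhaps more transparent route is to first compute the total weight obtainable from a single edge: it is $\sum_{p=1}^{k} w_p + 2 \sum_{q=1}^{\ell} x_q$, where the factor $2$ reflects the two orientations available for each directed colour. Because edges are labelled independently, the weighted number of labellings of the whole graph is $\bigl(\sum_p w_p + 2\sum_q x_q\bigr)^N$, and an application of the multinomial theorem rewrites this product as precisely the claimed sum.

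I do not anticipate any real obstacle; the argument is a direct exercise in multinomial expansion. The only points requiring minor care are the degenerate cases $k = 0$ or $\ell = 0$, where one of the inner products is empty and thus evaluates to $1$, in accordance with the convention stated in the lemma, and the observation that the formula is insensitive to the actual structure of the graph (depending only on $N$), which is evident from the edge-by-edge independence of the labelling.
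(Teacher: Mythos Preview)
Your proposal is correct and your first route is essentially identical to the paper's own proof: fix a multiplicity vector, count the $\binom{N}{i_1,\dots,i_k,j_1,\dots,j_\ell}$ ways to distribute edges among colours, multiply by $2^{j_1+\dots+j_\ell}$ for the orientations of directed edges, and attach the weight factor $\prod_p (w_p)^{i_p}\prod_q (x_q)^{j_q}$. Your second route via the closed form $\bigl(\sum_p w_p + 2\sum_q x_q\bigr)^N$ and the multinomial theorem is a pleasant bonus not in the paper.
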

\textcolor{black}{
{We also define $L_{0,0}(N) := 0$ for $N > 0$
and $L_{0,0}(0) := 1$, and furthermore, $L_{k,\ell}(m,
w_1,\dots , w_k,x_1,\dots ,x_{\ell}) := 0$ for all negative 
integers $m$.}} The following is
easy to prove (see Appendix \ref{aaabbccdd}).
\begin{proposition}\label{internalcounting1}
Let $n$ and $m\leq n$ be nonnegative integers,
and let $w_1,\dots , w_k$ and $x_1,\dots , x_{\ell},y$ be
weights for $k$ symmetric and $\ell + 1$ directed colours. The function 
\begin{multline}\label{twoconfunctionone}
J_{\empty_{k,\ell+1}}(m,n,w_1,\dots , w_k,x_1,\dots, x_l,y)\, :=\\
I(m,n) \cdot y^{m} \cdot L_{\empty_{k,\ell}}\bigl(\binom{n}{2}-m,\, w_1,\dots , w_k,x_1,\dots ,x_{\ell}\bigr)
%
\end{multline}
gives the weighted number of labellings of the complete $n$-element graph
with $k$ symmetric and $\ell+1$ directed colours
with the above weights 
such that the edges of colour $\ell+1$ define an
anti-involutive function $m\rightarrow n$.
\end{proposition}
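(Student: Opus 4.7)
The plan is to give a direct multiplicative decomposition. Any weighted labelling of $K_n$ whose colour-$(\ell+1)$ edges define an anti-involutive function $f : m \to n$ splits into two independent pieces: (i) the choice of $f$, which determines $m$ directed edges of colour $\ell+1$ together with their orientations, and (ii) an unconstrained weighted labelling of the remaining $\binom{n}{2}-m$ edges of $K_n$ by the $k$ symmetric and $\ell$ directed colours with weights $w_1,\dots,w_k,x_1,\dots,x_\ell$. Since weights multiply over edges, the total weighted count factors into the product of the counts of the two pieces.

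First, I would verify that an anti-involutive $f : m \to n$ genuinely contributes $m$ pairwise distinct, loop-free edges of $K_n$. Fixed-point-freeness (implied by anti-involutivity) rules out loops, and distinctness as unordered edges follows from anti-involutivity itself: if $\{x, f(x)\} = \{x', f(x')\}$ with $x \ne x'$, then $x = f(x')$ and $x' = f(x)$, giving $f(f(x')) = x'$, contradicting $f(f(y)) \ne y$. By Proposition~\ref{nonidempcounting1}, there are $I(m,n)$ choices for $f$; each fixes a colour and orientation for $m$ specific edges and thus contributes weight $y^m$. Hence the total weight of the first piece is $I(m,n) \cdot y^m$.

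For the second piece, once $f$ is fixed, the remaining $\binom{n}{2}-m$ edges of $K_n$ form a graph on $n$ vertices whose edges must each receive some colour among the $k$ symmetric or $\ell$ directed colours (plus an orientation in the directed case), with no further constraint. Since Lemma~\ref{nonskolemlemma} applies to any graph and depends on the number of edges only, the weighted number of such labellings is exactly $L_{k,\ell}\bigl(\binom{n}{2}-m,\, w_1,\dots,w_k,x_1,\dots,x_\ell\bigr)$. Multiplying the two independent factors yields the right-hand side of Equation~\ref{twoconfunctionone}.

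Conversely, any valid weighted labelling of $K_n$ decomposes uniquely in this way: reading off the colour-$(\ell+1)$ edges together with their orientations recovers $f$, and the remaining labelled edges furnish the second factor. Thus each labelling is counted exactly once, proving the identity. The only real obstacle is the bookkeeping verification of this edge-disjoint decomposition and the check that the boundary cases (when $m=0$, when $\binom{n}{2}-m=0$, or when $k=\ell=0$) are correctly handled by the stated conventions $I(0,n)=1$, $y^0=1$, and the extended values of $L_{k,\ell}$; in each case both sides reduce to the appropriate product.
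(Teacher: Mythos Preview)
Your proof is correct and follows essentially the same multiplicative decomposition as the paper: count the anti-involutive functions via Proposition~\ref{nonidempcounting1}, note that each such $f$ occupies exactly $m$ distinct edges contributing $y^m$, and then invoke Lemma~\ref{nonskolemlemma} on the remaining $\binom{n}{2}-m$ edges. If anything, your version is more careful than the paper's, which simply asserts that the tuples of $f$ cover $m$ edges without spelling out the distinctness argument or the converse direction.
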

%
%
%
%
%
%

%
The following result will ultimately help us in counting 
the ways of connecting two \emph{different} blocks to each other with $2$-tables.
%
%

%
\begin{proposition}\label{nonidempcounting}
Let $A\not=\emptyset$ and $B\not=\emptyset$ be
disjoint finite sets, $|A| = M$ and $|B| = N$.
Let $A_m\subseteq A$ and $B_n\subseteq B$ be 
sets of sizes $m$ and $n$, respectively. There exist
\begin{equation}\label{bipartitecccccc}
{K(m,M,n,N) :=
\sum\limits_{i = 0}^{\textcolor{black}{i\, =\, \mathit{min}(m,n)}}(-1)^{i}
\binom{m}{i}\binom{n}{i}\bigl(i! \cdot M^{(n - i)}\cdot N^{(m - i)}\bigr)}
\end{equation}
ways to define two functions $f:A_m\rightarrow B$ and $g:B_n\rightarrow A$
that are nowhere inverses of each other.
\end{proposition}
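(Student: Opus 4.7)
The approach is inclusion--exclusion over the ``bad events'' that obstruct the nowhere-inverse condition. Call a pair $(u,v) \in A_m \times B_n$ a \emph{mutual inverse pair} of $(f,g)$ if $f(u) = v$ and $g(v) = u$. Observe that $(f,g)$ fails to be nowhere inverses precisely when at least one mutual inverse pair exists: if $g(f(u)) = u$ for some $u \in A_m$, then setting $v := f(u)$ gives $v \in B_n$ with $g(v) = u$, so $(u,v)$ is a mutual inverse pair, and symmetrically for violations coming from the $B_n$ side (using that $f,g$ are functions, so the witness on either side is forced). Thus $K(m,M,n,N)$ should be interpretable as an inclusion--exclusion count of pairs $(f,g)$ with no mutual inverse pair.

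To set up the sum, for each $i \in \{0, \dots, \min(m,n)\}$ I would count tuples $(S, T, \phi, f, g)$ in which $S \subseteq A_m$ and $T \subseteq B_n$ both have size $i$, $\phi : S \to T$ is a bijection, and the functions $f : A_m \to B$, $g : B_n \to A$ satisfy $f(u) = \phi(u)$ and $g(\phi(u)) = u$ for every $u \in S$. There are $\binom{m}{i}\binom{n}{i}\, i!$ choices for $(S,T,\phi)$; given these constraints, $f$ is fixed on $S$ and free on $A_m \setminus S$ (yielding $N^{m-i}$ extensions into $B$), while $g$ is fixed on $T$ and free on $B_n \setminus T$ (yielding $M^{n-i}$ extensions into $A$). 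So the number of such tuples is
\[
T(i) \;:=\; \binom{m}{i}\binom{n}{i}\, i! \cdot M^{n-i} \cdot N^{m-i},
\]
which matches the $i$-th summand of $K(m,M,n,N)$ up to the sign $(-1)^i$.

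To finish, I would fix any pair $(f,g)$ having exactly $j \geq 0$ mutual inverse pairs and count how often it appears in $T(i)$. Any triple $(S, T, \phi)$ compatible with $(f,g)$ must designate $i$ of the $j$ mutual inverse pairs of $(f,g)$ (the bijection $\phi$ is forced by that choice), so $(f,g)$ contributes $\binom{j}{i}$ to $T(i)$. The total contribution of $(f,g)$ to $K(m,M,n,N)$ is therefore $\sum_{i=0}^{j}(-1)^i\binom{j}{i}$, which by Equation~(\ref{alternatingsum}) equals $1$ when $j = 0$ and $0$ otherwise. Hence $K(m,M,n,N)$ enumerates exactly the pairs $(f,g)$ with no mutual inverse pair, i.e., the nowhere-inverse pairs, as required.

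The only real bookkeeping subtlety is the initial reduction from the two-sided nowhere-inverse condition to the single notion of a mutual inverse pair; once that identification is made, the argument follows the standard inclusion--exclusion template, so no serious obstacle is anticipated.
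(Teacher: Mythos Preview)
Your proposal is correct and follows essentially the same inclusion--exclusion argument as the paper: both count, for each $i$, the tuples consisting of a pair $(f,g)$ together with a designated size-$i$ set of ``inverse pairs'' (your $(S,T,\phi)$ is the paper's $(A',B')$ with the bijection $\phi = f\upharpoonright A'$ made explicit), obtain the same term $\binom{m}{i}\binom{n}{i}\,i!\cdot M^{n-i}N^{m-i}$, and then use Equation~(\ref{alternatingsum}) to see that each $(f,g)$ with $j$ inverse pairs contributes $\sum_i(-1)^i\binom{j}{i}$. Your explicit reduction of the two-sided nowhere-inverse condition to the absence of a mutual inverse pair is a point the paper leaves implicit, so your write-up is in fact slightly more careful there.
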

\begin{proof}
Fix some $i\leq \mathit{min}(m,n)$, and fix two sets $A_i\subseteq A_m$ and $B_i\subseteq B_n$,
both of size $i$. There exist $\bigl(i! \cdot M^{(n - i)}\cdot N^{(m - i)}\bigr)$
ways to define a pair of functions $f: A_m\rightarrow B$ and $g:B_n\rightarrow A$
such that $f\upharpoonright A_i$ and $g\upharpoonright B_i$
are bijections and inverses of each other; here $i!$ is the number of 
ways the two functions can be defined in restriction to $A_i$ and $B_i$ so
that they become inverses of each other over $A_i$ and $B_i$. 
(Note that $f$ and $g$ can be inverses elsewhere too.) Thus 
$$\binom{m}{i}\binom{n}{i}\bigl(i! \cdot M^{(n - i)}\cdot N^{(m - i)}\bigr)$$
gives the number of tuples $(f,g,A',B')$ 
such that $f:A_m\rightarrow B$ and $g:B_n\rightarrow A$ are functions 
and $A'\subseteq A_m$ and $B'\subseteq B_n$ sets of size $i$
such that $f\upharpoonright A'$ and $g\upharpoonright B'$ are 
inverses of each other.
Now, fix two sets $A_j\subseteq A_m$ and $B_j\subseteq B_n$ of
size $j$ both. Fix two functions $f:A_m\rightarrow B$ and $g:B_n\rightarrow A$ that are inverses of 
each other on $A_j$ and $B_j$ and \emph{nowhere else}.
Thus the pair $f,g$ is counted in the alternating sum of
Equation \ref{bipartitecccccc} exactly 
%
%
%
$S(j) := \binom{j}{0} - \binom{j}{1} + \binom{j}{2} - \dots * \binom{j}{j}$
%
%
%
times, where $*$ is $+$ if $j$ is even and $-$ otherwise.
By Equation \ref{alternatingsum}, $S(j)$ is zero when $j\not= 0$ and one when $j= 0$.
Thus the pair $f,g$ gets counted zero times if $j\not=0$ and otherwise once.
\end{proof}
\textcolor{black}{We also define $K(m,M,n,N) := 0$
for any $m\leq M$ and $n\leq N$ with $M=0\not=n$ or $N=0\not=m$.
Furthermore, we define $K(0,0,0,N)  = K(0,M,0,0)  = 1$
for all $M,N\in\mathbb{N}$.}
The next result, Proposition \ref{bipartitecounting1}, extends
Proposition \ref{nonidempcounting} so
that also the non-witnessing edges will be taken into account. To formulate 
the result, we define that for disjoint finite sets $A$ and $B$,
the \emph{complete bipartite graph on $A\times B$} is the
undirected bipartite graph with the set $\{\, \{a,b\}\, |\, a\in A,\ b\in B\, \}$ of edges.
%
%
%
%

%
\begin{proposition}\label{bipartitecounting1}
Let $A$ and $B$ be
finite disjoint sets, $|A| = M$ and $|B| = N$.
Let $A_m\subseteq A$ and $B_n\subseteq B$ be 
sets of sizes $m$ and $n$, respectively.
Let $w_1,\dots , w_k$ and $x_1,\dots , x_{\ell},y,z$ be weights.
The function  
\begin{multline}\label{twoconfunction2}
\scalebox{0.91}[1]{$P_{\empty_{k,\ell+2}}(m,M,n,N,w_1,\dots , w_k,x_1,\dots ,x_{\ell}, y, z) :=$}\\
\scalebox{0.91}[1]{$K(m,M,n,N) \cdot {y}^m z^n \cdot L_{\empty_{k, \ell}}(MN - m - n,\ w_1,
\dots , w_k,x_1,\dots ,x_{\ell})$}
\end{multline}
gives the weighted number of labellings of the complete bipartite graph on $A\times B$  
with $k$ symmetric and $\ell+2$ directed colours 
with weights $w_1,\dots , w_k$ and $x_1,\dots , x_{\ell},y,z$
such that the directed colours $\ell +1$ and $\ell + 2$
define, respectively, functions $f: A_m\rightarrow B$ and $g: B_n\rightarrow A$
that are nowhere inverses of each other.
\end{proposition}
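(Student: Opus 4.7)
The plan is to decompose a valid weighted labelling into two independent pieces: the edges carrying the two distinguished directed colours $\ell{+}1$ and $\ell{+}2$, and everything else. The complete bipartite graph on $A\times B$ has $MN$ edges. In any valid labelling, the edges carrying colour $\ell{+}1$ must, together with their prescribed orientations, form the graph of a function $f:A_m\to B$, which uses exactly $m$ edges (one per element of $A_m$, oriented from $A$ to $B$). Similarly, the edges of colour $\ell{+}2$ form the graph of $g:B_n\to A$ and use exactly $n$ edges oriented from $B$ to $A$.

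The first step is to observe that the ``nowhere inverses'' condition on $(f,g)$ is equivalent to the sets of edges used by $f$ and by $g$ being disjoint. Indeed, if $f(a)=b$ and $g(b)=a$ for some $a\in A_m$, $b\in B_n$, then the single undirected edge $\{a,b\}$ would have to simultaneously carry both colours $\ell{+}1$ and $\ell{+}2$ with opposite orientations, which is impossible in a labelling; conversely, any pair of functions whose edge-sets are disjoint yields a consistent partial labelling. Hence, by Proposition~\ref{nonidempcounting}, the number of admissible pairs $(f,g)$ is exactly $K(m,M,n,N)$.

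Once $(f,g)$ is fixed, the contribution of these $m+n$ edges to the total weight is precisely $y^m z^n$, since each of the $m$ edges of colour $\ell{+}1$ contributes weight $y$ and each of the $n$ edges of colour $\ell{+}2$ contributes weight $z$. The remaining $MN-m-n$ edges of the complete bipartite graph may be labelled freely using the $k$ symmetric and $\ell$ directed colours with weights $w_1,\dots,w_k$ and $x_1,\dots,x_\ell$. Since Lemma~\ref{nonskolemlemma} applies to an arbitrary graph with a given number of edges, the weighted count of these extensions is $L_{k,\ell}(MN-m-n,w_1,\dots,w_k,x_1,\dots,x_\ell)$, independently of which $(f,g)$ was chosen.

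Multiplying the three independent factors yields the formula in \eqref{twoconfunction2}, finishing the count. The only point that requires a little care is handling degenerate situations where $M$, $N$, $m$, $n$ or $MN-m-n$ equals zero; these are covered by the conventions fixed immediately before this proposition, namely $K(0,M,0,0)=K(0,0,0,N)=1$, $K(m,M,n,N)=0$ whenever $M=0\neq n$ or $N=0\neq m$, and $L_{0,0}(0)=1$ with $L_{0,0}(N)=0$ for $N>0$, so that the identity still holds verbatim at the boundary. I expect the main (mild) obstacle to be articulating cleanly the bijection between valid labellings and triples consisting of a pair $(f,g)$ of nowhere-inverse functions plus an arbitrary labelling of the complementary edge set; once this is pinned down, the weight factorisation and the appeal to the previous lemma and proposition finish the argument.
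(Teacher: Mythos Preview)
Your proof is correct and follows the same decomposition as the paper: count the nowhere-inverse pairs $(f,g)$ via Proposition~\ref{nonidempcounting}, multiply by the weight $y^m z^n$ of the $m+n$ edges they occupy, and use Lemma~\ref{nonskolemlemma} for the remaining $MN-m-n$ edges. You are in fact slightly more explicit than the paper in justifying why the edge sets of $f$ and $g$ are disjoint (hence why exactly $MN-m-n$ edges remain), which the paper leaves implicit.
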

\begin{proof}
The relatively easy proof is given in Appendix \ref{JLX}.
\end{proof}
The results so far in this section provide us with ways of counting in
cases where witnesses are found via $2$-types that are not both ways witnessing.
We now deal with the remaining cases.
Recall that $n!!$ denotes the standard \emph{double factorial} operation defined
such that for example $7!! = 7\cdot 5 \cdot 3 \cdot 1$ and $8!! = 8\cdot 6 \cdot 4 \cdot 2$.
We define the function $F:\mathbb{N}\rightarrow\mathbb{N}$ 
such that $F(0) = 1$ and for all $m\in \mathbb{Z}_+$, we have
$F(m) = (m-1)!!$ if $m$ is even and $F(m) = 0$ otherwise. It is well known and 
easy to show that $F(m)$ is precisely the number of \emph{perfect matchings} of the 
complete graph $G$ with the set $m$ of vertices, i.e., the number of \emph{$1$-factors} of a
graph of order $m$ (and with the set $m$ of vertices). By a perfect matching of
the set $m$, we refer to a perfect matching of the complete graph with the vertex set $m$.
The following is easy to prove (see Appendix \ref{gghhlljj}).
\begin{proposition}\label{internalcounting2}
Let $n$ and $m\leq n$ be nonnegative integers, and
let $w_1,\dots , w_k,y$ and $x_1,\dots , x_{\ell}$ be weights. The function
\begin{multline}\label{dunnohowmaniethequation}
S_{\empty_{k+1,\ell}}(m,n,w_1,\dots , w_k,y,x_1,\dots ,x_{\ell})\, :=\\ F(m)\cdot y^{m/2}
\cdot L_{\empty_{k,\ell}}\bigl(\binom{n}{2} - \lfloor m/2 \rfloor,\, w_1,\dots , w_k,x_1,
\dots ,x_{\ell}\bigr)
%
\end{multline}
gives the weighted number of labellings of the complete graph
with the set $n$ of vertices
with $k+1$ symmetric and $\ell$ directed colours
with weights $w_1,\dots , w_k,y$ and $x_1,\dots , x_{\ell}$
such that the edges of the symmetric colour $k+1$ define a
perfect matching of the set $m\subseteq n$.
\end{proposition}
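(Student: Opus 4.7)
The plan is to count the weighted labellings by first fixing the perfect matching of the symmetric colour $k+1$ and then labelling the remaining edges independently, invoking Lemma \ref{nonskolemlemma} for the latter step.

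First I would dispose of the odd case. If $m$ is odd, then no perfect matching of $m$ exists, so the weighted number of labellings under the stated constraint is $0$; this matches the definition since $F(m) = 0$ in that case. From here on I may assume $m$ is even, so $\lfloor m/2 \rfloor = m/2$.

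Next, I would count the desired labellings by the multiplication principle. A labelling of the required form is specified by (i) a perfect matching $P$ of the vertex set $m \subseteq n$, whose edges are the ones receiving symmetric colour $k+1$; and (ii) a labelling of the remaining $\binom{n}{2} - m/2$ edges of $K_n$ using the other $k$ symmetric colours (with weights $w_1,\dots,w_k$) and the $\ell$ directed colours (with weights $x_1,\dots,x_\ell$). For (i), the number of perfect matchings of the set $m$ is exactly $F(m) = (m-1)!!$, and each such matching contributes total weight $y^{m/2}$ since it consists of $m/2$ edges all of symmetric colour $k+1$ with weight $y$. For (ii), the remaining edges form a fixed set of cardinality $\binom{n}{2} - m/2$, and by Lemma \ref{nonskolemlemma} the weighted number of labellings of any such edge set with $k$ symmetric and $\ell$ directed colours is precisely $L_{k,\ell}\bigl(\binom{n}{2} - m/2,\, w_1,\dots,w_k,x_1,\dots,x_\ell\bigr)$, independently of which edges they are.

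Since the choices in (i) and (ii) are independent and the total weight of a labelling factors as the product of edge weights, summing over all perfect matchings $P$ of $m$ and all completions yields
\[
F(m)\cdot y^{m/2}\cdot L_{k,\ell}\bigl(\tbinom{n}{2}-\lfloor m/2\rfloor,\, w_1,\dots,w_k,x_1,\dots,x_\ell\bigr),
\]
which is the stated $S_{k+1,\ell}$. The only mildly delicate point is ensuring Lemma \ref{nonskolemlemma} is applied to a legitimate argument; this is automatic because $m \leq n$ implies $m/2 \leq \binom{n}{2}$ whenever $n \geq 2$, and the boundary cases $n \in \{0,1\}$ force $m = 0$, where the formula reduces to $L_{k,\ell}\bigl(\binom{n}{2},\dots\bigr)$ applied to the complete graph on $n$ vertices, again matching the definition. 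No step presents a serious obstacle; the main subtlety is just bookkeeping the parity of $m$ and confirming that $\lfloor m/2\rfloor$ coincides with $m/2$ whenever $F(m) \neq 0$.
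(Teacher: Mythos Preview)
Your proof is correct and follows essentially the same approach as the paper's: dispose of the odd case via $F(m)=0$, then for even $m$ factor the count as (number of perfect matchings of $m$) $\times$ (weight $y^{m/2}$ of the matching edges) $\times$ (weighted labellings of the remaining $\binom{n}{2}-m/2$ edges via Lemma~\ref{nonskolemlemma}). The paper's version is terser but the argument is the same.
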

Let $F':\mathbb{N}\times\mathbb{N}\rightarrow\mathbb{N}$ be the function
such that $F'(n,m) = n!$ if $n = m$ and $F'(n,m) = 0$ otherwise.
A perfect matching between two disjoint sets $S$ and $T$ is a 
perfect matching of the complete bipartite graph on $S\times T$.
The following is immediate.
\begin{proposition}\label{bipartitecounting22}
Let $A$ and $B$ be
finite disjoint finite sets, $|A| = M$ and $|B| = N$.
Let $A_m\subseteq A$ and $B_n\subseteq B$ be 
sets of sizes $m$ and $n$, respectively.
The function
\begin{multline}\label{twoconfunction22}
%
%
%
%
T_{\empty_{k+1,\ell}}(m,M,n,N,w_1,\dots , w_k,y,x_1,\dots ,x_{\ell}) :=\\
F'(n,m) \cdot {y}^n \cdot L_{\empty_{k, \ell}}(MN - n,\ w_1,\dots , w_k,x_1,\dots ,x_{\ell})
\end{multline}
gives the weighted number of labellings of the complete bipartite graph on $A\times B$ 
with $k+1$ symmetric and $\ell$ directed colours 
with weights $w_1,\dots , w_k,y$ and $x_1,\dots , x_{\ell}$
such that the symmetric colour $k+1$ defines a 
perfect matching between $A_m$ and $B_n$.
\end{proposition}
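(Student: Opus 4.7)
The plan is to establish the formula by a direct combinatorial factorization of the weighted count into (a) the contribution of the edges carrying the distinguished symmetric colour $k+1$, which must form precisely the perfect matching, and (b) the contribution of the labellings of the remaining edges. First, if $m \neq n$, then there is no perfect matching between $A_m$ and $B_n$, so the weighted number of valid labellings is $0$; this coincides with the claimed value $F'(n,m) \cdot y^n \cdot L_{k,\ell}(MN - n, \ldots) = 0$ since $F'(n,m) = 0$ in that case. Hence it suffices to treat the case $m = n$.

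Assume therefore that $m = n$. Every valid labelling arises uniquely from a pair consisting of (i) a perfect matching $\mu$ between $A_m$ and $B_n$ and (ii) a labelling of the remaining edges of the complete bipartite graph on $A \times B$ using only the $k$ symmetric colours $1,\dots,k$ and the $\ell$ directed colours $1,\dots,\ell$; conversely, every such pair yields a valid labelling. The number of perfect matchings $\mu$ between $A_m$ and $B_n$ is precisely the number of bijections $A_m \to B_n$, namely $n! = F'(n,m)$. Each such $\mu$ forces exactly $n$ edges of the bipartite graph to receive the symmetric colour $k+1$, contributing total weight $y^n$. The remaining $MN - n$ edges are then labelled freely with the $k$ symmetric and $\ell$ directed colours of weights $w_1,\dots,w_k$ and $x_1,\dots,x_\ell$, and applying Lemma \ref{nonskolemlemma} to the graph formed by these $MN - n$ edges yields the factor $L_{k,\ell}(MN - n,\, w_1, \dots, w_k, x_1, \dots, x_\ell)$.

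Multiplying the three independent factors $F'(n,m)$, $y^n$, and $L_{k,\ell}(MN - n,\, w_1, \dots, w_k, x_1, \dots, x_\ell)$ yields exactly the function $T_{k+1,\ell}$ of equation (\ref{twoconfunction22}). There is no real obstacle; the only point that deserves mention is that Lemma \ref{nonskolemlemma} is being applied to the subgraph on the $MN - n$ non-matching edges rather than to the whole bipartite graph, but this is unproblematic since the lemma is stated for an arbitrary $N$-edge graph and not only for complete (bipartite) ones.
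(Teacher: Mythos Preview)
Your proof is correct and is precisely the argument the paper has in mind: the paper declares the proposition ``immediate'' and gives no proof, but the intended reasoning is exactly the factorization you wrote, mirroring the proofs of the parallel Propositions~\ref{internalcounting1}, \ref{bipartitecounting1} and~\ref{internalcounting2}. Your remark about applying Lemma~\ref{nonskolemlemma} to an arbitrary $(MN-n)$-edge graph is the only point worth flagging, and you handled it correctly.
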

%
%
%
%
%
%

%
\subsection{Defining the functions $N_{\sigma\tau}$}\label{finalsn}

We now discuss how the functions $N_{\sigma\tau}$ are defined
for all pairs $\sigma\tau$ of block types, thereby completing the 
definition of Equation \ref{ultimatecountingalg}.

%
%

%
Fix a pair $\sigma\tau$ of block types.
%
%
%
%
Let $y$ and $z$, respectively, be the
weights of the $2$-tables of the $2$-types $\sigma$ and $\tau$. Let $w_1,\dots , w_k$
(respectively, $x_1,\dots , x_{\ell}$) enumerate the weights of the
symmetric (resp., unsymmetric) $2$-tables $\beta$ that can connect the block $B_{\sigma}$ to
the block $B_{\tau}$ so that neither the resulting $2$-type $\sigma(1)\beta\tau(1)$ 
nor its inverse is witnessing, and furthermore,  $\sigma(1)\beta\tau(1)$ is
coherent. If $\sigma = \tau$, these are the
weights of the coherent $2$-tables that can connect a
point in block $B_{\sigma}$ to another point in the same block so that the 
resulting $2$-type is not witnessing in either direction.
We next consider different cases depending on how $\sigma$ and $\tau$ relate to each other.
We let $\overline{n}$
denote the input tuple to $N_{\sigma\tau}$
with $\overline{n}$ containing the multiplicities $n_{\sigma'\sigma''}$ of 
all cell types $\sigma'\sigma''$.
For a witness $2$-type $\sigma'$,  we let $n_{\sigma'}$ abbreviate
%
the sum
$\sum_{\sigma''\, \in\, \Lambda}\ \ n_{\sigma'\sigma''}$  (recall $\Lambda$ is
the set of all block types).
The witness $2$-type $\sigma'$ is \emph{compatible} with a
witness $2$-type $\sigma''$ if $\sigma'(2) = \sigma''(1)$.
%

\medskip

\noindent
\textbf{Case 1.}
We assume that \textbf{1.a)} $\sigma\not=\tau$;
\textbf{1.b)} $\sigma$ and $\tau$ are
\emph{compatible} with each other; \textbf{1.c)} 
neither $\sigma$ nor $\tau$ is a
both ways witnessing $2$-type.
By
Proposition \ref{bipartitecounting1},
the weight contributed by all the
edges from $B_{\sigma}$ to $B_{\tau}$ is thus given by
%
%
%

\smallskip 

\scalebox{1}[1]{$
N_{\sigma\tau}(\overline{n}) :=
P_{\empty_{k,\ell+2}}(n_{\sigma\tau},n_{\sigma}, n_{\tau\sigma},
n_{\tau},w_1, ... , w_k,x_1, ... ,x_{\ell}, y, z).$}
%
%
%

\smallskip

\noindent
which defines $N_{\sigma\tau}$ under these
particular assumptions.
The remaining cases are similar but use different functions
defined in the previous section. For example, when $\sigma = \tau$ 
and $\sigma$ is not two-ways witnessing, we 
use the function $J_{\ell,k+1}$ from Equation \ref{twoconfunctionone}
in Proposition \ref{internalcounting1}; see the Appendix
\ref{definingfunctionsappendix} (Case 4) for the full
details. All the remaining cases are also discussed in Appendix \ref{definingfunctionsappendix}.
By inspecting the operations of Equation \ref{ultimatecountingalg}, we
conclude the following.

\begin{theorem}
The weighted model counting problem of each two-variable logic sentence with a
functionality axiom is in $\mathrm{PTIME}$.
\end{theorem}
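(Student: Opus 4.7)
My plan is to justify that the formula $\mathcal{W}(n)$ in Equation~\ref{ultimatecountingalg} correctly computes $\mathrm{WFOMC}(\Phi_0, n, w, \bar{w})$ and is evaluable in polynomial time in $n$. By the preliminaries (Lemmas \ref{scottlemma2} and \ref{skolemlemma} combined with the further syntactic normalization discussion, including elimination of nullary predicates), it suffices to handle a sentence $\Phi_0 = \forall x \forall y\, \varphi_0^\forall(x,y) \wedge \forall x \exists^{=1} y\, \varphi_0^\exists(x,y)$ where the matrices are quantifier-free, contain at least one binary and no higher-arity symbols. First I would verify the claimed reduction from $\Phi_0$ to $\Phi$: the construction of $\Phi$ guarantees that in any model where $S$ and $T$ are interpreted as two distinct singletons, the unique witness of each element $u$ is forced to be an element different from $u$ (the $S$/$T$ singletons absorb the degenerate situation in which $u$ would be its own $\varphi_0^\exists$-witness). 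Hence every model of $\Phi_0$ lifts to exactly $n(n-1)$ models of $\Phi$ in $\mathcal{M}$ by choosing the singletons for $S$ and $T$, and the weights are unaffected because $w$ and $\bar{w}$ assign $1$ to $S$ and $T$. Consequently $\mathrm{WFOMC}(\Phi_0, n, w, \bar w) = \frac{1}{n(n-1)} \mathrm{WFOMC}(\Phi, n, w, \bar w) \upharpoonright \mathcal{M}$, with the $n = 1$ case handled separately.

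Next I would argue the partition-refinement step. Since every element of a model of $\Phi$ has a unique $\varphi^\exists$-witness that is distinct from itself, each element has a well-defined block type and cell type in $\Lambda \times \Lambda$, giving the cell configuration $(C^{\mathfrak{M}}_{\sigma\tau})_{\sigma,\tau}$ described in Section~\ref{partitioningmodels}. The set $\mathcal{M}^\Phi_n$ partitions into $(\mathcal{M}^\Phi_{n, \Gamma})_\Gamma$ indexed by cell configurations, and two cell configurations inducing the same multiplicity configuration $\Delta$ yield classes of identical total weight (by symmetry of the weight functions under bijections of the domain). Thus the sum of weights over $\mathcal{M}^\Phi_{n, \Delta}$ equals $\binom{n}{n_{\sigma_1\sigma_1}, \dots, n_{\sigma_k\sigma_k}}$ times the sum over any single fixed $\Gamma$ inducing $\Delta$; this is the multinomial factor in $\mathcal{U}(n)$.

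The central task is to compute the weight of one fixed $\mathcal{M}^\Phi_{n,\Gamma}$. Fix such a $\Gamma$, whereby the $1$-type of each element is determined (giving the factor $\prod_\sigma (w_{\sigma(1)})^{n_\sigma}$) and the remaining freedom is exactly the choice of $2$-tables on each unordered pair of distinct domain elements. Because the witness of a point in $C_{\sigma\tau}$ must lie in $B_\tau$ (as the witness's block type is $\tau$), the choices for pairs inside $B_\sigma$ and between $B_\sigma$ and $B_\tau$ are independent across distinct unordered pairs of blocks $\{\sigma, \tau\}$. For each such pair I would invoke the appropriate counting identity from Section~\ref{combinatorics}: Proposition~\ref{bipartitecounting1} when $\sigma \ne \tau$ and neither is both ways witnessing; Proposition~\ref{bipartitecounting22} when one block type is both ways witnessing and inverse to the other; and for $\sigma = \tau$, Proposition~\ref{internalcounting1} or Proposition~\ref{internalcounting2} depending on whether $\sigma$ is both ways witnessing. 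The pathological subcases (a both ways witnessing $\sigma$ not paired with a compatible inverse $\tau$) contribute factor zero, as discussed in Section~\ref{countingstrategysection} and formalized in Appendix~\ref{definingfunctionsappendix}. In each case the relevant function $N_{\sigma\tau}(\overline{n})$ outputs exactly the weighted number of labellings on the edges between $B_\sigma$ and $B_\tau$ compatible with $\Gamma$ and with $\varphi^\forall$, because the symmetric/directed colours exactly enumerate the allowed coherent $2$-tables (with the directed colours of Propositions~\ref{internalcounting1} and~\ref{bipartitecounting1} reserved for witnessing $2$-tables, and $y, z$ weighted accordingly). Multiplying over all pairs yields the total weight of $\mathcal{M}^\Phi_{n,\Gamma}$, justifying the inner product of Equation~\ref{ultimatecountingalg}.

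The main obstacle is verifying this combinatorial case analysis correctly—particularly ensuring that each unordered pair of distinct domain elements is counted exactly once (never twice for $\{u, v\}$ and $\{v, u\}$ in the $\sigma = \tau$ intra-block setting) and that the bookkeeping of witnessing vs.\ non-witnessing, compatible vs.\ incompatible, and both ways witnessing vs.\ one-way witnessing $2$-types is exhaustive and mutually exclusive. Once this is settled via Propositions~\ref{nonidempcounting1}--\ref{bipartitecounting22}, the final complexity bound is immediate: the number of multiplicity configurations is $O(n^{|\Lambda|^2})$, which is polynomial in $n$ since $|\Lambda|$ depends only on the fixed $\Phi_0$; each $N_{\sigma\tau}$ is a finite sum with at most $O(\mathrm{poly}(n))$ terms, each a product of polynomially many factors computable in polynomial arithmetic time; hence $\mathcal{W}(n)$ is computable in $\mathrm{PTIME}$, and by the reduction from $\Phi_0$ to $\Phi$, so is $\mathrm{WFOMC}(\Phi_0, n, w, \bar w)$.
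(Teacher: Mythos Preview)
Your proposal is correct and follows essentially the same approach as the paper: you recapitulate the reduction from $\Phi_0$ to $\Phi$, the block/cell decomposition into multiplicity configurations, the invocation of Propositions~\ref{nonidempcounting1}--\ref{bipartitecounting22} in the case analysis for $N_{\sigma\tau}$, and the polynomial-time bound via the polynomial number of multiplicity configurations. The paper's own proof amounts to precisely this construction spread across Sections~\ref{partitioningmodels}--\ref{finalsn} and Appendix~\ref{definingfunctionsappendix}, with the theorem concluded by inspection of Equation~\ref{ultimatecountingalg}.
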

%




\section{Weighted model counting for $\mathrm{U}_1$}

In this section we prove that $\mathrm{WFOMC}$  is in $\mathrm{PTIME}$
for each sentence of $\mathrm{U}_1$.
To that end, we first establish the same result for $\mathrm{SU}_1$,
stated as Lemma \ref{maintech} below.
We follow a proof strategy that makes explicit how the 
syntactic restrictions of $\mathrm{SU}_1$ naturally lead to polynomial time 
model counting. We then provide a reduction from $\mathrm{U}_1$ to $\mathrm{SU}_1$.
%
%
%
%

%
\subsection{Weighted model counting for $\mathrm{SU}_1$}\label{fullyuniformsection}
%


%
%
%

Let $\psi(x_1,\dots, x_k)$ be a quantifier-free first-order formula, and
let $\ell \leq k$ be a positive integer.
Let $F$ denote the set of all surjections $[k]\rightarrow [\ell]$.
The conjunction $\bigwedge\{\, \psi(x_{f(1)},\dots,x_{f(k)})\ |\ f\in F\, \}$
is called the \emph{$\ell$-surjective image} of $\psi$.
%

%
\begin{definition}
Let $\varphi$ be a conjunction of $\forall^*$-sentences of $\mathrm{FO}$
(These need not be sentences of $\mathrm{U}_1$ or $\mathrm{SU}_1$.)
We now define the \emph{surjective completion} $\mathit{sur}(\varphi)$ of $\varphi$
by modifying $\varphi$ as follows.

\medskip

\noindent
\textbf{1.)}
Let $k$ be the maximum width of the $\forall^*$-conjuncts of $\varphi$.
We modify $\varphi$ so that for all $i\in [k]$, there exists a conjunct of 
width $i$. This can be ensured by adding dummy conjuncts, if necessary.
We let $\varphi'$ denote the resulting sentence.

\smallskip

\noindent
\textbf{2.)}
We merge the conjuncts of $\varphi'$ with the
same width, so that for example $\forall x \forall y \psi(x,y)
\wedge \forall x \forall y \chi(x,y)$ would become $\forall x \forall y(\psi(x,y)\wedge\chi(x,y))$.
Thus the resulting formula $\varphi''$ is a
conjunction of $\forall^*$-sentences so that no two conjuncts have the same width.

\smallskip

\noindent
\textbf{3.)}
Define $\varphi_k'' := \varphi''$ where $k$ is the maximum 
width of the $\forall^*$-sentences of $\varphi''$.
%
%
%
%
%
Inductively, let $1\leq \ell < k$ and assume we
have defined a sentence
$\varphi_{\ell+ 1}''\  =\ 
\chi_1\wedge\dots \wedge \chi_k$
where each $\chi_i$ is an $\forall^*$-sentence of
width $i$. Let $\psi_{\ell + 1}$ and $\psi_{\ell}$ 
be the matrices of $\chi_{\ell + 1}$ and $\chi_{\ell}$,
so we have\\

\vspace{-3mm}

$\text{ }$\hspace{1.30cm} $\chi_{\ell+1} = \forall x_1\dots\forall x_{\ell+1}\, \psi_{\ell+1}(x_1,\dots,x_{\ell+1}),$\\
$\text{ }$\hspace{1.63cm} $\chi_{\ell}\ \ \, \, \,  = \forall x_1\dots\forall x_{\ell}\,
\psi_{\ell}(x_1,\dots,x_{\ell}).$\\

\vspace{-3mm}

\noindent
Let $\psi_{\ell}'$ denote the $\ell$-surjective image of $\psi_{\ell + 1}$.
Replace the conjunct $\chi_{\ell}$ of $\varphi_{\ell+1}''$ by
$\forall x_1\dots\forall x_{\ell}( \psi_{\ell}\wedge \psi_{\ell}')$.
Define $\varphi_{\ell}''$ to be the resulting
modification of $\varphi_{\ell + 1}''$.
Define $\mathit{sur}(\varphi)$ to be the formula $\varphi_1''$.
%
%
%
%
%
%
%
%
\end{definition}
%

%
%

%
%
%
%
Let $\varphi := \forall x_1\dots \forall x_k\psi$ be an $\forall^*$-sentence.
We let $\mathit{diff}(\varphi)$ denote the sentence
$\forall x_1\dots\forall x_k(\mathit{diff}(x_1,\dots, x_k) \rightarrow \psi)$, letting
$\mathit{diff}(x_1) := \top$.
For a conjunction $\varphi' := \varphi_1\wedge\dots \wedge\varphi_k$ of $\forall^*$-sentences, we
define $\mathit{diff}(\varphi')
:=\mathit{diff}(\varphi_1)\wedge\dots \wedge\mathit{diff}(\varphi_k)$.
\begin{lemma}\label{difsurlemma}
We have $\varphi\equiv \mathit{diff}(\mathit{sur}(\varphi))$
for any conjunction $\varphi$ of first-order\, $\forall^*$-sentences.
\end{lemma}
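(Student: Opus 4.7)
The plan is to reduce the equivalence to a semantic decomposition of each $\forall^*$-sentence by the number of distinct values realized in an assignment, and then to check that the iterative construction of $\mathit{sur}(\varphi)$ assembles the right conjunction of matrices. Steps~1 and~2 (padding and merging conjuncts of the same width) preserve equivalence manifestly, so the real work concerns Step~3.

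The central decomposition lemma is that for any $\forall^*$-sentence $\chi = \forall x_1\dots\forall x_k\,\psi$,
\[
\chi \ \equiv\ \bigwedge_{\ell=1}^{k}\forall y_1\dots\forall y_\ell\bigl(\mathit{diff}(y_1,\dots,y_\ell)\rightarrow \psi^{(\ell)}(y_1,\dots,y_\ell)\bigr),
\]
where $\psi^{(\ell)}$ denotes the $\ell$-surjective image of $\psi$ and we take $\mathit{diff}(y_1):=\top$. The proof partitions assignments by their number of distinct values: a tuple $(a_1,\dots,a_k)$ realizing exactly $\ell$ distinct values has the form $(b_{f(1)},\dots,b_{f(k)})$ for distinct $b_1,\dots,b_\ell$ and some surjection $f\colon[k]\to[\ell]$, and conversely every such presentation uses exactly $\ell$ distinct values.

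Let $\psi_m$ denote the width-$m$ matrix of $\varphi''$ after Steps~1--2, and let $\Psi_\ell$ denote the matrix of the width-$\ell$ conjunct of $\mathit{sur}(\varphi)$; since Step~3 modifies the width-$\ell$ conjunct only at the transition from $\varphi_{\ell+1}''$ to $\varphi_\ell''$, this coincides with the width-$\ell$ matrix of $\varphi_\ell''$. I would prove by downward induction on $\ell$ that
\[
\Psi_\ell \ \equiv\ \psi_\ell \,\wedge\, \bigwedge_{m=\ell+1}^{k} \psi_m^{(\ell)}.
\]
The base case $\ell=k$ is immediate. For the inductive step, Step~3 prescribes $\Psi_\ell = \psi_\ell\wedge \Psi_{\ell+1}^{(\ell)}$; since the $\ell$-surjective image distributes over conjunction, the hypothesis yields $\Psi_\ell \equiv \psi_\ell \wedge \psi_{\ell+1}^{(\ell)}\wedge \bigwedge_{m=\ell+2}^{k}(\psi_m^{(\ell+1)})^{(\ell)}$, so it remains to verify the combinatorial identity $(\psi_m^{(\ell+1)})^{(\ell)}\equiv \psi_m^{(\ell)}$ for $m\geq\ell+2$. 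One direction (any composition of surjections is a surjection) is immediate; the other amounts to showing that every surjection $h\colon[m]\to[\ell]$ factors as $h=g\circ f$ with $f\colon[m]\to[\ell+1]$ and $g\colon[\ell+1]\to[\ell]$ both surjective. This factorization exists because $m\geq\ell+1$ forces some $i\in[\ell]$ with $|h^{-1}(i)|\geq 2$, and we can split one element of $h^{-1}(i)$ off to a fresh value $\ell+1$ to build $f$, taking $g$ to be the identity on $[\ell]$ with $g(\ell+1):=i$.

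Combining: applying the decomposition lemma to each conjunct $\chi_m$ of $\varphi''$ and grouping the resulting double conjunction by $\ell$ rewrites $\varphi''$ as $\bigwedge_{\ell=1}^{k}\forall y_1\dots\forall y_\ell(\mathit{diff}\rightarrow \bigwedge_{m=\ell}^{k}\psi_m^{(\ell)})$. The only discrepancy between this and $\mathit{diff}(\mathit{sur}(\varphi))$ is that the latter has $\psi_\ell$ in place of $\psi_\ell^{(\ell)}$ at the $m=\ell$ position, and these agree under the $\forall y_1\dots\forall y_\ell(\mathit{diff}\rightarrow\cdot)$ wrapper because permuting universally quantified distinct variables is a no-op. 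Since $\varphi\equiv\varphi''$ by Steps~1--2, this completes the argument. The main obstacle is the surjection-factorization step that drives the matrix-level induction; beyond that everything is routine bookkeeping about distributing surjective images over conjunction and commuting quantifier blocks with outer conjunction.
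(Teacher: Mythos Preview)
Your proof is correct, but it takes a substantially more explicit route than the paper. The paper's argument is a two-line conceptual observation: first $\varphi\equiv\mathit{sur}(\varphi)$ (adding surjective images of matrices to lower-width conjuncts only adds redundant consequences of the original universal sentence), and then $\mathit{sur}(\varphi)\equiv\mathit{diff}(\mathit{sur}(\varphi))$ (because the surjective completion ensures that every non-injective instantiation of a width-$(\ell{+}1)$ conjunct is already recorded as an injective instantiation of a lower-width conjunct, so adding the $\mathit{diff}$ guard loses nothing). You instead compute both sides outright: you establish a decomposition lemma expressing any $\forall^*$-sentence as a conjunction of $\mathit{diff}$-guarded surjective images, and you prove by downward induction a closed-form description $\Psi_\ell\equiv\psi_\ell\wedge\bigwedge_{m>\ell}\psi_m^{(\ell)}$ of the matrices of $\mathit{sur}(\varphi)$, using the surjection-factorization identity $(\psi_m^{(\ell+1)})^{(\ell)}\equiv\psi_m^{(\ell)}$. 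Your approach makes every step fully explicit and in particular makes precise what ``the surjective image of a matrix is pushed to be part of the matrix of a formula with one variable less'' accomplishes; the paper's approach is shorter and more conceptual but leaves the reader to reconstruct exactly the bookkeeping you spell out.
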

\begin{proof}
Clearly $\varphi\equiv\mathit{sur}(\varphi)$. 
Also $\mathit{sur}(\varphi)\equiv\mathit{diff}(\mathit{sur}(\varphi))$, as $\mathit{sur}$ is
based on steps where the surjective image of a matrix is
pushed to be part of the matrix of a formula with one variable less.
\end{proof}
As discussed in the
preliminaries, to prove that the weighted model counting
problem of $\mathrm{SU}_1$-sentences is in $\mathrm{PTIME}$, it
suffices to show this for conjunctions of $\forall^*$-sentences of $\mathrm{SU}_1$ of the type
%
%
%
$\varphi'\ =\ 
\forall x_1\, \psi_1'\ \wedge\ 
\dots \wedge\ \forall x_1 \dots \forall x_{p}\, \psi_p'$
%
%
%
where each $\psi_i'$ is quantifier-free.
Other assumptions justified in the preliminaries
are that $\varphi'$ contains no nullary atoms; $p$ is equal to
the greatest arity of the symbols in $\mathit{voc}(\varphi')$; and $p\geq 2$.
By Lemma \ref{difsurlemma}, $\varphi'$ is 
equivalent to $\varphi'' := \mathit{diff}(\mathit{sur}(\varphi'))$.
We remove the conjunct of width 1 from $\varphi''$ and integrate it to
the conjunct of width 2, so if
\begin{center}
$\varphi'' =
\forall x_1\, \chi_1(x_1)\wedge \forall x_1\forall x_2\bigl(\mathit{diff}(x_1,x_2)\rightarrow \chi_2(x_1,x_2)\bigr)
\wedge \Phi,$
\end{center}
we replace $\varphi''$ by
\begin{center}
$\varphi := \forall x_1\forall x_2\bigl(\mathit{diff}(x_1,x_2)\rightarrow ( \chi_1(x_1) \wedge \chi_2(x_1,x_2))\bigr)
\wedge \Phi.$
\end{center}
(We ignore the case with a one-element domain as
we can simply store and return the answer in that case.)
For the remainder of Section~\ref{fullyuniformsection}, we fix the
obtained sentence $\varphi$ and weight functions $w$ and
$\bar w$ that assign weights to each symbol $R$ in the
vocabulary $\eta$ of $\varphi$; our aim is to compute
$\mathrm{WFOMC}(\varphi,n,w,\bar{w})$. We let
\begin{equation}\label{varphi100equation}
\varphi\ =\ 
\forall x_1\forall x_2\, \psi_2\ \wedge\ 
\dots \wedge\ \forall x_1\dots \forall x_{p}\, \psi_p,
\end{equation}
so the individual matrices are 
denoted by $\psi_i$. We
denote each conjunct $$\forall x_1\dots \forall x_k\, \psi_k$$ by $\varphi_k$.
%
The next two lemmas are crucial for computing $\mathrm{WFOMC}(\varphi,n,w,\bar{w})$
in polynomial time.
\begin{lemma}\label{firstclaim}
$\mathfrak{M}\models\varphi$ iff for all $k\in\{2,\dots , p\}$,
we have $\mathfrak{M}_k\models\varphi_k$ for every $k$-element
submodel $\mathfrak{M}_k$ of\, $\mathfrak{M}$. 
\end{lemma}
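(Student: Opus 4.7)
The plan is to exploit two structural features of the conjuncts $\varphi_k = \forall x_1\dots\forall x_k\,\psi_k$ of $\varphi$. First, each matrix $\psi_k$ is quantifier-free, so its truth under an assignment is determined entirely by the atomic diagram of the image of the assignment; consequently, if $\mathfrak{M}_k$ is a submodel of $\mathfrak{M}$ and $s$ assigns variables into $\mathit{dom}(\mathfrak{M}_k)$, then $\mathfrak{M}_k\models\psi_k[s]$ iff $\mathfrak{M}\models\psi_k[s]$. Second, by the $\mathit{diff}(\mathit{sur}(\cdot))$ construction used to obtain $\varphi$, each matrix has the shape $\mathit{diff}(x_1,\dots,x_k)\to\chi_k$ (for $k=2$, the shape $\mathit{diff}(x_1,x_2)\to(\chi_1(x_1)\wedge\chi_2(x_1,x_2))$), so $\psi_k$ is vacuously true under every non-injective assignment.

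For the ($\Rightarrow$) direction I would fix $k\in\{2,\dots,p\}$ and a $k$-element submodel $\mathfrak{M}_k$ of $\mathfrak{M}$, and check $\mathfrak{M}_k\models\varphi_k$ by verifying $\mathfrak{M}_k\models\psi_k[s]$ for every $s\colon\{x_1,\dots,x_k\}\to\mathit{dom}(\mathfrak{M}_k)$. Since $\mathfrak{M}\models\varphi_k$ by assumption, $\mathfrak{M}\models\psi_k[s]$; the quantifier-free transfer noted above then yields $\mathfrak{M}_k\models\psi_k[s]$.

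For ($\Leftarrow$) I would fix $k$ and an arbitrary assignment $s\colon\{x_1,\dots,x_k\}\to\mathit{dom}(\mathfrak{M})$ and split on injectivity of $s$. If $s$ is not injective, the $\mathit{diff}$ antecedent fails and $\mathfrak{M}\models\psi_k[s]$ vacuously. Otherwise, the image $D:=\{s(x_1),\dots,s(x_k)\}$ has exactly $k$ elements, and the submodel $\mathfrak{M}_k$ of $\mathfrak{M}$ with domain $D$ satisfies $\varphi_k$ by hypothesis, so $\mathfrak{M}_k\models\psi_k[s]$, and the quantifier-free transfer lifts this back to $\mathfrak{M}\models\psi_k[s]$. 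Quantifying over $k$ and $s$ gives $\mathfrak{M}\models\varphi$.

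There is no serious obstacle; the only point that deserves attention is the merged width-two conjunct, where $\psi_2$ additionally encodes $\chi_1(x_1)$. Because the antecedent is still $\mathit{diff}(x_1,x_2)$, the same case split works uniformly, and the single-element domain case that would otherwise lose the $\chi_1$ constraint has already been set aside in the paragraph preceding (\ref{varphi100equation}). I would state both transfer facts as one-line observations at the start of the proof and then run the two directions in parallel, each occupying only a few lines.
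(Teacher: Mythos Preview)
Your proposal is correct and follows essentially the same approach as the paper: the forward direction is preservation of universal sentences under submodels (which you unpack via quantifier-free transfer), and the converse direction exploits the $\mathit{diff}(x_1,\dots,x_k)\to\chi$ shape of each $\psi_k$ to reduce to injective assignments, whose image is a $k$-element submodel covered by the hypothesis. The paper phrases the converse as a short contradiction argument rather than your direct case split, but this is a stylistic difference only.
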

\begin{proof}
The first implication is immediate since universal
sentences are preserved under taking submodels.
For the converse implication, assume that for all $k\in\{2,\dots , p\}$, $\mathfrak{M}_k\models\varphi_k$ for all
submodels $\mathfrak{M}_k$ of $\mathfrak{M}$ of size $k$. Assume that $\mathfrak{M}\not\models\varphi$.
Thus $\mathfrak{M}\not\models\varphi_k$ for some $k$. The matrix $\psi_k$ of $\varphi_k$ is of
the type $\mathit{diff}(x_1,\dots x_k)\rightarrow \psi$, so there exists some $k$-element 
submodel $\mathfrak{M}_k$ of $\mathfrak{M}$
with domain $\{u_1,\dots , u_k\}$ such that $\mathfrak{M}_k\not\models\psi_k(u_1,\dots , u_k)$.
This is a contradiction, so $\mathfrak{M}\models\varphi$.
\end{proof}
Let $\mathfrak{M}$ and $\mathfrak{M}'$ be $\eta$-models such that $\mathfrak{M'}$ is
obtained by changing exactly one fact of span size $k$ from positive to negative or
vice versa. Let $S$ be the $k$-element set spanned by that fact.
Then $\mathfrak{M}$ and $\mathfrak{M}'$ are \emph{$S$-variants} of each other.
%

%
\begin{lemma}\label{secondclaim}
Let $\mathfrak{M}$ and $\mathfrak{M}'$ be $S$-variants of
each other, $|S| > 1$. Let $U\not= S$ be a set of elements of\, $\mathfrak{M}$
such that $|U| = m > 1$. Let $\mathfrak{M}_U$ and $\mathfrak{M}_U'$ be the
submodels of\, $\mathfrak{M}$ and $\mathfrak{M}'$
induced by $U$. Then $\mathfrak{M}_U\models\varphi_m$ iff\, $\mathfrak{M}_U'\models\varphi_m$.
\end{lemma}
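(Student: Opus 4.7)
The plan is to distinguish two cases based on the relationship between $S$ and $U$. The easy case is $S \not\subseteq U$: the single fact on which $\mathfrak{M}$ and $\mathfrak{M}'$ disagree then involves an element outside $U$, so it is not a fact of either induced submodel, and hence $\mathfrak{M}_U = \mathfrak{M}_U'$, making the biconditional trivial.

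The interesting case is $S \subsetneq U$, so $|S| < m$. Here the argument rests on a structural invariant of the matrix $\psi_m$ that I would establish first: every atom occurring in $\psi_m$ is either (i) a \emph{full-span} atom $Rx_{i_1}\dots x_{i_n}$ with $\{x_{i_1},\dots,x_{i_n}\} = \{x_1,\dots,x_m\}$, (ii) a unary atom $Px_i$, or (iii) an identity atom $x_i = x_j$. I would verify this by tracing the invariant through the transformation pipeline that produced $\varphi$: clause~4 of the $\mathrm{SU}_1$ definition forces higher-arity atoms of any $\mathrm{SU}_1$-matrix at width $m$ to span exactly $\{x_1,\dots,x_m\}$; the generalized Scott normal form replaces every nested quantified subformula by an atom of at most one free variable (thanks to one-dimensionality); Skolemization $\mathit{Sk}$ applied to a $\forall\exists$-conjunct introduces only a unary fresh predicate, since its outer universal block has width $1$; and $\mathit{sur}$ preserves the invariant because, if $f:[\ell+1]\to[\ell]$ is a surjection and $Rx_{i_1}\dots x_{i_n}$ is a $\{x_1,\dots,x_{\ell+1}\}$-atom, then $Rx_{f(i_1)}\dots x_{f(i_n)}$ has variable set $f(\{1,\dots,\ell+1\}) = \{1,\dots,\ell\}$ and is therefore a $\{x_1,\dots,x_\ell\}$-atom. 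Merging the width-$1$ conjunct $\chi_1(x_1)$ into $\psi_2$ contributes only unary material and is harmless.

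With the invariant in hand, the biconditional follows by a direct pointwise comparison. Since $\varphi_m = \forall x_1\dots\forall x_m(\mathit{diff}(x_1,\dots,x_m)\to\chi)$, it suffices to check, for every enumeration $(u_1,\dots,u_m)$ of $U$, that $\chi(u_1,\dots,u_m)$ has the same truth value in $\mathfrak{M}_U$ and $\mathfrak{M}_U'$. A full-span atom of $\chi$ instantiated at such a tuple yields a fact whose span is exactly $U$, which is distinct from the single changed fact of span $S \neq U$; unary atoms are unaffected because $|S|>1$ forces the changed relation to have arity at least two; and identity atoms evaluate purely logically. The main obstacle is cleanly establishing the structural invariant — once that is in place, the pointwise comparison is immediate.
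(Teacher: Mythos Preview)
Your proof is correct and follows essentially the same approach as the paper's: both rest on the structural fact that every non-identity atom of $\psi_m$ with at least two distinct variables must be a $\{x_1,\dots,x_m\}$-atom, together with the $\mathit{diff}$ antecedent, so that only facts of span exactly $U$ or span~$1$ are relevant to $\varphi_m$ on $\mathfrak{M}_U$. Your version is simply more explicit --- the case split on $S\not\subseteq U$ versus $S\subsetneq U$ and the verification that $\mathit{sur}$ preserves the full-span invariant fill in details the paper leaves to the reader (your tracing through Scott normal form and Skolemization is harmless but unnecessary here, since the starting point $\varphi'$ of Section~\ref{fullyuniformsection} is already assumed to be an $\mathrm{SU}_1$ conjunction).
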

\begin{proof}
Firstly, if the formula $\varphi_m = \forall x_1\dots \forall x_m\, \psi_m$
contains atoms of arity two or more, then, by the
syntactic restrictions of $\mathrm{SU}_1$, each of those atoms mentions exactly all of
the variables $x_1,\dots x_m$. Secondly, $\psi_m$ is of the form
$\mathit{diff}(x_1,\dots , x_m)\rightarrow \psi$.
\end{proof}
%


%
%
\begin{lemma}\label{maintech}
The weighted model counting problem for
each $\mathrm{SU}_1$-sentence is in $\mathrm{PTIME}$.
\end{lemma}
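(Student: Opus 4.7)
The plan is to compute $\mathrm{WFOMC}(\varphi,n,w,\bar{w})$ by summing over multiplicity profiles of 1-types of domain elements, exploiting Lemmas \ref{firstclaim} and \ref{secondclaim} together with the strong uniformity condition of $\mathrm{SU}_1$ to express each term as a polynomial-size product of \emph{constant-sized} local counts.

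First I would decompose both the weight and the satisfaction condition over spans. Every fact of $\mathfrak{M}$ has a unique span $S\subseteq n$, and the weight $W(\mathfrak{M},w,\bar{w})$ factors as $\prod_{S} W_S(\mathfrak{M})$, where $W_S(\mathfrak{M})$ is the product of $w(R)$ and $\bar{w}(R)$ contributions from facts whose span is exactly $S$. On the constraint side, Lemma \ref{firstclaim} gives $\mathfrak{M}\models\varphi$ iff $\mathfrak{M}_S\models\varphi_{|S|}$ for every $S$ of size $k\in\{2,\dots,p\}$. Because every higher-arity atom of $\psi_k$ mentions all $k$ universally quantified variables (the $\mathrm{SU}_1$ restriction), and by Lemma \ref{secondclaim}, the truth of $\mathfrak{M}_S\models\varphi_k$ depends only on the 1-types of the elements of $S$ and on the span-$S$ facts. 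Hence once all 1-types are fixed, the span-$S$ configurations at different $k$-subsets $S$ are genuinely independent, both in the weight they contribute and in the constraint they must satisfy.

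Let $\alpha_1,\dots,\alpha_r$ enumerate the 1-types over $\mathit{voc}(\varphi)$, and for each $k\in\{2,\dots,p\}$ and each multiset $M$ of $k$ 1-types define $V_k(M)$ to be the weighted count of span-$S$ configurations, for an arbitrary $k$-element $S$ whose elements realize $M$, that make $\mathfrak{M}_S\models\varphi_k$. By the symmetry of the weights and the permutation invariance induced by $\forall x_1\dots\forall x_k$, $V_k(M)$ depends only on $M$ and is a constant of the problem, since the configuration space for span-$S$ facts is of constant size. For a profile $(n_1,\dots,n_r)$ with $n_1+\dots+n_r=n$, the number of $k$-subsets of $n$ whose elements realize $M$ is $N_M := \prod_i \binom{n_i}{M(\alpha_i)}$. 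Combining the above, $\mathrm{WFOMC}(\varphi,n,w,\bar{w})$ equals
\[
\sum_{\substack{(n_1,\dots,n_r)\\ n_1+\dots+n_r=n}}\ \binom{n}{n_1,\dots,n_r}\ \prod_{i=1}^{r}\langle w,\bar{w}\rangle(\alpha_i)^{n_i}\ \prod_{k=2}^{p}\prod_{M}\, V_k(M)^{N_M}.
\]
Since $r$ and $p$ are constants, there are $O(n^r)$ profiles, each $N_M$ is a polynomial-time computable product of binomials, and the values $V_k(M)$ can be tabulated once in constant time, so the whole expression is evaluable in polynomial time.

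I expect the main obstacle to be the careful justification that the span-$S$ configurations at distinct $k$-subsets may indeed be chosen independently and that $V_k(M)$ is well-defined as a function of $M$ alone. Both facts stem directly from the $\mathrm{SU}_1$ uniformity condition, which forbids $\psi_k$ from inspecting facts whose span differs from the evaluating $k$-subset, combined with the symmetry of the weights, but pinning down exactly which facts each matrix $\psi_k$ can inspect, and confirming that the permutation of $S$ used when evaluating $\forall x_1\dots\forall x_k$ never breaks the clean factorization over $S$, is the delicate step on which the whole argument rests.
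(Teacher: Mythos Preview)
Your proposal is correct and follows essentially the same approach as the paper's own proof: the paper likewise sums over multiplicity profiles $(n_1,\dots,n_\ell)$ of $1$-types, defines a local weight $\mathit{lw}(\varphi_k,(k_1,\dots,k_\ell))$ playing the role of your $V_k(M)$, and arrives at the identical formula $\sum_{n_1+\dots+n_\ell=n}\binom{n}{n_1,\dots,n_\ell}\prod_i (w_{\alpha_i})^{n_i}\prod_{k=2}^{p}\prod_{k_1+\dots+k_\ell=k}\mathit{lw}(\varphi_k,(k_1,\dots,k_\ell))^{\binom{n_1}{k_1}\cdots\binom{n_\ell}{k_\ell}}$, justified via Lemmas~\ref{firstclaim} and~\ref{secondclaim} exactly as you outline.
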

\begin{proof}
As discussed above, we prove the claim for the sentence $\varphi$ we
have fixed. Let $T$ be the set of $1$-types over the vocabulary $\eta$ of $\varphi$.
Fix an ordering of $T$ and let $\alpha_1,\dots , \alpha_{\ell}$ enumerate $T$ in that order. 
For a positive integer $k = \{0,\dots , k-1\}$, a function $f:k \rightarrow T$ is a 
\emph{type assignment} over $k$.
Two type assignments $f:k\rightarrow T$ and $g:k\rightarrow T$ are 
said to have the \emph{same multiplicity}, if for each $\alpha\in T$,
the functions $f$ and $g$ map the same number of elements in $k$ to $\alpha$.
%

%
%
%
%
%
%
For a type assignment $f:k\rightarrow T$,
let $\mathcal{M}_{f,k}$ be the set of all $\eta$-models $\mathfrak{M}$
such that the following conditions hold.
\begin{enumerate}
\item
The domain of $\mathfrak{M}$ is $k = \{0,\dots , k-1\}$, and the
size of the span of each
positive fact $Ru_1\dots u_{m}$ of $\mathfrak{M}$ is either $1$ or $k$, i.e.,
each positive fact either spans a single domain
element or all of the domain elements of $\mathfrak{M}$.
\item
For each $m\in\{0,\dots , k-1\}$, we have $\mathfrak{M}\models \alpha_{f(m)}(m)$.
\item
$\mathfrak{M}\models\varphi_k$.
\end{enumerate}
Recalling the relativised weight function $\mathrm{W}_k$
from the preliminaries, we define
the \emph{local weight} $\mathit{lw}(\varphi_k,f)$ of $\varphi_k$ 
with respect to a type assignment $f:k\rightarrow T$ so that 
$$\mathit{lw}(\varphi_k,f)\, :=\, \sum\limits_{\mathfrak{M}\,
\in\, \mathcal{M}_{f,k}}W_{k}(\mathfrak{M},w,\bar{w}).$$
Thus $\mathit{lw}(\varphi_k,f)$
could be characterized as giving the weighted number of models of $\varphi_k$
with domain $k$ and with $1$-types distributed 
according to $f$ so that 
only those positive and negative facts are counted that have span $k$.
Clearly $\mathit{lw}(\varphi,f) = \mathit{lw}(\varphi,g)$
for any $g:k\rightarrow T$ that has the same multiplicity as $f$, so only the
number of realizations of the $1$-types matters rather than the
concrete
realizations.
Therefore we define, for any 
nonnegative integers $k_1,\dots , k_{\ell}$
such that $k_1+\dots + k_{\ell} = k$, that  
$\mathit{lw}(\varphi_k,(k_1,\dots , k_{\ell}))\, :=\, \mathit{lw}(\varphi_k,h),$
%
%
%
\noindent
where $h:T\rightarrow k$ is a 
type assignment that maps, for each $i \in [\ell]$,
precisely $k_i$ elements of $k$ to $\alpha_i$. 
Note that there exist only finitely many numbers $\mathit{lw}(\varphi_k,(k_1,\dots , k_{\ell}))$
such that $k\in\{2,\dots ,p\}$ and $k_1+\dots + k_{\ell} = k$.
We can thus compile a look-up table of these finitely many local weights.
For each tuple $(n_1,\dots , n_{\ell})$ of nonnegative integers
such that $n_1+\dots + n_{\ell} = n$, fix a unique type assignment $h:n\rightarrow T$
that maps exactly $n_i$ elements of $n$ to $\alpha_i$ for each $i\in [\ell]$.
Then, using $h$, define $\mathcal{M}_{(n_1,\dots , n_{\ell})}$
to be the class of $\eta$-models with domain $n$
where exactly the elements $i$ such that $h(i) = \alpha_i$, realize $\alpha_i$.
Clearly $\mathrm{WFOMC}(\varphi,n,w,\bar{w})$ is now given by
%
%
%
\begin{equation}\label{12001111}
\sum\limits_{n_1+\dots + n_{\ell}\, =\, n}\binom{n}{n_1,\dots , n_{\ell}}
\scalebox{0.97}[1]{$\mathrm{WFOMC}(\varphi,n,w,\bar{w})
\upharpoonright {\mathcal{M}_{(n_1,\dots , n_{\ell})}}.
$}
\end{equation}
%
%
%
Therefore, to conclude the proof, we need to find a suitable
formula for $$\mathrm{WFOMC}(\varphi,n,w,\bar{w})
\upharpoonright {\mathcal{M}_{(n_1,\dots , n_{\ell})}}.$$
We shall do that next.
For each $\alpha_i\in T$, let $w_{\alpha_i}$ be the weight of 
the type $\alpha_i$.
Let $k_1,\dots , k_{\ell}$ be nonnegative integers 
that sum to $k\leq n$. A $k$-element set with $k_i$ realizations of $\alpha_i$ for
each $i\in [\ell]$ can be chosen in $\binom{n_1}{k_1}
\boldsymbol{\cdot}\, \ldots\,
\boldsymbol{\cdot}\binom{n_{\ell}}{k_{\ell}}$ ways from the 
set $n$ with $n_i$ realizations of $\alpha_i$ fixed for each $i\in [\ell]$.
By Lemmas \ref{firstclaim} and \ref{secondclaim}, we thus see that
%
%
%
%
%
%
\begin{multline}
\mathrm{WFOMC}(\varphi,n,w,\bar{w})
\upharpoonright {\mathcal{M}_{(n_1,\dots , n_{\ell})}}
=\ 
\bigl(\prod\limits_{i\, \leq\, \ell}(w_{\alpha_i})^{n_i}\bigr)\text{ }\hspace{6.8cm}\text{  }\\
\boldsymbol{\cdot}\prod\limits_{2\, \leq\, k\, \leq\, p}\
\prod\limits_{k_1+\dots + k_{\ell}\, =\, k}
\scalebox{0.97}[1]{$\mathit{lw}(\varphi_k , (k_1,\dots , k_{\ell}))$}^
{\binom{n_1}{k_1}\boldsymbol{\cdot}\,
\dots\ \boldsymbol{\cdot}\binom{n_{\ell}}{k_{\ell}}}.
\end{multline}
%
%
%
%
%
%
%

%
%

%
%
%
%
%
%
%

%
%
%
%

%
Therefore the function in Line (\ref{12001111}) 
can clearly be computed in $\mathrm{PTIME}$ in $n$
(which is given in unary).
%
%
%
\end{proof}
\subsection{Weighted model counting for $\mathrm{U}_1$}\label{thiswasthe}
%

%
As discussed in the preliminaries, the weighted model
counting problem of $\mathrm{U}_1$-sentences can be reduced to
the corresponding problem for
conjunctions of $\forall^*$-sentences of $\mathrm{U}_1$.
A natural next step would be to
follow the strategy of Section \ref{fullyuniformsection}.
However, that approach would fail due to Lemma \ref{secondclaim}
which depends crucially on the exact syntactic properties of $\mathrm{SU}_1$.
%
%
%
%
%
%
%
%
%
%
Thus we need a different approach.
We now show how to reduce the
weighted model counting problem for $\mathrm{U}_1$ 
to the corresponding problem for $\mathrm{SU}_1$.
%
%

%
%

We begin with the Lemma \ref{utosu} below.
Restricting attention to $\forall^*$-sentences in the lemma is
crucial, since $\mathrm{SU}_1$ is in general
strictly less expressive than $\mathrm{U}_1$, as shown in \cite{kieku15}.

%

%
\begin{lemma}\label{utosu}
Every $\forall^*$-sentence of\, $\mathrm{U}_1$ translates to an
equivalent Boolean combination $\forall^*$-sentences of\, $\mathrm{SU}_1$. 
\end{lemma}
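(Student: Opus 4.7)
The plan is to convert any $\forall^*$-sentence $\varphi = \forall x_1 \ldots \forall x_k \psi$ of $\mathrm{U}_1$ into an equivalent Boolean combination of $\forall^*$-$\mathrm{SU}_1$-sentences. Let $Y \subseteq X := \{x_1, \ldots, x_k\}$ be the uniform variable set shared by all higher-arity atoms of $\psi$; if $Y = X$ then $\varphi$ is already in $\mathrm{SU}_1$, so I assume $Y = \{y_1, \ldots, y_\ell\}$ is a proper subset of $X$ and write $Z := X \setminus Y = \{z_1, \ldots, z_p\}$ with $p \geq 1$. As a first step I would put $\psi$ into conjunctive normal form and distribute $\forall$ over $\wedge$, reducing to the case of a single clause. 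Splitting such a clause as $C = C^y(y) \vee C^z(y, z)$, where $C^y$ gathers the higher-arity $Y$-literals and the unary/identity/nullary literals using only $y$-variables while $C^z$ gathers the unary and identity literals that mention at least one $z$-variable, and using that $C^y$ does not depend on $z$, one obtains
\[
\forall x_1 \ldots \forall x_k\, C\ \equiv\ \forall y_1 \ldots \forall y_\ell\,\bigl(C^y(y) \vee D(y)\bigr),
\]
with $D(y) := \forall z_1 \ldots \forall z_p\, C^z(y, z)$.

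The crux is to analyze $D(y)$. Since $C^z$ uses only unary predicates and identity, $D(y)$ lies in the monadic fragment of $\mathrm{FO}$ with identity, and the classical normal form for that fragment guarantees that $D(y)$ is equivalent to a Boolean combination of (i) unary atoms on $y$-variables, (ii) identity atoms on $y$-variables, and (iii) closed counting sentences of the shape ``there exist at least $n$ elements of $1$-type $\alpha$'' for an integer $n$ and a $1$-type $\alpha$ over the unary part of the vocabulary. A short self-contained proof of this normal form proceeds by eliminating the $z$-variables one at a time via case analysis over the possible identifications of the eliminated $z$ with the $y$s and with the remaining $z$s. Each counting sentence is directly expressible in $\mathrm{SU}_1$ as $\exists x_1 \ldots \exists x_n\bigl(\mathit{diff}(x_1, \ldots, x_n) \wedge \bigwedge_{i \leq n} \alpha(x_i)\bigr)$, a formula with no higher-arity atoms that trivially meets the $\mathrm{SU}_1$ restrictions.

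Putting the pieces together, $C^y(y) \vee D(y)$ becomes a Boolean combination of (a) higher-arity $Y$-atoms on $y$, (b) unary/identity/nullary atoms on $y$, and (c) closed $\mathrm{SU}_1$-sentences. I would put this into disjunctive normal form as $\bigvee_i (F_i(y) \wedge G_i)$, where each $F_i$ is a Boolean combination of (a) and (b) and each $G_i$ a Boolean combination of (c). Since the $G_i$ do not depend on $y$, the equivalence
\[
\forall y\, \bigvee_i\bigl(F_i(y) \wedge G_i\bigr)\ \equiv\ \bigvee_{T}\Bigl(\bigwedge_{i \in T} G_i \wedge \bigwedge_{i \notin T} \neg G_i \wedge \forall y\, \bigvee_{i \in T} F_i(y)\Bigr),
\]
with $T$ ranging over subsets of the disjunct indices, yields the desired Boolean combination: each $\bigwedge_{i \in T} G_i \wedge \bigwedge_{i \notin T} \neg G_i$ is built from $\mathrm{SU}_1$-sentences, and each $\forall y_1 \ldots \forall y_\ell\, \bigvee_{i \in T} F_i(y)$ is a $\forall^*$-$\mathrm{SU}_1$-sentence because its matrix contains only $Y$-atoms with $Y = \{y_1, \ldots, y_\ell\}$ coinciding with the variables of the $\forall$-block. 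The principal obstacle is establishing the monadic normal form for $D(y)$ in a form compatible with $\mathrm{SU}_1$; once that is in place, the rest of the argument is mechanical.
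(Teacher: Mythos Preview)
Your argument is correct, but it follows a genuinely different route from the paper's. The paper dualizes to $\exists^*$-sentences, puts the matrix into DNF, expands each disjunct into a full description by $1$-types, an identity profile, and a $k$-table, eliminates positive identities by renaming, and then uses the key observation that since each element realizes a unique $1$-type, a sentence of the shape
\[
\exists z_1\ldots\exists z_n\bigl(\mathit{diff}(z_1,\ldots,z_n)\wedge\textstyle\bigwedge_i\alpha_i(z_i)\wedge\beta(z_1,\ldots,z_m)\bigr)
\]
with $m<n$ splits into the conjunction of two $\mathrm{SU}_1$-sentences, one of width $m$ carrying the table and one of width $n$ carrying only $1$-types and $\mathit{diff}$. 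Your approach instead stays on the $\forall^*$ side, uses CNF, isolates the extra variables $z$, and reduces the $\forall z$-part to the monadic fragment with identity, invoking its classical normal form to extract closed counting sentences that are then pulled outside the remaining $\forall y$-block by a truth-value case split. The paper's argument is entirely self-contained and yields a very explicit translation via the $1$-type trick; your argument is more modular and conceptually clean, but leans on the monadic quantifier-elimination result as a black box (your sketch of eliminating one $z$ at a time is fine, though it is the step that carries the real work). Either way the resulting $\exists^*$-counting sentences are negations of $\forall^*$-$\mathrm{SU}_1$-sentences, so both proofs land in the required target class.
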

\begin{proof}
We sketch the proof. See Appendix \ref{proofofSUvsUlemma} for further details.

It is easy to show that every $\exists^*$-sentence of $\mathrm{U}_1$ is
equivalent to a disjunction of $\exists^*$-sentences of the form

\smallskip

\noindent
\scalebox{0.90}[1.05]{$\exists x_1\dots \exists x_{\ell}\bigl(\alpha_1(x_1)\wedge \dots \wedge \alpha_{\ell}(x_{\ell})
\wedge \beta(x_1,\dots , x_k) \wedge \mathit{diff}(x_1,\dots , x_{\ell})\bigr),$}

\smallskip

\noindent
where $\alpha_i$ are $1$-types and $\beta$ is a $k$-table.
For this to be an $\mathrm{SU}_1$-sentence, $k$ would need to be equal to $\ell$.
However, this sentence can be seen equivalent to
the following conjunction of $\mathrm{SU}_1$-sentences:

\smallskip

\noindent
\scalebox{0.86}[1.05]{$\exists x_1\dots \exists x_k \bigl(\,
\alpha_1(x_1)\wedge\dots \wedge \alpha_k(x_k)
\wedge \beta(x_1,\dots , x_k) \wedge \mathit{diff}(x_1,\dots , x_k)\, \bigr)$}\\
$\text{ }$\ \ \ \ \ \ \ \ 
\ \ \ \ \ \ \ \ \scalebox{0.86}[1.1]{$\wedge\ \exists x_1\dots \exists x_{\ell}\bigl(\, 
\alpha_1(x_1)\wedge\dots \wedge \alpha_{\ell}(x_{\ell})
\wedge \mathit{diff}(x_1,\dots , x_{\ell})\bigr).$}
\end{proof}

%
%
%
%
%
%
%
\begin{theorem}\label{reductionuone}
The weighted model counting problem is in $\mathrm{PTIME}$ for
each sentence of\, $\mathrm{U}_1$.
\end{theorem}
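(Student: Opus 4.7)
The plan is to reduce weighted model counting for $\mathrm{U}_1$ to the corresponding problem for $\mathrm{SU}_1$, combining the preprocessing machinery of Section \ref{preliminaries} with Lemmas \ref{utosu} and \ref{maintech}, and then eliminating the arising negations by inclusion--exclusion.

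Starting from a $\mathrm{U}_1$-sentence $\varphi$, I would first use Lemmas \ref{scottlemma2} and \ref{skolemlemma} to reduce to a conjunction $\chi := \mathit{Sk}(\mathit{Sc}(\varphi))$ of $\forall^*$-sentences of $\mathrm{U}_1$, with modified weight functions $w',\bar{w}'$ that send the fresh Scott/Skolem predicates to $1$ or $-1$ as prescribed. Applying Lemma \ref{utosu} to each $\forall^*$-conjunct of $\chi$ separately, $\chi$ becomes a Boolean combination of finitely many $\forall^*$-sentences $\psi_1,\dots,\psi_m$ of $\mathrm{SU}_1$ over the same vocabulary, so $w',\bar{w}'$ still apply uniformly. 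In the data complexity setting, $\varphi$ (hence $m$) is fixed, so $m = O(1)$.

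The next step is to express $\mathrm{WFOMC}(\chi,n,w',\bar{w}')$ as a polynomial combination of weighted counts of $\mathrm{SU}_1$-sentences. I would write the Boolean combination in its full disjunctive normal form over the atoms $\psi_1,\dots,\psi_m$: $\chi$ is equivalent to a disjunction, over some collection $\mathcal{T}\subseteq 2^{\{1,\dots,m\}}$, of the mutually exclusive conjunctions $\bigwedge_{i\in T}\psi_i \wedge \bigwedge_{i\notin T}\neg\psi_i$. Since the disjuncts are pairwise contradictory, their contributions to $\mathrm{WFOMC}$ simply add. For each disjunct I would eliminate the negative literals by inclusion--exclusion:
\begin{equation*}
\mathrm{WFOMC}\Bigl(\bigwedge_{i\in T}\psi_i \wedge \bigwedge_{i\notin T}\neg\psi_i,\, n,w',\bar{w}'\Bigr)\; =\; \sum_{S\subseteq \{1,\dots,m\}\setminus T}(-1)^{|S|}\,\mathrm{WFOMC}\Bigl(\bigwedge_{i\in T\cup S}\psi_i,\, n,w',\bar{w}'\Bigr).
\end{equation*}
Each resulting conjunction $\bigwedge_{i\in T\cup S}\psi_i$ is itself an $\mathrm{SU}_1$-sentence (conjunction is freely allowed in $\mathrm{SU}_1$), so its weighted model count can be computed in $\mathrm{PTIME}$ in $n$ by Lemma \ref{maintech}.

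Summing over $T\in\mathcal{T}$ and over $S$ in the inner sum gives $\mathrm{WFOMC}(\chi,n,w',\bar{w}')$, and hence $\mathrm{WFOMC}(\varphi,n,w,\bar{w})$. Because $m$ is a constant, the total number of $\mathrm{SU}_1$ counting instances invoked is $O(2^{2m}) = O(1)$, each running in $\mathrm{PTIME}$, so the overall algorithm is polynomial in $n$. The only subtle point, and the one I would treat carefully, is checking that the inclusion--exclusion step correctly handles the $\pm 1$ weights on the fresh predicates introduced during Scott normalization and Skolemization: since those weight adjustments are uniform across all invocations of Lemma \ref{maintech} and do not interact with the Boolean manipulation of $\psi_1,\dots,\psi_m$, this reduces to verifying that the same weight functions $w',\bar{w}'$ are used throughout, which is straightforward.
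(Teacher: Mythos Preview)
Your proof is correct, but it diverges from the paper at the point where you handle the Boolean combination of $\forall^*$-sentences of $\mathrm{SU}_1$ produced by Lemma~\ref{utosu}. You eliminate the negations by inclusion--exclusion, reducing to a constant number of $\mathrm{SU}_1$ counting instances. The paper instead simply applies Lemmas~\ref{scottlemma2} and~\ref{skolemlemma} a \emph{second} time to the Boolean combination $\psi$: since $\psi$ is an $\mathrm{SU}_1$-sentence, $\mathit{Sc}(\psi)$ is again in $\mathrm{SU}_1$ (in generalized Scott normal form), and $\mathit{Sk}(\mathit{Sc}(\psi))$ is then a conjunction of $\forall^*$-sentences of $\mathrm{SU}_1$, to which Lemma~\ref{maintech} applies directly. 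This avoids the DNF and inclusion--exclusion bookkeeping entirely. Your route is more elementary in that it does not rely on the observation that Scott normalization preserves membership in $\mathrm{SU}_1$, but it is longer; the paper's route is a one-liner once you notice that the Scott/Skolem machinery can be reused to flatten any Boolean combination back into purely universal form.
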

\begin{proof}
%
%
%
%

%
As discussed in the preliminaries, it suffices to prove the theorem for a
conjunction $\chi$ of\, $\forall^*$-sentences of\, $\mathrm{U}_1$.
We apply Lemma \ref{utosu} to $\chi$, obtaining a
sentence $\psi\equiv \chi$
which is a Boolean combination of\, $\forall^*$-sentences of $\mathrm{SU}_1$.
%
%
%
%
%
%
%
%
%
%
By Lemmas \ref{scottlemma2} and \ref{skolemlemma}, we have
%
%
%
%
$\mathrm{WFOMC}(\psi,n,w,\bar{w})
= \mathrm{WFOMC}(\mathit{Sk}(\mathit{Sc}(\psi)),n,w',\bar{w}'),$
%
%
%
%
where $w'$ and $\bar{w}'$ are obtained from $w$ and $\bar{w}$ by
mapping the new symbols as specified in the lemmas.
$\mathit{Sk}(\mathit{Sc}(\psi))$ is an $\forall^*$-sentence of $\mathrm{SU}_1$.
\end{proof}
%

%
%

\section{Counting and prefix classes}

First-order prefix classes admit the following neat classification:

\begin{proposition}\label{prefixclassproposition}
Consider a prefix class $C_w$ of first-order logic defined by a 
quantifier-prefix $w\in \{\exists,\forall\}^*$.
\begin{enumerate}
\item
If $|w|\geq 3$, then $C_w$ contains a formula with a $\#\mathrm{P}_1$-complete
symmetric weighted model counting problem.
\item
If $|w| < 3$, then the symmetric weighted model counting
problem of each formula in $C_w$ is in $\mathrm{PTIME}$.
\end{enumerate}
\end{proposition}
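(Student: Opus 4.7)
The plan is to treat the two items separately. For (2), a prenex sentence whose quantifier-prefix has length at most two binds at most two distinct variables and therefore lies in $\mathrm{FO}^2$; the PTIME bound is then immediate from the main result of \cite{DBLP:conf/kr/BroeckMD14}.

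For (1), my starting point is the $\#\mathrm{P}_1$-hard $\mathrm{FO}^3$-sentence $\chi$ of \cite{suc15}. I would first apply Lemma~\ref{skolemlemma} to $\chi$ to obtain the sentence $\mathit{Sk}(\chi)$, which has an all-universal prefix and (after adjusting the weights of the fresh symbols as prescribed) the same WFOMC values. Since $\chi$ uses three variables, this yields $\#\mathrm{P}_1$-hardness for $C_{\forall^m}$ for some $m\geq 3$. Padding with dummy $\forall$-quantifiers binding fresh variables not appearing in the matrix preserves truth on non-empty domains and hence WFOMC for $n\geq 1$, extending the hardness to $C_{\forall^k}$ for every $k\geq m$.

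Next I would use a negation principle to reach the dual prefix classes. The total weight satisfies
\[
\mathrm{WFOMC}(\top,n,w,\bar w)\ =\ \prod_{R\in\eta}\bigl(w(R)+\bar w(R)\bigr)^{n^{\mathrm{ar}(R)}},
\]
where $\eta$ is the fixed vocabulary, and this expression is PTIME-computable in $n$. Combined with $\mathrm{WFOMC}(\neg\varphi) = \mathrm{WFOMC}(\top) - \mathrm{WFOMC}(\varphi)$ and the observation that in prenex form negation flips every quantifier, $\#\mathrm{P}_1$-hardness transfers from $C_{\forall^k}$ to $C_{\exists^k}$. Mixing this negation step with the dummy-padding step covers every $w$ with $|w|\geq 3$ that contains at least three $\forall$-quantifiers or at least three $\exists$-quantifiers: one simply places the three variables of the hard $\forall^3$-matrix at the three chosen $\forall$-positions of $w$ and fills the remaining positions with quantifiers binding unused fresh variables.

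The hard part will be the remaining length-three mixed prefixes, which after exploiting the negation symmetry reduce to $\forall\forall\exists$, $\forall\exists\forall$ and $\exists\forall\forall$. For these I would give a direct reduction from the hard $\forall^3$-formula by introducing a fresh auxiliary relation symbol whose positive and negative weights are tuned so that the single existential quantifier acts as a Skolem witness effectively ranging over the whole domain; the weights are chosen so that summing over all interpretations of the fresh symbol reproduces exactly the WFOMC value of the original $\forall^3$-sentence. The main technical effort lies in verifying, via the combinatorial definition of WFOMC, that this weight-tuned identity holds; this is the step where I expect the proof's delicate calculation to concentrate.
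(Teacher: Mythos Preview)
Your treatment of item~(2) is correct and matches the paper.

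For item~(1), your argument has two genuine gaps.

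First, applying Lemma~\ref{skolemlemma} to a prenex form of the $\mathrm{FO}^3$-sentence from \cite{suc15} yields a $\forall^m$-sentence where $m$ is the total number of quantifier \emph{occurrences} in that sentence; since $\mathrm{FO}^3$ permits variable reuse, prenexing can produce prefixes much longer than three. You acknowledge this (``some $m\geq 3$''), but then your padding only covers $C_{\forall^k}$ for $k\geq m$, leaving $C_{\forall^3},\dots,C_{\forall^{m-1}}$ unaddressed---and you later invoke ``the hard $\forall^3$-matrix'' as though $m=3$ had been established. The paper closes this gap by first applying a Scott-style elimination of quantified subformulae (introducing fresh at-most-binary predicates, all with weight~$1$), which yields a conjunction of prenex sentences each with \emph{exactly} three quantifiers; only then is $\mathit{Sk}$ applied and the conjuncts merged under a single $\forall\forall\forall$-prefix.

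Second, and more importantly, your handling of the mixed length-$3$ prefixes is where the actual content lies, and what you sketch is too vague to constitute a proof. The paper's solution is a single clean device that subsumes both your negation step and your unspecified ``tuned weights'' reduction: a \emph{generalized} Skolemization that, given any prenex sentence
\[
\forall x_1\cdots\forall x_k\, Q_1 x_{k+1}\cdots Q_p x_p\,\psi,
\]
replaces it by
\[
\forall x_1\cdots\forall x_k\, Q_1' x_{k+1}\cdots Q_p' x_p\,(A x_1\cdots x_k \vee \neg\psi),
\]
with a fresh $k$-ary $A$ given weights $w'(A)=1$, $\bar w'(A)=-1$. The only change from Section~\ref{sect:skolemn} is that $Q_1$ need not be $\exists$; the proof that $\mathrm{WFOMC}$ is preserved is verbatim that of Lemma~\ref{skolemlemma}. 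This dualizes an arbitrary suffix of the prefix while keeping the universal head fixed (your negation principle is the $k=0$ case). Starting from the hard $\forall\forall\forall$-sentence and chaining this operation a few times reaches all eight length-$3$ prefixes directly---for instance $\forall\forall\forall\to\forall\forall\exists\to\forall\exists\forall\to\exists\forall\exists$---with no padding, no separate negation argument, and no delicate combinatorial calculation of the sort you anticipate.
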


We note that the proof of the Proposition makes use of the results and 
techniques of \cite{DBLP:conf/kr/BroeckMD14, suc15} in various ways, and thus much of the 
credit goes there.
We sketch the proof---see Appendix \ref{prefixclasses} for more details.

Firstly, \cite{suc15} shows that there is an $\mathrm{FO}^3$-sentence $\varphi$ 
with a $\#\mathrm{P}_1$-complete model counting problem.
We turn $\varphi$ into a conjunction of prenex form sentences by eliminating quantified subformulae in a way resembling the Scott normal form procedure. We 
then apply the Skolemization operator $\mathit{Sk}$ (see Section \ref{sect:skolemn}).
Combining the obtained $\forall^*$-conjuncts,
we get a sentence $\chi := \forall x\forall y \forall z\psi$
with the same model counting problem as $\varphi$;
here $\psi$ is quantifier-free.

%

%
We then start modifying the $\forall\forall\forall$-sentence $\chi$ in order to obtain, for
each prefix class $C$ with three quantifiers, a
sentence in $C$ with the same model counting problem as $\chi$.
The required modifications can be easily done by using operations
that slightly generalize the Skolemization operation from 
Section \ref{sect:skolemn}. These operations are defined as follows.
Let $\chi' := \forall x_1\dots \forall x_k Q_1x_{k+1}\dots Q_m x_m\, \chi''$ be a 
prenex form sentence with $\chi''$ quantifier-free and with $Q_i\in\{\exists,\forall\}$.
We turn $\chi'$ into 
$\forall x_1\dots \forall x_k Q_1'x_{k+1}\dots Q_m' x_m(Ax_1\dots x_k\vee \neg\chi'')$,
where $A$ is a fresh $k$-ary predicate and each $Q_i'$ is the dual of $Q_i$.
The difference with the Skolemization operation of
Section \ref{sect:skolemn} is simply that $Q_1$ is not required to be $\exists$.
This new sentence has the same
model counting problem as $\chi'$ when the fresh symbol $A$ is 
given weights exactly as in Lemma \ref{skolemlemma}. The proof of this claim is 
similar to the proof of Lemma \ref{skolemlemma}.
The second claim of Proposition \ref{prefixclassproposition}
holds by the
result for $\mathrm{FO}^2$.

\section{Conclusions}

It can be shown that $\mathrm{WFOMC}$ for formulae of two-variable logic
with counting $\mathrm{C}^2$ can be reduced to $\mathrm{WFOMC}$ for $\mathrm{FO}^2$
with \emph{several} functionality axioms. Proving tractability in that setting remains an
interesting open problem. One difficulty here is that the interaction patterns of 
different functional relations cause effects that could intuitively be
described as `non-local' and seem to require significantly more general
combinatorial arguments than those in Section \ref{funcounting}.
The tools of \cite{kopczynski} could prove useful here.

\medskip

\medskip

\noindent
\textbf{Acknowledgments.} The authors were supported by the
  ERC-grant~{647289} `CODA.'


\bibliography{count} 

\appendix

\section{Appendix}

\subsection{Scott normal forms}\label{scottnormalforms}

Here we briefly discuss the principal properties of the reduction of
formulae to Scott normal form. The process is well-known, so we
only sketch the related details.
Let $\varphi$ be a sentence of $\mathrm{U}_1$. 
Note that $\mathrm{FO}^2$ and of course $\mathrm{SU}_1$
are syntactic fragments of $\mathrm{U}_1$. To put $\varphi$ into
generalized Scott normal form, consider a
subformula $\psi(x) = Qy_1\dots Q{y_k}\chi(x,y_1,\dots , y_k)$ of $\varphi$,
where $Q\in\{\forall, \exists\}$
and $\chi$ is quantifier-free. Now, $\psi(x)$ has one
free variable. Thus we let $P_{\psi}$ be a fresh unary 
predicate and consider the sentence 
$$\forall x( P_{\psi}x\ \leftrightarrow\ Qy_1\dots Q{y_k}\chi(x,y_1,\dots , y_k))$$
which states that $\psi(x)$ is equivalent to $P_{\psi}x$.
Letting $Q'$ denote the dual of $Q$, i.e., $Q' = \{\exists,\forall\}\setminus \{Q\}$,
this sentence is seen equivalent to 
\begin{multline}
\chi'\ :=\ \forall xQy_1\dots Q{y_k}( P_{\psi}x\ \rightarrow \chi(x,y_1,\dots , y_k))\\
\wedge \forall xQ'y_1\dots Q'{y_k}( \chi(x,y_1,\dots , y_k)\rightarrow P_{\psi}x).
\end{multline}
Therefore $\varphi$ is has the same weighted model count as the sentence 
%
%
%
%
$$\chi''\ =\ \chi'\ \wedge\ \varphi[P_{\psi}(x)/\psi(x)],$$
%
%
%
%
where $\varphi[P_{\psi}(x)/\psi(x)]$ is
obtained from $\varphi$ by replacing $\psi(x)$ with $P_{\psi}(x)$; the fresh 
relation symbol $P$ is given the weight $1$ in both positive and negative facts.
Repeating this, we eliminate quantifiers one by one, starting from the atomic
level and working upwards from there. We always introduce a new predicate
symbol ($P_{\psi}$ in the above example) and axiomatize that symbol to be
equivalent to the formula beginning with the quantifier to be 
eliminated ($\psi(x)$ in the above example).

Note that while $\psi(x)$ had a 
free variable, we may also need to eliminate quanfiers from
subformulae without free variables, such as, e.g., $\exists x Ax$. 
Then a fresh nullary predicate needs to be introduced.
Note that quantifying in $\mathrm{U}_1$ leaves at most one
free variable, so the fresh symbols are always at most unary by
the definition of the syntax of $\mathrm{U}_1$. We clearly end up
with a sentence in generalized Scott normal form.

We make the following observations
\begin{enumerate}
\item
The Scott-normal form version $\mathit{Sc}(\varphi)$ of a
sentence $\varphi$ indeed has the required property
that $\exists P_1\dots \exists P_m\, \mathit{Sc}(\varphi)$ is
equivalent to $\varphi$, where $P_1,\dots , P_{m}$ are the
fresh unary and nullary predicates.
\item
If $\varphi$ is a sentence of $\mathrm{U}_1$ (respectively, $\mathrm{SU}_1$, $\mathrm{FO}^2$),
then the sentence $\mathit{Sc}(\varphi)$ is a sentence of $\mathrm{U}_1$
(respectively, $\mathrm{SU}_1$, $\mathrm{FO}^2$). This is easy to 
see by first noting that the fresh symbols are unary or nullary, and noting then that the 
syntax of $\mathrm{U}_1$ allows free use of unary and nullary symbols.
\item
We have $\mathrm{WFOMC}(\varphi,n,w,\bar{w})
\ \ =\ \ \mathrm{WFOMC}(\mathit{Sc}(\varphi),n,w',\bar{w}')$,
where $w$ and $\bar{w}$ map the fresh symbols to $1$.
The reason for this is that the novel symbols are axiomatized to be
equivalent to the unary and nullary formulae, and thereby the novel
symbols must have a unique interpretation in each model of $\mathit{Sc}(\varphi)$.
\item
In the case of $\mathrm{FO}^2$, the novel sentences
$\forall x \forall y \chi$ that arise when axiomatizing the 
fresh predicates can be pushed together so that only a
single $\forall^*$-conjunct $\forall x \forall y \chi'$ rather
than a conjunction $\forall x \forall y \chi_1\wedge\dots \wedge \forall x \forall y \chi_n$ 
will be part of the ultimate Scott normal form formula. 
\end{enumerate}
\subsection{Proof of Lemma \ref{skolemlemma}}\label{skolemlemmaproof}
Before proving Lemma \ref{skolemlemma}, we
define that a \emph{self-inverse bijection} is an
involutive bijection, so $f(f(x)) = x $ for all $x\in \mathit{dom}(f)$.
We then prove the lemma.
\begin{proof}
We will consider the formulae
\begin{align*}
\chi_1 :=&\ \forall x_1\dots \forall x_k
\exists y_1\dots \exists y_m Q_1z_1\dots Q_n z_n\, \psi\\
\chi_2 :=&\ \forall x_1\dots \forall x_k(Ax_1\dots x_k \vee \neg
\exists y_1\dots \exists y_m Q_1z_1\dots Q_n z_n\, \psi)
%
%
%
%
\end{align*}
from our definition of Skolemization and 
show the following:
\begin{equation}\label{toomanycomplications}
\mathrm{WFOMC}(\chi_1,n,v,\bar{v})\upharpoonright{\{\mathfrak{B}\}}
=\mathrm{WFOMC}(\chi_2,n,v',\bar{v}')\upharpoonright{\mathcal{C}}
\end{equation}
where $v'$ and $\bar{v}'$ extend $v$ and $\bar{v}$ on the 
input $A$ such that $v'(A) = 1$ and $\bar{v}'(A) = -1$,
and $\{\mathfrak{B}\}$ is a singleton model class 
where $\mathfrak{B}$ is a $\mathit{voc}(\chi_1)$-model and $\mathcal{C}$
the model class $\{\, (\mathfrak{B}, A)\, |\, A\subseteq \mathit{dom}(\mathfrak{B}) = n\, \}$.
Assume $\mathfrak{B}\models\chi_1$.
Then an expanded model $(\mathfrak{B},A)$ satisfies $\chi_2$ if and only if $A$ is
interpreted to be the total $k$-ary relation over the domain $n$.
Thus Equation \ref{toomanycomplications} holds.
Assume then that $\mathfrak{B}\not\models\chi_1$.
We will show that the sum of the weights of the models in $\mathcal{C}$
that satisfy $\chi_2$ is zero.
This will conclude the proof.

Let $U$ be the set of tuples $(u_1,\dots, u_k)\in n^k$ such that
$$\mathfrak{B}\models \exists y_1\dots \exists y_m Q_1z_1
\dots Q_n z_n\, \psi\, (u_1,\dots , u_k).$$
We have $(n^k \setminus U)\not=\emptyset$ as $\mathfrak{B}\not\models\chi_1$.
Let $\mathcal{M}$ be the class of models in $\mathcal{C}$ that satisfy $\chi_2$.
As models $\mathfrak{N}\in\mathcal{M}$ must satisfy $\chi_2$,
each $\mathfrak{N}\in\mathcal{M}$ has $A^{\mathfrak{N}}\supseteq U$.
Furthermore, for each $A'\supseteq U$ such that $A'\subseteq n^k$, there clearly exists a 
model $\mathfrak{N}'\in\mathcal{M}$ so that $A^{\mathfrak{N}'} = A'$.
We shall define a self-inverse bijection $f:\mathcal{M}\rightarrow\mathcal{M}$
such that the weights of $\mathfrak{N}$ and $f(\mathfrak{N})$ cancel 
for each $\mathfrak{N}\in\mathcal{M}$, thereby concluding the proof.

Let $\overline{u}$ be the lexicographically smallest tuple in $(n^k \setminus U)\not= \emptyset$
(we have $(n^k \setminus U)\subseteq n^k$, so a lexicographic ordering is defined).
We define $f$ so that it sends each model $\mathfrak{N}\in\mathcal{M}$ to the model where $A$ is
modified simply by changing the interpretation of $A$ on $\overline{u}$: if $A$ is
true on $\overline{u}$, we make it false, and if $A$ is false on $\overline{u}$, we make it true, and
on other tuples, we keep $A$ the same. It is thus clear that the 
weights of any $\mathfrak{N}\in\mathcal{M}$ and $f(\mathfrak{N})$ cancel each other, as
the models differ only on the interpretation of $A$ on this one tuple (and $\bar{v}(A) = -1$).
%
%
%
%
%
%
%
%
%
\end{proof}

\subsection{Normal forms for $\mathrm{FO^2}$ with a
functionality axiom}\label{appendixmillion}

Here we discuss how the 
sentence $\varphi\wedge \forall x\exists^{=1}y\, \psi(x,y)$ given in 
Section \ref{syntaxsection} can be modified in order to obtain the
desired normal form sentence.
We first consider only the subformula $\psi(x,y)$, ignoring $\varphi$ for awhile. 
We apply the Scott normal form
procedure for eliminating quantified subformulae
(see Appendix \ref{scottnormalforms})
to the open formula $\psi(x,y)$.
We thereby obtain from $\psi(x,y)$ a formula $$\psi'(x,y)
\wedge \forall x\forall y\, \psi''\wedge 
\bigwedge_i\forall x \exists y\, \psi_i$$ where $\psi'$, $\psi''$ and
each $\psi_i$ are quantifier-free.
We then observe that $$\forall x \exists^{=1} y(\psi'(x,y)
\wedge \forall x\forall y \psi''\wedge 
\bigwedge_i\forall x \exists y\psi_i)$$ is
equivalent to $$
\forall x\forall y \psi''\wedge 
\bigwedge_i\forall x \exists y\psi_i\ \wedge\ \forall x \exists^{=1} y\psi'(x,y).$$
We then use the Skolemization operator $\mathit{Sk}$ to the formulae $\forall x \exists y \psi_i$
and combine the resulting $\forall^*$-sentences with
each other and with $\forall x \forall y \psi''$, thereby obtaining a
sentence $\forall x\forall y \psi'''\wedge \forall x \exists^{=1}y\psi'(x,y)$.

We then modify (the so far ignored sentence) $\varphi$. We put it in Scott normal form first and
then use Skolemization, thereby obtaining a 
conjunction $\forall x \forall y \chi_1\wedge ... \wedge \forall x\forall y \chi_k$.
We combine these conjuncts with $\forall x\forall y \psi'''$ to form a
single $\forall\forall$-conjuct $\forall x \forall y \psi''''$. The ultimate 
sentence is thus $\forall x \forall y \psi'''' \wedge \forall x \exists^{=1}y\psi'(x,y)$,
where $\psi''''$ and $\psi'(x,y)$ are quantifier-free, as desired.

\subsection{Relation symbol arities in the two-variable context}\label{higherarityappendix}

If $\forall x \forall y \varphi_1 \wedge
\forall x \exists y^{=1} \varphi_2(x,y)$ contains no binary relation, we
replace $\varphi_1$ by $Rxy \wedge \varphi_1$ and give $R$ 
the weights $w(R) = \bar{w}(R) = 1$. Now $R$ must have a unique interpretation in
every model of $\forall x \forall y (Rxy \wedge \varphi_1) \wedge
\forall x \exists y^{=1} \varphi_2(x,y)$ and thus contributes nothing to 
the ultimate weighted model count.

We then discuss the assumption that we can limit attention to 
formulae without relation symbols of arities $k>2$ when
studying the data complexity of weighted model counting 
for two-variable logic with a functionality axiom.
We first give a short justification of the assumption 
and then look at the issue in a bit more detail.
So, to put it short, the analysis of Section \ref{funcounting} will work as
such even if relation symbols of arities $k>2$ are included, the only 
difference being that the ultimate model count must be multiplied by a
(non-constant) factor $N$ that takes into account facts and negative facts of 
\emph{span sizes greater than $2$}. This factor $N$ is very easy to compute, as
our logic---using two variables---is fully invariant under changing facts with span sizes
greater than $2$ elements.

We then look at the matter in a bit more detail.
Let us first fix some sentence $$\chi := \forall x \forall y \chi' \wedge
\forall x \exists y^{=1} \chi''(x,y)$$
containing at least one relation symbol of arity $k > 2$.
Now, notice that $2$-tables are allowed to contain 
atoms such as $Rxyxxy$, while $1$-types are allowed to
contain atoms $Sxxx$ etcetera. Thus the reader can easily 
check that everything in Section \ref{funcounting} works as 
such if we allow relation symbols of arities $k>2$,
with only the following exception: the number 
$\mathrm{WFOMC}(\chi,n,w,\bar{w}) = q\in \mathbb{Q}$ obtained by
our analysis must be multiplied by $N$, which is a factor arising from the 
simple fact that 
%
%
%
%
$\mathfrak{M}\models \chi\ \ 
\Leftrightarrow 
\mathfrak{N}\models \chi$
%
%
%
%
for all $\mathfrak{M}$ and $\mathfrak{N}$ which 
\emph{differ only in facts and negative facts
with span sizes greater than $2$}. Our analysis takes into 
account only facts of span sizes up to $2$.
We consider an example to illustrate the issue.
Assume $\chi$ contains a $k$-ary symbol $R$, with $k>2$, and all
other symbols in $\chi$ are at most binary. We show how to compute the
factor $N$.
Let $n\geq k$ be a model size. There are $n^k$ tuples of length $k$ with elements from $n$.
Exactly $n$ of these tuples have span $1$ (e.g., a tuple of 
type $(u, \dots , u)$ with $u$ repeated $k$ times).
Exactly ${\binom{n}{2}}\cdot (2^k - 2)$ of the $n^k$ tuples
have span $2$ (e.g., a $k$-tuple of type $(u,v,u,v,\dots ,u,v)$ if $k$ is even). Thus there are 
$$p(n,k)\  :=\  n^k - n - 2^k\binom{n}{2} + 2\binom{n}{2}$$
tuples with span size greater than $2$ over the domain $n$. On some of these
tuples we can define $R$ positively and negatively on others. Thus, 
letting $w(R)$ be the positive and $\bar{w}(R)$ the
negative weight for $R$, we define 
$$N(n,k)\ :=\ \sum\limits_{i\, \leq\, p(n,k)}
\binom{\, p(n,k)\, }{i}\ (w(R))^i\cdot (\bar{w}(R))^{(p(n,k) - i)}.$$
While this looks nasty, we can easily evaluate it in polynomial time in
the unary input $n$. The function $N(n,k)$ provides the desired factor $N$:
we multiply the number $\mathrm{WFOMC}(\chi,n,w,\bar{w})$, which is
given by our analysis that ignores facts of span size greater than $2$, by $N(n,k)$
and thereby get the correct result. Note that we assumed $n\geq k$ simply because
models with domain size smaller than $k$ can in any case be ignored as there
are only finitely many inputs smaller than $k$ to the model counting problem, so we
can construct a look-up table for them.
It is easy to see how to expand this to cover the case where $\chi$ has
several relations of arities greater than $2$.

\section{Appendix: $\mathrm{FO}^2$ with a functionality axiom}

\subsection{Characterizing $M_{\sigma\tau}(n_{\sigma},n_{\sigma\tau},n_{\tau},
n_{\tau\sigma})$}\label{1234990appendix}

Here we give a detailed
specification of the
characterization of $$M_{\sigma\tau}(n_{\sigma},n_{\sigma\tau},n_{\tau},n_{\tau\sigma})$$ as
being the \emph{weighted number of 
ways to connect blocks} $B_{\sigma}\supseteq C_{\sigma\tau}$
and $B_{\tau}\supseteq C_{\tau\sigma}$ to each other with $2$-tables
when $|B_{\sigma}| = n_{\sigma}$, $|C_{\sigma\tau}| = n_{\sigma\tau}$,
$|B_{\tau}| = n_{\tau}$ and $|C_{\tau\sigma}| = n_{\tau\sigma}$.
We specify the weighted number $N$ of ways to
connect the blocks in the way described below. (It is worth noting here that 
we will \emph{not} compute $N$ in the way described below in the ultimate 
polynomial time algorithm.) The number $N$ is \emph{informally} the sum of all
products $W$ that can be obtained by simultaneously assigning $2$-tables to
all edges in $B_{\sigma}\times B_{\tau}$ and multiplying the
individual weights of (positive and negative) facts in
the $2$-tables such that the following conditions hold.
\begin{enumerate}
\item
In each of the simultaneous assignments, elements in the 
cells $C_{\sigma\tau}$ and $C_{\tau\sigma}$ obtain, respectively,
witnesses in $B_{\tau}$ and $B_{\sigma}$
via suitable witnessing $2$-tables. The $2$-table for the pairs in $B_{\sigma}\times B_{\tau}$
that provide witnesses for the 
elements of $C_{\sigma\tau}$ is the $2$-table of the $2$-type $\sigma$. 
Similarly, the $2$-table for the pairs in $B_\tau\times B_{\sigma}$ that provide
witnesses for the
elements of $C_{\tau\sigma}$ is the $2$-table of $\tau$.
\item
The remaining pairs in $B_{\sigma}\times B_{\tau}$ are assigned
some non-witnessing coherent $2$-table whose inverse is, likewise, not witnessing.
\end{enumerate}

To define this more formally, consider the case with the below assumptions.
\begin{enumerate}
\item
$\sigma\not=\tau$.
\item
Neither $\sigma$ nor $\tau$ is both ways witnessing.
\end{enumerate}
First, we define $N=0$ if any of the following conditions is satisfied.
\begin{enumerate}
\item
$n_{\sigma\tau}\not\leq n_{\sigma}$
or $n_{\tau\sigma}\not\leq n_{\tau}$.
\item
$n_{\sigma\tau}\not= 0$ and $\sigma(2)\not=\tau(1)$.
\item
$n_{\tau\sigma}\not= 0$ and $\tau(2)\not=\sigma(1)$.
\end{enumerate}
Otherwise, let $B_{\sigma}$
and $B_{\tau}$ be disjoint sets, $|B_{\sigma}| = n_{\sigma}$
and $|B_{\tau}| = n_{\tau}$. Let $C_{\sigma\tau}\subseteq B_{\sigma}$
and $C_{\tau\sigma}\subseteq B_{\tau}$ be sets such that $|C_{\sigma\tau}| = n_{\sigma\tau}$
and $|C_{\tau\sigma}| = n_{\tau\sigma}$. Now, assume $f:C_{\sigma\tau}\rightarrow B_{\tau}$
and $g:C_{\tau\sigma}\rightarrow B_{\sigma}$ are functions that are 
nowhere inverses of each other. (We note that such functions 
need not exist, as demonstrated, for example, by the case
where $n_{\sigma\tau}\not= 0$ and $n_\tau = 0$.) There are 
precisely $n_{\sigma\tau}$ pairs in $f$ and $n_{\tau\sigma}$ pairs in $g$,
and since $f$ and $g$ are nowhere inverses, $f$ and the inverse of $g$
occupy $n_{\sigma\tau} + n_{\tau\sigma}$ edges in $B_{\sigma}\times B_{\tau}$.
Recall from the preliminaries that, if $\beta$ is a $2$-table,
then $\langle w,\bar{w}\rangle(\beta)$ 
denotes the product of the weights of the literals in $\beta$.
We let $\beta_{\sigma}$ and $\beta_{\tau}$
denote the $2$-tables of $\sigma$ and $\tau$ and define 
$$w_{f,g} := (\langle w,\bar{w}\rangle(
\beta_{\sigma}))^{n_{\sigma\tau}}\cdot (\langle w,\bar{w}\rangle(
\beta_{\tau}))^{n_{\tau\sigma}}.$$
Now, let $T$ be the set of $2$-tables $\beta$ such
that the $2$-type $\delta := \sigma(1)\beta\tau(1)$ satisfies the following conditions.
\begin{enumerate}
\item
$\delta$ is coherent.
\item
$\delta$ is not witnessing.
\item
The inverse of $\delta$ is not witnessing.
\end{enumerate}
Consider the pairs in $B_{\sigma} \times B_{\tau}$ that do not belong to $f$ or
the inverse of $g$. Let $S\subseteq B_{\sigma} \times B_{\tau}$ be the set of these pairs. 
Let $\mathcal{F}_{f,g}$ denote
the set of all functions $F:S\rightarrow T$. For each such function $F$, define 
$$w_{F} := \prod\limits_{(u,v)\, \in\, S}\langle w,\bar{w}\rangle(F((u,v)))$$
and
$$w_{f,g,F} := w_{f,g}\cdot w_F.$$
Let $\mathcal{P}$ denote the set of
triples $(f,g,F)$ where $f:C_{\sigma\tau}\rightarrow B_{\tau}$
and $g:C_{\tau\sigma}\rightarrow B_{\tau}$ are
funtions that are nowhere inverses of each other and $F\in\mathcal{F}_{f,g}$.
Define, finally, that 
$$N\, :=\, \sum\limits_{(f,g,F)\, \in\, \mathcal{P}}w_{f,g,F}.$$
The remaining cases, including the ones where $\sigma = \tau$, are 
similar in spirit and defined analogously, so we omit them here.
\subsection{Proof of Lemma \ref{nonskolemlemma}}\label{aabbccdd}

\begin{proof}
Choose some numbers $i_1,\dots , i_k$ and $j_1,\dots , j_{\ell}$ that add to $N$.
There are
$$
\binom{N}{i_1,\dots , i_k,j_1,\dots , j_{\ell}}
$$
ways to choose precisely $i_p$ edges for the
symmetric colours $p\in [k]$ and $j_q$ edges for the directed
colours $q\in [\ell]$. There are $2^{j_1+\dots + j_{\ell}}$ ways to choose an orientation for the 
directed colours. The contribution of the 
weights is then given by the
product $$\bigl(\prod\limits_{\substack{p\, \in\, [k]}}(w_p)^{i_p}\bigr)
\cdot \bigl(\prod\limits_{\substack{q\, \in\, [\ell ] }}(x_q)^{j_q}\bigr).$$
\end{proof}

\subsection{Proof of Proposition \ref{internalcounting1}}\label{aaabbccdd}

\begin{proof}
Consider the complete graph $G$ with 
the set $n$ of vertices.
By proposition \ref{nonidempcounting1}, there exist $I(m,n)$
anti-involutive functions $f:m\rightarrow n$.
Fix a single such function $f$. 
Being anti-involutive and having an $m$-element domain, the
tuples $(u,v)\in f$ cover precisely $m$ edges of $G$.
Thus the contribution of the edges covered by $f$ to the total weight of any labelling that
assigns the weight $y$ to those edges is $y^m$.
%
%
%
%
%
%
With $f$ fixed, the remaining $\binom{n}{2} - m$ edges (not belonging to $f$)
can by Lemma \ref{nonskolemlemma} be labelled in different ways so that they contribute the factor $L_{\empty_{k,\ell}}
\bigl(\binom{n}{2}-m,\, w_1,\dots , w_k,x_1,\dots ,x_{\ell}\bigr)$ to the total weight.
%
%
%
%
%
%
%
%
%
\end{proof}

\subsection{Proof of Proposition \ref{bipartitecounting1}}\label{JLX}

\begin{proof}
%
%
%
By Lemma \ref{nonidempcounting}, there exist $K(m,M,n,N)$ ways
to define a pair of functions $f,g$ so that $f:A_m\rightarrow B$ and $g:B_n\rightarrow A$
are nowhere inverses of each other. 
Now fix a pair $f,g$ of such functions. The contribution of $f$ and $g$ to the
weight of any labelling  that contains $f$ and $g$ is $y^m z^n$.
There are $MN - m - n$ edges outside the functions $f$ and $g$. (The 
pathological cases where $MN - m - n$ is negative are harmless due to the  
definition of $L_{\empty_{k, \ell}}$.)
By Lemma \ref{nonskolemlemma}, the contribution of these edges to the
total weight is $$L_{\empty_{k, \ell}}(MN - m - n,\ w_1,\dots , w_k,x_1,\dots ,x_{\ell}).$$
\end{proof}

\subsection{Proof of Proposition \ref{internalcounting2}}\label{gghhlljj}

\begin{proof}
When $m$ is even, then $m/2 = \lfloor m/2 \rfloor$ gives the 
number of those edges over the vertex set $m$ that will be part of the
complete matching of $m$. Thus
there are then $\binom{n}{2} - m/2$ edges outside the matching in the graph
with vertex set $n$. Note that $F(m)=0$ when $m$ is odd;
we write $\lfloor m/2 \rfloor$ simply to ensure the 
inputs to $L_{\empty_{k,\ell}}$ are integers even in this pathological case.
The rest of the claim follows directly from the relevant definitions. 
\end{proof}

\subsection{The remaining
cases for defining the functions $N_{\sigma\tau}$}\label{definingfunctionsappendix}

\noindent
\textbf{Case 2.} We now assume (cf. \textbf{Case 1})
that \textbf{1.a} and \textbf{1.b} hold but \textbf{1.c} does not.
Now, if $\sigma$ and $\tau$ are \emph{inverses} of each other, then we define $N_{\sigma\tau}$ as
follows using $T_{\empty_{k+1,\ell}}$
from Equation \ref{twoconfunction22} of Proposition \ref{bipartitecounting22}:
\begin{multline}\label{needtoinventtoomanynames}
\scalebox{0.95}[1]{$N_{\sigma\tau}(\overline{n})\ :=
T_{\empty_{k+1,\ell}}(n_{\sigma\tau},n_{\sigma},n_{\tau\sigma},
n_{\tau},w_1,\dots , w_k,y,x_1,\dots ,x_{\ell}).$}
\end{multline}
If $\sigma$ and $\tau$ are \emph{not} inverses of each other, we
consider three subcases.
Firstly, if both $\sigma$ and $\tau$ are both ways witnessing, then we define
%
%
%
%
%
%
\begin{equation}\label{thehundrethequation}
N_{\sigma\tau}(\overline{n})
:=
\begin{cases}
0\ \ \text{ if }n_{\sigma\tau}\not=0\text{ or }n_{\tau\sigma}\not=0,\\
 L_{\empty_{k,\ell}}\bigl(\, 
n_{\sigma}\cdot n_{\tau},\,
w_1,\dots , w_k,x_1,\dots , x_{\ell}\bigr)\text{ otherwise}.
\end{cases}
\end{equation}
Secondly, if $\sigma$ is both ways witnessing but $\tau$ not, we define $N_{\sigma\tau}$ as follows,
letting $\overline{w}'$ denote the list $w_1,\dots , w_k,x_1,\dots , x_{\ell},y,z$ of weights:
\begin{multline}\label{needtoinventtoomanynames222222}
N_{\sigma\tau}(\overline{n})\ := 
\begin{cases}
0\ \ \text{ if }n_{\sigma\tau}\not=0,\\
P_{\empty_{k,\ell+2}}(n_{\sigma\tau},n_{\sigma},n_{\tau\sigma},
n_{\tau},\overline{w}'\, )\ \text{otherwise}.
\end{cases}
\end{multline}
%
%
%
%
%
%
%
Finally, the case where $\tau$ is both ways witnessing but $\sigma$ is not is analogous.
%

%
\medskip
\noindent
\textbf{Case 3.} We assume that \textbf{1.a} holds but \textbf{1.b} not.
If both $\sigma$ and $\tau$ are incombatible with each other, we
define $N_{\sigma\tau}$ exactly as in Equation \ref{thehundrethequation}.
If $\tau$ is combatible with $\sigma$ but $\sigma$ not with $\tau$, we 
define $N_{\sigma\tau}$ as
follows, with two subcases.
Firstly, if $\tau$ is both ways witnessing, we define $N_{\sigma\tau}$ as in
Equation \ref{thehundrethequation}. If $\tau$ is \emph{not} both ways witnessing, we
define $N_{\sigma\tau}$ according to Equation \ref{needtoinventtoomanynames222222}.
%
%
%
%
%
%
%
%
%
The case where $\sigma$ is combatible with $\tau$
but $\tau$ not with $\sigma$, is analogous.
\medskip
\medskip
\noindent
\textbf{Case 4.}
We assume that \textbf{4.a)} $\sigma=\tau$;
\textbf{4.b)} $\sigma$ is compatible with itself, meaning that the
first and second $1$-types of $\sigma$ are the same; \textbf{4.c)} $\sigma$ is 
\textcolor{black}{not} both ways witnessing.
By Equation \ref{twoconfunctionone} in Proposition \ref{internalcounting1},
the weight contributed by edges from $B_{\sigma}$ to $B_{\sigma}$ itself is
thus given by
\begin{equation}\label{199919191}
N_{\sigma\sigma}(\overline{n})\ :=\
J_{\empty_{k,\ell+1}}(n_{\sigma\sigma},n_{\sigma},w_1,\dots , w_k,x_1,\dots ,x_{\ell},y),
\end{equation}
which defines the function $N_{\sigma\sigma}$ in this particular case. When \textbf{4.a}
and \textbf{4.b} hold but \textbf{4.c} not, so $\sigma$ is
both ways witnessing, then we define,
using the function $S_{\empty_{k+1,\ell}}$ of Equation \ref{dunnohowmaniethequation} in
Proposition \ref{internalcounting2}, that
\begin{equation}\label{randomverticalspace}
N_{\sigma\sigma}(\overline{n})\ :=\
S_{\empty_{k+1,\ell}}(n_{\sigma\sigma},n_{\sigma},w_1,\dots , w_k,y,_1,\dots ,x_{\ell}).
\end{equation}
When \textbf{4.a} holds
but \textbf{4.b} not, we define $N_{\sigma\sigma}(\overline{n})$ to be
zero when $n_{\sigma\sigma}\not=0$
and otherwise as given by Equation \ref{randomverticalspace}.
%

%
By observing
that the expressions in Equation \ref{ultimatecountingalg} can 
easily be computed in $\mathrm{PTIME}$, we obtain the theorem
that the weighted model counting problem for each sentence of two-variable logic with a
functionality axiom is in $\mathrm{PTIME}$.

\subsection{Proof of Lemma \ref{utosu}}\label{proofofSUvsUlemma}

Before Proving Lemma \ref{utosu}, we make the following auxiliary definitions.

An \emph{identity literal} is an atom $x=y$ or 
negated atom $x \not= y$.
An \emph{identity profile} $\varphi$ over a set $X$ of variables is a consistent
conjunction with precisely one of the literals $x=y$, $x\not=y$ 
for each two distinct variables $x,y\in X$; consistency of $\varphi$
means that $\varphi\not\models\bot$.
Note that the formula $\mathit{diff}(x_1,\dots , x_k)$ is the identity profile over $\{x_1,\dots , x_k\}$
where all identities are negative. An identity profile $\varphi$ is \emph{consistent
with} a conjunction $\psi$ of identity literals if $\varphi\wedge\psi\not\models \bot$.

We then prove Lemma \ref{utosu}:

\begin{proof}
We will prove the equivalent claim that every $\exists^*$-sentence of $\mathrm{U}_1$ 
translates to an equivalent Boolean combination of $\exists^*$-sentences of $\mathrm{SU}_1$.
Thus we fix a $\mathrm{U}_1$-sentence $\exists x_1\dots \exists x_{\ell}\, \psi$
where $\psi$ is quantifier-free.
%
We let $\eta$ be the vocabulary of $\psi$.
As $\psi$ is a $\mathrm{U}_1$-matrix, all the
higher arity atoms of $\psi$ have
the same set $Y\subseteq \{x_1,\dots , x_{\ell}\}$ of variables.
We let $Y:= \{y_1,\dots ,y_k\}$.
%
%
%

%
%
%
We then begin modifying the
sentence $\exists x_1\dots \exists x_{\ell}\psi$.
We first put $\psi$ into 
disjunctive normal form, thereby obtaining
an equivalent sentence 
$\exists x_1\dots \exists x_{\ell}(\psi_1\vee\dots \vee \psi_m),$
where each formula $\psi_i$ is
free of disjunctions.
%
%
%
We then distribute the quantifier block $\exists x_1\dots \exists x_{\ell}$ of
over the disjunctions, 
obtaining the sentence
$\chi := (\exists x_1\dots \exists x_{\ell}\psi_1)\ \vee\dots
\vee (\exists x_1\dots \exists x_{\ell}\psi_{m}).$ 
Now, let us fix a disjunct $\exists x_1\dots \exists x_{\ell}\psi_i$ of $\chi$.
To conclude the proof, it
suffices to show that $\exists x_1\dots \exists x_{\ell}\psi_i$ is
equivalent to a Boolean combination of $\exists^*$-sentences of $\mathrm{SU}_1$.
We assume, w.l.o.g., that
$\psi_i\, :=\, \chi_{\mathit{id}}\ \wedge\ \chi_1\ \wedge\ \chi(y_1,\dots y_k),$
where $\chi_{\mathit{id}}$ is a conjunction of identities and negated 
identities, $\chi_1$ a conjunction of unary literals, and $\chi(y_1,\dots ,y_k)$ a
conjunction of $Y$-literals. We also assume, w.l.o.g., that 
the vocabulary of $\psi_i$ is $\eta$ and that $\psi_i$ is
consistent, i.e., $\psi_i\not\models\bot$. If $\psi_i$ was not consistent, we
would be done with the proof, as $\exists x_1\dots \exists x_{\ell}\psi_i$ 
would be equivalent to $\bot$. 
Let $I$ denote set of identity profiles over $\{x_1,\dots , x_{\ell}\}$
consistent with $\chi_{\mathit{id}}$. Thus
$$\psi_i\, \equiv\, \bigvee\limits_{\gamma\, \in\, I}(\gamma\, \wedge\, \chi_1\,
\wedge\, \chi(y_1,\dots y_k)).$$
%
%
%
%
%
%
Now recall that $\chi_1$ is a conjunction of unary literals whose
variables are contained in $\{x_1,\dots , x_{\ell}\}$.
Thus $\chi_1$ is equivalent to a disjunction of 
conjunctions $\alpha_1(x_1)\wedge\dots \wedge \alpha_{\ell}(x_{\ell})$
where each $\alpha_i$ is a $1$-type over $\eta$. 
Let $A$ denote the set of all such conjunctions. Thus we have 
$$\psi_i\, \equiv\, \bigvee\limits_{(\varphi,\gamma)\, \in
\, A\, \times\, I}(\gamma\, \wedge\, \varphi\
\wedge\, \chi(y_1,\dots , y_k)).$$
Now, in order to obtain a suitably modified variant of the 
sentence $$\exists x_1\dots \exists x_{\ell}\, \psi_i,$$ we
distribute the block $\exists x_1\dots \exists x_{\ell}$ of quantifiers
over the disjunctions of the right hand side of the above equation and
thereby observe that
%
%
%
%
%
%
%
%
%
\begin{multline*}
\exists x_1\dots \exists x_{\ell}\, \psi_i\\
\equiv\, \bigvee\limits_{(\varphi,\gamma)\ \in\ A\, \times\, I}
\exists x_1\dots \exists x_{\ell}
(\gamma\, \wedge\, \varphi\, \wedge\, \chi(y_1,\dots , y_k)).
\end{multline*}
We fix a single
disjunct $\delta := \exists x_1\dots \exists x_{\ell}(\gamma\, \wedge\, \varphi\,
\wedge\, \chi(y_1,\dots , y_k))$ and show how to
translate it to a Boolean combination of $\exists^*$-sentences of $\mathrm{SU}_1$,
thereby concluding the proof.
%

%
If $\gamma$ contains non-negated identities, 
we eliminate them by renaming variables in the 
quantifier-free part of $\delta$.
Thus we obtain a sentence $\delta' := 
\exists z_1\dots \exists z_{n}(\gamma'\, \wedge\, \varphi'\,
\wedge\, \chi')$ equivalent to $\delta$ such that the following conditions hold.
%
%
%
%

\medskip

\noindent
\textbf{1.)}
$\{z_1,\dots , z_n\}\subseteq \{x_1,\dots , x_{\ell}\}$
and $\gamma'$ is the formula $\mathit{diff}(z_1,\dots, z_{n})$ 
(which is simply $\top$ if $n = 1$).

\medskip

\noindent
\textbf{2.)} $\varphi'$ is a conjunction $\alpha_1(z_1)\wedge \dots \wedge \alpha_n(z_n)$
of $1$-types  containing 
\emph{at least one} $1$-type for each variable $z_1,\dots , z_{n}$; if the
conjunction has two or more types for the same
variable, then it is inconsistent, and thus $\delta'\equiv \bot$, so
we are done with the proof. Therefore we assume that $\alpha_1(z_1)\wedge \dots \wedge \alpha_n(z_n)$
has exactly one $1$-type for each variable.

\medskip

\noindent
\textbf{3.)} $\chi'$ is a conjunction of $Z$-literals for
some set $Z\subseteq \{z_1,\dots , z_{n}\}$ of
variables. We assume, w.l.o.g., that $Z = \{z_1,\dots , z_{m}\}$
for some $m\leq n$. We note the following.
\begin{enumerate}
\item[\textbf{3.a)}]
It is possible that the variable renaming process makes $\chi'$
inconsistent, as for example when $Rxyz,\neg Ryxz$ are
replaced by $Ryyz,\neg Ryyz$.
If $\chi'$ is inconsistent, we are
done with the proof. Thus we assume that $\chi'$ is consistent.
\item[\textbf{3.b)}]
There exists a disjunction $\beta_1\vee\dots \vee \beta_p$
of $|Z|$-tables such that $\beta_1\vee\dots \vee \beta_p\models\chi'$.
\end{enumerate}
%
%
%
%
%
%
Thus $\delta'\equiv \bigvee_{i\leq p}\exists z_1
\dots \exists z_{n}(\gamma'\, \wedge\, \varphi'\,
\wedge\, \beta_i)$, where each $\beta_i$ is a $|Z|$-table.
Once again distributing 
the quantifiers, we get a disjunction $\exists z_1
\dots \exists z_{n}(\gamma'\, \wedge\, \varphi'\,
\wedge\, \beta_1) \vee \dots \vee \exists z_1
\dots \exists z_{n}(\gamma'\, \wedge\, \varphi'\,
\wedge\, \beta_p)$. It suffices to fix one of 
these disjuncts $\exists z_1\dots \exists z_{n}(\gamma'\, \wedge\, \varphi'\,
\wedge\, \beta_i)$ and show how it translates into a 
Boolean combination of $\exists^*$-sentences of $\mathrm{SU}_1$.
Now, $\exists z_1\dots \exists z_{n}(\gamma'\, \wedge\, \varphi'\,
\wedge\, \beta_i)$ is the sentence

\medskip

\scalebox{0.9}[1]{$\exists z_1\dots \exists z_{n}\bigl(\mathit{diff}(z_1,\dots , z_{n})\wedge
\alpha_1(z_1)\wedge \dots \wedge \alpha_{n}(z_{n})
\wedge \beta_i(z_1,\dots , z_m)\bigr).$}

\medskip

\noindent
Since each element of a model must
satisfy exactly one $1$-type, we observe
that this sentence is equivalent to the following sentence (where
the first main conjunct has $m$ and the second one $n$ variables):
\begin{multline*}
\scalebox{0.86}[1]{$\exists z_1\dots \exists z_m \bigl(\,
\alpha_1(z_1)\wedge\dots \wedge \alpha_m(z_m)
\wedge \beta_i(z_1,\dots , z_m) \wedge \mathit{diff}(z_1,\dots , z_m)\, \bigr)$}\\
\scalebox{0.86}[1]{$\wedge\ \exists x_1\dots \exists x_{n}\bigl(\, 
\alpha_1(x_1)\wedge\dots \wedge \alpha_n(x_{n})
\wedge \mathit{diff}(z_1,\dots , z_{n})\bigr).$}
\end{multline*}
Both of these conjuncts are $\mathrm{SU}_1$-sentences.
\end{proof}
\section{Appendix: Proof of Proposition \ref{prefixclassproposition}}\label{prefixclasses}
\begin{proof}
The second claim of Proposition \ref{prefixclassproposition} is 
immediate, as \cite{DBLP:conf/kr/BroeckMD14} shows that the 
symmetric weighted model counting problem is in $\mathrm{PTIME}$ for
each formula of  two-variable logic. We thus turn to the first claim. 
The article \cite{suc15} provides a sentence $\varphi$ of 
three-variable logic $\mathrm{FO}^3$ that has a $\#\mathrm{P}_1$-complete
symmetric weighted model counting problem.
Given $\varphi$, there is a very simple way to \emph{directly} 
ensure that there exists a sentence $\forall x\forall y \forall z\psi$
(where $\psi$ quantifier-free) with the same
model counting problem as $\varphi$. The prodecure is straightforward and
interesting in its own right.
The idea is to process $\varphi$ in a way that bears a resemblance to 
the Scott normal form reduction. We describe the procedure for an
arbitrary $\mathrm{FO}^3$-sentence $\chi$.
We begin eliminating quantifiers of quantified subformulae of $\chi$, one
quantifier at a time, starting from the atomic level and working our way upwards.
Consider a subformula $\chi'(y,z) := Qx\chi_0(x,y,z)$ of $\chi$
where $Q\in\{\forall, \exists\}$
and $\chi_0$ is quantifier-free.
Now, $\chi'(y,z)$ has two free variables.
Therefore we let $P_{\chi'}$ be a fresh binary 
predicate and consider the sentence 
$$\forall y \forall z
( P_{\chi'}(y,z) \ \leftrightarrow\ Q x\chi_0(x,y,z))$$
stating that $\chi'(y,z)$ is equivalent to $P_{\chi'}(y,z)$.
Letting $Q'$ denote the dual of $Q$, i.e., $Q' \in \{\exists,\forall\}\setminus \{Q\}$,
this sentence is easily seen equivalent to
\begin{multline}
\chi^*\ :=\ \forall y\forall z Qx ( P_{\chi'}(y,z)\ \rightarrow
\chi_0(x,y,z))\\
\wedge \forall y\forall z Q'x( \chi_0(x,y,z)\rightarrow P_{\chi'}(y,z)).
\end{multline}
Therefore $\chi$ is equivalent to the sentence 
%
%
%
%
$$\chi''\ :=\ \chi^*\ \wedge\ \chi[P_{\chi'}(y,z)/\chi'(y,z)],$$
%
%
%
%
where $\chi[P_{\chi'}(y,z)/\chi'(y,z)]$ is
obtained from $\chi$ by replacing the formula $\chi'(y,z)$ with $P_{\chi'}(y,z)$.
Here $Qx\chi_0(x,y,z)$ had two free variables, but we may also need to 
eliminate quantifiers $Qx$ from formulae of type $Qx\varphi'$ with one or zero
free variables; here $\varphi'$ is quantifier-free.
The elimination is then, however, done in a similar way, the main 
difference being that the fresh predicate then has arity one or zero. 
Repeating the procedure, we eliminate quantifiers one by one, starting from the atomic
level and working upwards from there. We ultimately end up
with a conjunction
$$\overline{Q}_1\chi_1
\wedge\dots \wedge \overline{Q}_k\chi_k$$
where each $\overline{Q}$ is a block of three 
quantifiers (introducing dummy quantifiers if necessary), while
each $\chi_i$ is quantifier-free. Now, similarly to the case
with Scott normal form reductions discussed above, the novel 
predicates have been axiomatized to have a unique interpretation in any 
model that satisfies $\varphi: = \overline{Q}_1\chi_1
\wedge\dots \wedge \overline{Q}_k\chi_k$, and thus an
analogous result to Lemma \ref{scottlemma2} holds:
\begin{equation}\label{dunnowhocares}
\mathrm{WFOMC}(\chi,n,w,\bar{w})
= \mathrm{WFOMC}(\varphi,n,w',\bar{w}'),
\end{equation}
where $w'$ and $\bar{w}'$
extend $w$ and $\bar{w}$ by sending the
novel symbols to $1$.
Then we apply the Skolemization procedure (Lemma \ref{skolemlemma}) to $\varphi$,
thus obtaining a conjunction of the form $$\forall x \forall y \forall z\chi_1'
\wedge\dots \wedge \forall x \forall y \forall z \chi_k'.$$
We combine the matrices $\chi_1',\dots , \chi_k'$ under the 
same quantifier prefix, thus obtaining a
sentence $\varphi' := \forall x\forall y\forall z\, \gamma$, where $\gamma$ is
quantifier-free. We now have $$\mathrm{WFOMC}(\varphi,n,w',\bar{w}')
= \mathrm{WFOMC}(\varphi',n,w'',\bar{w}''),$$ where $w''$ and $\bar{w}''$
treat the fresh symbols as specified in Lemma \ref{skolemlemma}, $w''$
mapping them to $1$ and $\bar{w}''$ to $-1$. Thus, combining this 
with Equation \ref{dunnowhocares}, we obtain
$$\mathrm{WFOMC}(\chi,n,w,\bar{w})
= \mathrm{WFOMC}(\varphi',n,w'',\bar{w}'').$$
Thus we finally obtain the sentence $\varphi'$ with
prefix $\forall\forall\forall$ with a $\#\mathrm{P}_1$-complete
weighted model counting problem.

We then start modifying the $\forall\forall\forall$-sentence $\varphi'$ in order to obtain, for
each prefix class $C$ with three quantifiers, a
sentence in $C$ with the same weighted model counting problem as $\varphi'$.
The required modifications will be made by operations that slightly modify the
Skolemization operation from Section \ref{sect:skolemn}. We define these operations next. 
Let $\chi' := \forall x_1\dots \forall x_k Q_1x_{k+1}\dots Q_m x_m\, \chi''$ be a 
prenex normal form sentence with $\chi''$ quantifier-free and with $Q_i\in\{\exists,\forall\}$ for
each $i$. We turn $\chi'$ into $\forall x_1\dots \forall x_k Q_1'x_{k+1}\dots Q_m'
x_m (Ax_1\dots x_k\vee \neg\chi'')$,
where $A$ is a fresh $k$-ary predicate and each $Q_i'$ is the dual of $Q_i$.
The difference with the Skolemization operation of
Section \ref{sect:skolemn} is simply that $Q_1$ is not required to be $\exists$.
This new sentence has the same
model counting problem as $\chi''$ when the fresh symbol $A$ is 
given weights exactly as in Lemma \ref{skolemlemma}. The proof of this claim
almost the same as the
proof of Lemma \ref{skolemlemma}, which is given in Appendix \ref{skolemlemmaproof}.
Notice that we can have $k=0$, and then the new predicate $A$ is nullary.
With these novel Skolemization operations, we can take any 
prenex normal form sentence and modify it so that the original 
prefix $\overline{\forall}\, Q_1\dots Q_m$ changes to $\overline{\forall}\, Q_1'\dots Q_m'$
where $\overline{\forall}$ is in both cases the same (possibly empty) string of universal 
quantifiers and $Q_1'\dots Q_m'$ is obtained from $Q_1\dots Q_m$ by changing each 
quantifier to its dual. It remains to show that with these simple operations, we 
can obtain from the prefix $\forall\forall\forall$ all the
remaining seven prefixes with three quantifiers.
We obtain $\exists\exists\exists$ from $\forall\forall\forall$ by letting all 
the three quantifiers in $\forall\forall\forall$ be the suffix that
gets dualized. We get $\forall
\exists\exists$ from $\forall
\forall\forall$ by dualizing the last two universal quantifiers. Similarly, we get $\forall
\forall\exists$ from $\forall\forall\forall$ by dualizing the last universal quantifier.
Now, having $\forall\forall\exists$, we
obtain $\forall\exists\forall$ by dualizing the last two quantifiers
and $\exists\exists\forall$ by dualizing all the three quantifiers.
From $\forall\exists\forall$, we then get $\exists\forall\exists$ by 
dualizing all quantifiers. Finally, from $\forall\exists\exists$ obtained
earlier on, we get the last remaining prefix $\exists\forall\forall$ by dualizing everything.
Thus we have shown that all prefix classes with at least three quantifiers
have a sentence with a $\#\mathrm{P}_1$-complete symmetric
weighted model counting problem. Together with the
first claim of Proposition \ref{prefixclassproposition},
this gives the desired complete classification of
first-order prefix classes.
\end{proof}

\end{document}